
\documentclass[11pt,a4paper, leqno]{article}                     % onecolumn (standard format)

% definitions used by included articles, reproduced here for 
% educational benefit, and to minimize alterations needed to be made
% in developing this sample file.

\usepackage[T1]{fontenc}
\usepackage{paralist}
\setlength{\pltopsep}{2pt}
\usepackage{graphicx}
%\usepackage{mathptmx}      % use Times fonts if available on your
% TeX system
\usepackage[english]{babel}
\usepackage{amsmath, amsopn, amstext, amscd, amsfonts, amssymb,
  amsbsy, mathrsfs}
\usepackage{enumerate}
\usepackage{url}
\usepackage[numbers]{natbib}
\usepackage{varioref}
\usepackage{dsfont}
\usepackage{empheq}
\usepackage{tikz}
\usetikzlibrary{calc,patterns}
\usepackage{float}
%\usepackage{url}
%\usepackage[colorlinks=true, linkcolor=red, citecolor=blue]{hyperref}
%\usepackage[colorlinks=false, breaklinks=true, urlcolor=green,
%citecolor=white, linkcolor=red, citebordercolor={1 1 1}]{hyperref}
%\usepackage[colorlinks=false, breaklinks=true, urlcolor=green, citecolor=white, linkcolor=red]{hyperref}
%\usepackage{breakurl}
%\usepackage{hyperref}
%\usepackage{breakurl}

%%%%%%%%%%%%%%%%COMP COMMANDS FROM PICARD & ALL
\newcommand{\bee}{\begin{enumerate}}
\newcommand{\eee}{\end{enumerate}}

\newcommand{\I}{{\mathcal{I}}}

\newcommand{\D}{\mathcal{D}}
\newcommand{\beqn}{\begin{equation}}
\newcommand{\eeqn}{\end{equation}}

\newcommand{\PE}[1]{{\lfloor#1\rfloor}}

%%%%%%%%%%%%%%%%%%%%%%%%%%%%%%%%%%%%%%%%%%%%%%%%%%%%

%\def\I{{\field I}}

%\def\cH{{\field H}}

%%%%%%%%%%%%%%%%ATTENTION betak-betaeta

%%%%%%%%%%%%%%%%%%%%%
\hyphenation{need-let need-lets}
\let\epsilon=\varepsilon

\let\phi=\varphi

%%%%%%%%%%%%%%%%%%%%%%%%%%%%%%%%%%%%%%%%%%%%%%%
%%%%%%%%%%%%%%%%%%%%%%%%%%%%%%%%%%%%%%%%%%%%%%%

%DEFINE NEWCOMMAND
\newcommand{ \ms }[2]{\langle #1,#2 \rangle}
\newcommand{ \abs }[1]{ \vert #1 \vert }

\newcommand{ \Zr }{\mathbb{Z}}

\newcommand{\st }{\xi}

\newcommand{ \Dom }{\mathcal{D}}

\newcommand{\Cs}{\mathcal{C}}

\newcommand{ \Lp }{\mathbb{L}}

\newcommand{ \R }{\mathbb{R}}
\newcommand{ \Rg }{\mathcal{R}}
\newcommand{ \Ke }{\mathcal{N}}
\newcommand{ \N }{\mathbb{N}}

\newcommand{ \Det }{\text{Det}}
\newcommand{ \Span }{\text{Span}}
\newcommand{ \Se }{\mathcal{S}}

\newcommand{ \Pb}{\mathbb{P}}
\newcommand{ \Q }{\mathbb{Q}}

\newcommand{ \qr }{\mathscr{Q}}
\newcommand{ \Exp }{ \mathbb{E} }

\newcommand{ \normL }[2]{ \Vert #1  \Vert_{#2} }

%\url{http://www.proba.jussieu.fr/~monnier/publi.html}}

%%%%%%%%%%%%%%%%%%%%%%
%Specific to this doc
%\usepackage[top=4cm, bottom=4cm, left=3cm , right=3cm]{geometry}
%\usepackage{amsthm}
% \newtheoremstyle{rem}{}{}{\itshape}{}{\large\scshape\bfseries}{.}{5pt}{}
% \theoremstyle{rem}\newtheorem*{assumption}{Assumption}%[section]
% \theoremstyle{rem}\newtheorem{theorem}{Theorem}[section]%[section] to have a numbering consistent with section
% \theoremstyle{rem}\newtheorem{lemma}{Lemma}[section]
% \theoremstyle{rem}\newtheorem{proposition}{Proposition}[section]
% \theoremstyle{rem}\newtheorem{definition}{Definition}[section]
% \theoremstyle{rem}\newtheorem{remark}{Remark}[section]
% \theoremstyle{rem}\newtheorem{example}{Example}[section]
% \theoremstyle{rem}\newtheorem{corollary}{Corollary}[section]
% \newtheoremstyle{ack}{}{}{}{}{\large\scshape\bfseries}{.}{5pt}{}
% \theoremstyle{ack}\newtheorem*{acknowledgements}{Acknowledgements}
%%%%%%%%%%%%%%%%%%

\newcommand\mycite[2][]{\cite[#1]{#2}}
\labelformat{section}{Section~#1}   %{\S~#1}
\labelformat{subsection}{Section~#1}   %{\S~#1}
\labelformat{equation}{eq.~(#1)}
%\labelformat{theorem}{Theorem~#1}
%\labelformat{lemma}{Lemma~#1}
%\labelformat{proposition}{Proposition~#1}
%\labelformat{corollary}{Corollary~#1}
%\labelformat{definition}{Definition~#1}
\labelformat{align}{eq.~(#1)}
%\labelformat{remark}{Remark~#1}
\labelformat{figure}{Figure~#1}
\labelformat{table}{Table~#1}
%\labelformat{example}{Example~#1}

\setlength{\parindent}{0pt}

\usepackage{hyperref}
\hypersetup{colorlinks=true, linkcolor=blue, urlcolor=blue,
  citecolor=blue, breaklinks=true}
\usepackage{url}
\usepackage{breakurl}

%Additional packages to see labels
%\usepackage[doublespacing]{setspace}
%\usepackage{setspace}
%\usepackage{showkeys}
\numberwithin{equation}{section}
\usepackage[perpage, symbol]{footmisc}

%\usepackage{nomencl}
%\makenomenclature

%Specific to this doc
\usepackage[top=3cm, bottom=3cm, left=2cm , right=2cm]{geometry}
\usepackage{amsthm}
\newtheoremstyle{rem}{}{}{\itshape}{}{\large\scshape\bfseries}{.}{5pt}{}
\theoremstyle{rem}%[section]
\theoremstyle{rem}\newtheorem{theorem}{Theorem}[section]%[section] to have a numbering consistent with section
\theoremstyle{rem}\newtheorem{lemma}{Lemma}[section]
\theoremstyle{rem}\newtheorem{proposition}{Proposition}[section]
\theoremstyle{rem}
\theoremstyle{rem}
\theoremstyle{rem}
\theoremstyle{rem}
\newtheoremstyle{ack}{}{}{}{}{\large\scshape\bfseries}{.}{5pt}{}
\theoremstyle{ack}
%%%%%%%%%%%%%%%%%%

\title{Technical Report\\
Risk-neutral density recovery via spectral analysis}

% The thanks line in the title should be filled in if there is
% any support acknowledgement for the overall work to be included
% This \thanks is also used for the received by date info, but
% authors are not expected to provide this.

\author{Jean-Baptiste Monnier\thanks{Office 5B01, LPMA, Universit\'{e} Paris 7, 175
rue du Chevaleret, 75013, Paris, France. Tel: +33157279164. Email: \texttt{j.r.monnier@gmail.com}}}

\begin{document}

\maketitle

\begin{abstract}
In this paper, we propose a new method for estimating the conditional
risk-neutral density (RND) directly from a cross-section of put option
bid-ask quotes. More precisely, we propose to view the RND recovery problem as an
inverse problem. We first show that it is possible to define
\textsl{restricted put and call operators} that admit a singular value decomposition (SVD),
which we compute explicitly. We subsequently show that this new framework
allows us to devise a simple and fast quadratic programming method to recover the smoothest
RND whose corresponding put prices lie inside the bid-ask
quotes. This method is termed the spectral recovery method (SRM). Interestingly, the SVD of the restricted put and call
operators sheds some new light on the RND recovery problem. The SRM improves on other RND recovery methods
in the sense that 
\begin{inparaenum}[1)]
\item it is fast and simple to implement since it requires solution of
  a single quadratic program, while being fully nonparametric;
\item it takes the bid ask quotes as sole input and does not require
  any sort of calibration, smoothing or preprocessing of the data;
\item it is robust to the paucity of price quotes;
\item it returns the smoothest density giving rise to prices
  that lie inside the bid ask quotes. The estimated RND is therefore
  as well-behaved as can be;
\item it returns a closed form estimate of the RND on the interval
  $[0,B]$ of the positive real line, where $B$ is a positive constant
  that can be chosen arbitrarily. We thus obtain both the middle part of the RND together with
  its full left tail and part of its right tail.
\end{inparaenum}
We confront this method to both real and simulated data and observe
that it fares well in practice. The SRM is thus found to be a
promising alternative to other RND recovery methods.
\end{abstract}

\textsc{key words:} 
Risk-neutral density; Nonparametric estimation;
Singular value decomposition; Spectral analysis; Quadratic programming.\\

\textsc{AMS subject classifications:}
91G70, 91G80, 45Q05, 62G05\\

% \pagestyle{myheadings}
% \thispagestyle{plain}
% \markboth{JEAN-BAPTISTE MONNIER}{RISK-NEUTRAL DENSITY RECOVERY BY
%   SPECTRAL ANALYSIS}

\section{Introduction}

\subsection{The setting}
Over the last four decades, the no-arbitrage assumption has proved to
be a fruitful starting point that paved the way for the elaboration of
a rich theoretical framework for derivatives pricing known today as
\textsl{arbitrage pricing theory}. Among its numerous achievements, the
arbitrage pricing theory has set forth two fundamental theorems. The
\textsl{First Fundamental Theorem of Asset Pricing} (see
\mycite[p.72]{Musiela2008}) proves that a
market is arbitrage-free if and only if there exists a measure $\Q$
equivalent to the historical (or statistical) measure $\Pb$, which
turns the underlying price process into a martingale. $\Q$ is therefore
referred to as a \textsl{martingale measure}. The \textsl{Second
  Fundamental Theorem of Asset Pricing} (see
\mycite[p.73]{Musiela2008}) proves in turn that this martingale
measure is unique if and only if the market is \textsl{complete} (see
\mycite[p.300]{Musiela2008} for terminology). Let us
denote by $S_{\tau}$ the positive valued price of the underlying at a
deterministic future date $\tau$ and by $\pi(S_\tau)$ the payoff of a
contingent claim maturing at time $\tau$. Let us
moreover denote by $q$ the marginal density of $S_{\tau}$ under $\Q$ with respect to the
Lebesgue measure on the positive real line, assuming that it exists. As initially proved in
\mycite{Cox1976}, the arbitrage price of
this derivative security writes as its discounted expected payoff under $\Q$, that is,
\begin{align*}
e^{-r\tau}\Exp_{\Q} \pi(S_\tau) = e^{-r\tau}\int_{x \geq 0} \pi(x)
\Q(S_{\tau}\in dx) = e^{-r\tau}\int_{x \geq 0}
\pi(x) q(x) dx,
\end{align*}
where $r$ stands for the continuously compounded risk-free rate. It is
a widely acknowledged fact that financial markets are
incomplete, if only due to the presence of jumps in the
underlying price process. In such a setting, and as described above,
there exist possibly very many $q$s, and therefore, very many
corresponding systems of arbitrage-free prices. Let us denote by $\qr$
the corresponding set of valid densities $q$. The elements $q$ of
$\qr$ are most often referred
to as risk-neutral densities (RNDs) and we will stick to this
terminology in the sequel.\\
 
RNDs are of crucial interest for Central Banks and, in fact, most institutions
and people concerned with financial markets
since they represent the market sentiment about a given
underlying price process at a future point in time (see
\mycite{Bahra1997}). They are also of crucial interest to the financial
derivatives industry since the knowledge of $q$ allows to price
new derivative securities in an arbitrage-free way with respect to
traded ones. For these reasons, the literature related to
risk-neutral density estimation is very extensive, the bulk of it
dating back to the late 90's and early 2k's. It is not our purpose
here to present an exhaustive review of this literature. Excellent and
up-to-date reviews can in fact be found in \mycite{Figlewski2008,
  Jackwerth2004}. Older but still relevant ones can be found in \mycite{Cont1997, Bahra1997}.\\

Among derivative securities, call and put options play a very
particular role since they are actively traded in the market and thus believed to be
efficiently priced. Let us recall that a call of strike $\st$ and maturity
$\tau$ gives its holder the right to buy the underlying security at
maturity time $\tau$ at price $\st$. It is an insurance against a rise
in the price of the underlying. Its payoff writes $\pi(S_{\tau},\st) =
\theta(S_{\tau},\st) = (S_{\tau} - \st)^{+}$, where we have written $(x)^+ =
\max(x,0)$ for $x\in \R$. Conversely, a put option gives
the right to sell the underlying security. It is an insurance against
a fall in the underlying price and its payoff writes $\theta^*(S_{\tau}, \st)
= (\st - S_{\tau})^+$. Here and in what follows, we denote the strike
price by $\xi$ and not by $k$, which will stand for a
running index in $\N$.\\ 

According to the celebrated Breeden-Litzenberger
formula, the second derivative of put and
call prices with respect to their strike price both equal the discounted RND
$e^{-r\tau}q$ (see \mycite{Breeden1978}). Therefore, if a continuum of put or call prices were
available in the market, we would have direct access to the RND by the
latter formula. However, this is not the case and only a few strike
prices around the forward price are quoted and actively
traded at each maturity date. Depending on the market, we overall reckon
from $5$ to $50$ quotes at a given maturity date $\tau$. To complicate the
matter even more, quotes do not appear as a single price. Dealers
quote in fact a bid price, at which they offer to buy the security,
and an ask price, at which they offer to sell the security. The
difference between both prices is referred to as the bid-ask
spread. For an interesting insight into the nature of
option quotes and sources of error in them, the reader is referred to, say,
\mycite[p.786]{Hentschel2003}.

\subsection{The problem and brief literature review}
As detailed above, if traded puts and calls at a given maturity $\tau$ are arbitrage free, they
must write as their expected discounted
payoff with respect to a single RND $q$ drawn from the set $\qr$. Given the paucity of quoted option prices at a given
maturity $\tau$ and the presence of a
bid-ask spread, it is clear that many RNDs could in fact be hidden
behind quoted option prices. Therefore, the RND quest is not that much about
estimating the true RND that is used by the market for pricing
purpose, since the nature of the quotes does not allow to identify it
uniquely. It is rather more about recovering a valid RND, meaning an
actual density function, to be chosen according to a criterion
typically related to its smoothness or 
information content. Historically, three main routes have been used to
recover a RND from quoted option prices: parametric methods,
nonparametric methods and models of the underlying price
process. Each of them have their pros and cons. Parametric methods are
well adapted to small data sets and always recover a
density. However, they constrain the RND to belong to a given
parametric family. On the other hand, models of the underlying price process have been
the first great success of arbitrage pricing theory with the
celebrated geometric Brownian motion (see \mycite{Black1973,
  Merton1973}). However, the limitation of the log-normal
distribution is now widely acknowledged and no satisfying stochastic
process has yet been proposed that both reproduce accurately the dynamics of
the underlying price process and be analytically
tractable. Nonparametric methods circumvent both of these problems in the
sense that they do not require any stringent assumption on the process
generating the data (they are model-free) and can recover all
possible densities. As a main drawback, these methods are often data intensive.\\

Let us briefly come back on some contributions to
the nonparametric literature which are relevant to the present
paper. We can classify nonparametric methods as follows.
\begin{compactitem}
\item \textsl{The expansion methods}. It includes the Edgeworth (see
  \mycite{Jarrow1982}) and cumulant expansions (see
  \mycite{Potters1998}), which allow to estimate a finite number of
  RND cumulants. It also includes orthonormal basis methods such as Hermite polynomials (see
  \mycite{Abken1996}), which rely on well known Hilbert space techniques and
  give access to the middle part of the RND.  
\item \textsl{The kernel regression methods}. As a recent example, \mycite{Ait-Sahalia2003} have
  introduced a shape constrained local polynomial estimator of the
  RND. Notice that it performs estimation on the average quoted prices (that
  is, the average of the bid-ask quotes) and requires therefore to
  pre-process them in order to make them arbitrage-free. Moreover, the
  returned RND depends on the kernel chosen and it is not clear how it
  relates to the other valid RNDs in term of information content or smoothness.
\item \textsl{The maximum entropy method}. It is introduced in \mycite{Buchen1996, Stutzer1996},
  where the RND $q$ is obtained via the maximization of an entropy
  criterion. According to \mycite[p.19]{Cont1997}, this method often
  gives bumpy (multimodals) estimates
  since it imposes no smoothness restriction on the estimated
  density. In addition, it is claimed in \mycite[p.1620]{Jackwerth1996},
  that this method presents convergence issues.
\item \textsl{Other methods}, which do not belong to any of the three
  categories above. Among them, we can refer to the positive
  convolution approximation (PCA) of \mycite{Bondarenko2003}. In practice, it fits a
  finite (but large) convex linear combination of normal densities to the
  average quoted put prices and approximates the RND by the weights of
  the linear combination. It thus presents similarities with
  \mycite{Jackwerth1996}, since it ultimately fits a discrete set of
  probabilities to the average quoted prices. We can also refer
  to the smoothed implied volatility smile method (SML) as in
  \mycite{Figlewski2008}. This method uses the Black-Scholes
  formula as a non-linear transform. It consists in fitting a polynomial
  through the implied volatilities obtained from average quoted
  prices, and using the continuum of option prices obtained in that
  way to get the RND via the Breeden-Litzenberger
  formula. \mycite{Figlewski2008} refines this method by taking the
  bid-ask quotes into account at the implied volatility fit stage. The
  SML method gives access to the middle part of the
  RND. \mycite{Figlewski2008} proposes in addition a method for
  appending generalized extreme value
  (GEV) tail distributions to it. The SML method is cumbersome and can
  seem a bit odd since it requires going from price space to implied
  volatility space, back and forth. It is claimed that it is
  outperformed in term of accuracy and stability by simpler parametric methods in \mycite{Bu2007}.
\end{compactitem}

\subsection{Our results}

In this paper, we propose to view the RND recovery problem as an
inverse problem. We first show that it is possible to define
\textsl{restricted put and call operators} that admit a singular value decomposition (SVD),
which we compute explicitly. We subsequently show that this new framework
allows to devise a simple and fast quadratic programming method to recover the smoothest
RND that is consistent with market bid-ask quotes.\\

To be more precise, let us denote by $\I$ the segment $[0,B]$ of the
positive real line. We define the restricted put and call operators,
denoted by $\gamma^*$ and $\gamma$, from
$\Lp_2\I$ into itself (see \ref{eq:rescallop} and \ref{eq:resputop}
below) and show that they are conjugates of one another. We prove that
the resulting self-adjoint operator $\gamma^*\gamma$ is compact. As a
consequence of the spectral theorem (see
\mycite{Halmos1963}), $\gamma^*$
admits a singular value decomposition with positive decreasing
singular values. We prove that the corresponding
singular bases are complete in $\Lp_2\I$ (see
Theorem \ref{theorem:SVDcompact}, item \ref{item:SVDcomplete})) and compute them
explicitly together with their singular values (see
\ref{figure:eigenplots}). 
\begin{figure}[!h]
\includegraphics[width=\textwidth, height =0.45\textheight]{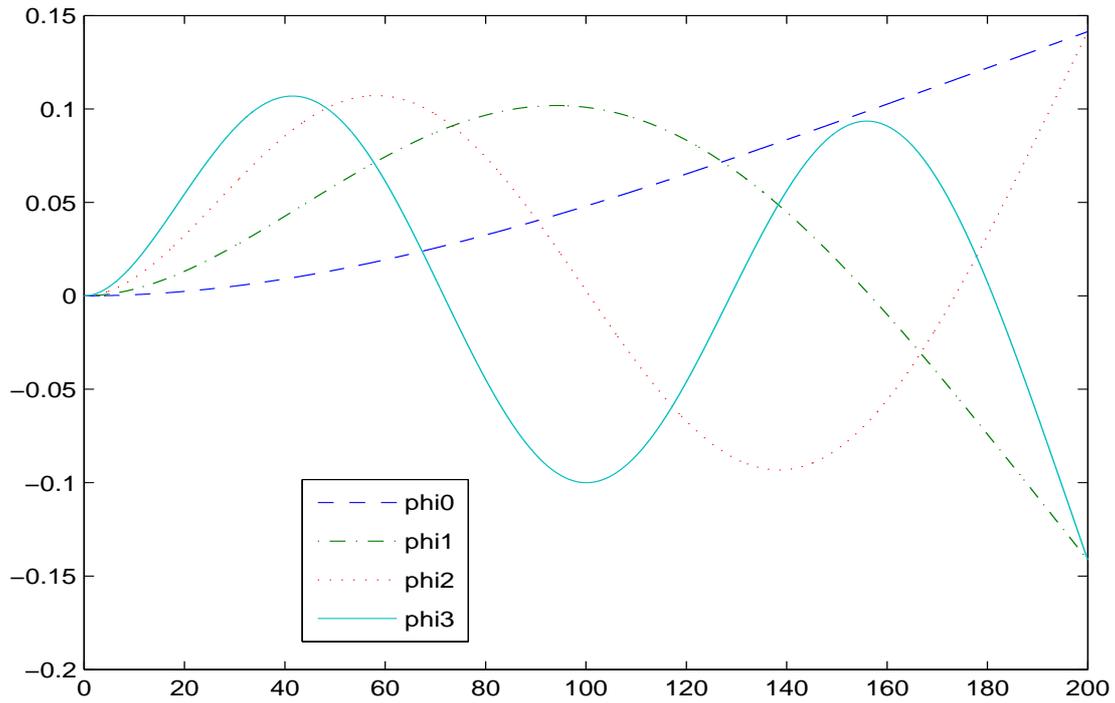}
\includegraphics[width=\textwidth, height =0.45\textheight]{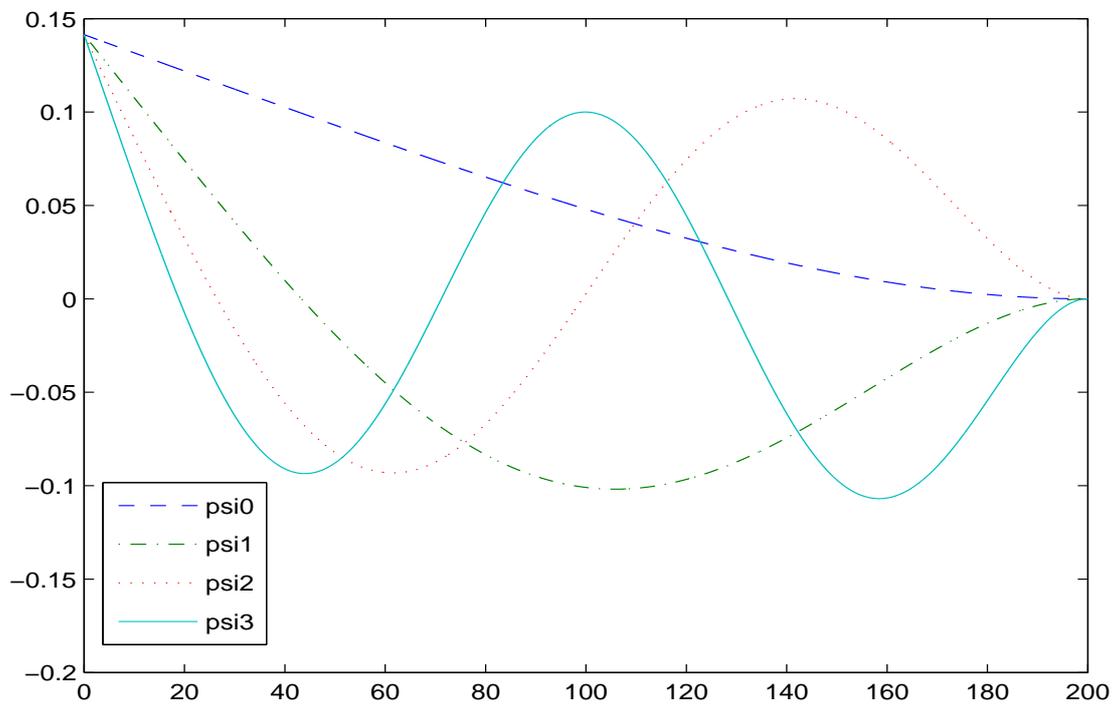}
\caption{Here, we plot the first four elements of both singular
  bases. At the top we plot $\phi_k$, $k=0,\ldots,3$. At the bottom,
  we plot $\psi_k$, $k=0,\ldots,3$.}\label{figure:eigenplots}
\end{figure} 
To fix notations, we
will write $(\phi_k)_{k\geq 0}$ and $(\psi_k)_{k\geq 0}$ the two
orthonormal families of $\Lp_2\I$ such that $\gamma^*\gamma\phi_k = \lambda_k^2
\phi_k$, $\gamma\gamma^*\psi_k = \lambda_k^2 \psi_k$, where $(\lambda_k)_{k\geq 0}$ is a positive
decreasing sequence of singular values. Precisely, we obtain explicitly,
\begin{align*}
\lambda_k = \left( \frac B{\rho_k} \right)^2,
\end{align*}
where 
\begin{align*}
\rho_k &= \frac{\pi}2 + k\pi + (-1)^{k} \beta_{k}, &k\in \N,
\end{align*}
and, for all $k \in \N$, $\beta_k$ is the smallest positive solution of the following
fixed point equation in $u$,
\begin{align*}
\exp( \pi/2 + k\pi + (-1)^k u) = \frac{1 + \cos(u)}{\sin(u)}.
\end{align*}
Interestingly, the positive sequence $(\beta_k)$ decreases
exponentially fast toward zero as detailed in Lemma
\ref{lemma:thetak}. Therefore, the sequence of singular values $(\lambda_k)_{k\geq 0}$ tends asymptotically
toward zero at a rate of order $k^{-2}$. The RND recovery problem is
therefore said to be mildly ill-posed with a degree of ill-posedness
equal to $2$ (see \mycite[p.40]{Engl1996}). Furthermore, for all $\xi \in \I$, we obtain,
\begin{align*}
\phi_k(\xi) &= \Bigl( a_{k,1}e^{\rho_k \st/B}+ a_{k,2}e^{-\rho_k \st/B}\Bigr) +
\Bigl( a_{k,3}\cos(\rho_k t/B) + a_{k,4}\sin(\rho_k \st/B)\Bigr),\\
\psi_k(\xi) &= \Bigl(a_{k,1}e^{\rho_k \st/B}+ a_{k,2}e^{-\rho_k
    \st/B}\Bigr) - \Bigl(a_{k,3}\cos(\rho_k t/B) + a_{k,4}\sin(\rho_k \st/B)\Bigr).
\end{align*}
where the coefficients $a_{k,i},i=1,\ldots,4$ are such that,
\begin{align*}
a_{k,1} &= \frac{1}{\sqrt{B}} \frac{(-1)^{k}}{e^{\rho_k} + (-1)^{k}},\displaybreak[2]\\ 
a_{k,2} &= (-1)^{k}e^{\rho_k}a_{k,1} = \frac{1}{\sqrt{B}}  \frac{1}{1 +(-1)^{k}
  e^{-\rho_k}}, \displaybreak[2]\\
a_{k,3} &= -\frac{1}{\sqrt{B}}, \displaybreak[2]\\
a_{k,4} &= \frac{1}{\sqrt{B}} \frac{1-(-1)^{k}e^{-\rho_k}}{1+(-1)^{k}e^{-\rho_k}}.
\end{align*}

Based on this new framework, we propose a spectral approach to RND
recovery. It is fully nonparametric and can recover the
restriction of any density to the interval $\I$. To that end, we
notice that the singular bases functions $\phi_k$ and $\psi_k$ are in
fact oscillations $h_{k,2}$ at frequency $\rho_k/B$ carried by the exponential
trend $h_{k,1}$ (see \ref{eq:hkdef} and \ref{eq:defrhok} for
notations). Conveniently, smooth densities
are therefore essentially captured by low singular spaces. The idea of
recovering the smoothest density among the valid ones was initially
suggested in \mycite{Jackwerth1996}. Subsequently, \mycite{Cont1997}
correctly pointed out that the smoothness criterion can be debated as
it is difficult to give it an economic or even information theoretic
meaning. Our spectral
approach sheds some new light on this issue and makes it clear that
the smoothness criterion is justified by the fact
that the restricted call and put operators behave as low-pass frequency
filters. It is therefore
illusory to look for high frequency information about the
RND in a set of quoted options prices, since this information has been drastically
attenuated by the operator. The smoothness criterion
arises therefore as a by-product of the spectral nature of the
restricted put and call operators and might well not be an intrinsic
property of the true RND. Interestingly, smooth densities are also
easier to recover by nonparametric means.\\

In what follows, we exploit the rich framework offered by
the SVD of the restricted put and call operators to recover the
smoothest RND that is compatible with market quotes. As detailed in
\ref{eq:puttocoeffs} below, the discounted restricted put
operator coincides with the put price function (as a function of the
strike) on $\I$. We therefore propose to recover the smoothest RND
such that its image by the discounted restricted put operator $e^{-r\tau}\gamma^*$ lies
in-between the bid-ask quotes (see \ref{eq:puttocoeffs}). Conveniently, the singular bases present
the property of being image of one another by second derivation
modulo a multiplication by the corresponding singular value of
$\gamma^*$ (see Theorem \ref{theorem:SVDbasisprop}). This
allows us to characterize the smoothness of the estimated RND directly in
term of a quadratic form of the coefficients of the estimated put
price function, which depends on the singular values of the restricted put
operator (see Proposition \ref{proposition:smoothness}). This crucial feature allows
to recover the smoothest RND as the solution of a simple quadratic program,
which takes the bid ask quotes as sole input. Our estimation method
improves on existing ones in several ways, which we sum up here.
\begin{compactitem}
\item It is fast and simple to implement since it only requires
  solution of a single quadratic program, while being fully nonparametric.
\item It is robust to the paucity of price quotes since the smaller
  the number of quotes, the less constrained the quadratic program and
  thus the easier to solve.
\item It takes the bid ask quotes as sole input and does not require
  any sort of smoothing or preprocessing of the data.
\item It returns the smoothest density giving rise to price quotes
  that lie inside the bid ask quotes. The estimated RND is therefore
  as well-behaved as can be.
\item It returns a closed form estimate of the RND on
  $\I$. We thus obtain both the middle part of the RND together with
  its left tail and part of its right tail. Interestingly, the left tail contains crucial
  information about market sentiments relative to a potential
  forthcoming market crash.
\end{compactitem}

It is noteworthy that the singular vectors $\phi_0$ and $\psi_0$
corresponding to the largest singular value $\lambda_0$ of $\gamma$
and $\gamma^*$ look themselves very much like cross sections of put
and call prices, respectively (see \ref{figure:eigenplots}). In that sense, they will be able to
capture the bulk of the shape of a cross section of option prices,
while the subsequent singular vectors will add corrections to this general
behavior. This is a crucial feature of this SVD that leads us
to think that the singular bases of the restricted pricing operators are
appropriate tools to recover the RND $q$. Interestingly, the
performance of our quadratic programming algorithm on real data
is indeed quite convincing (see \ref{section:simuls} for details).\\

Readers interested in appending a full right tail to this estimated RND are
referred to \mycite{Figlewski2008}, who proposes a simple method for
smooth pasting of parametric GEV tail distributions to an estimated
RND.\\

% In a no-arbitrage
% context, the potential behaviors of the tails of the RND are in fact
% limited. This is a consequence of Lee moment's formula (see
% \mycite{Lee2004}) which constrains the wings of the implied volatility
% smile to grow no faster than the square root of the strike. This
% result has been successfully used in \mycite{Benaim2009} to establish
% a link between the asymptotic regime of the right tail of the
% RND and the one of the call price.\\

% Here is the paper layout. \ref{section:def} to
% \ref{section:explicitcomp} are dedicated to the study of the
% restricted put and call operators, while \ref{section:SRM} details the SRM. Finally, we run a simulation study on both
% real and simulated data in \ref{section:simuls}. 

Here is the paper layout. We introduce the restricted call and put
operators, $\gamma$ and $\gamma^*$, and operators derived therefrom in
\ref{section:def}. We detail the properties of operators
$\gamma^*\gamma$ and $\gamma\gamma^*$ on the one hand, and $\gamma$ and
$\gamma^*$ on the other hand, in \ref{section:prelim} and
\ref{section:restrictedprop}, respectively. Other results relative to
these four operators are reported in
\ref{section:otherres}. \ref{section:explicitcomp} gives explicit
expressions for the $(\lambda_k)$, $(\phi_k)$ and
$(\psi_k)$. The spectral recovery method (SRM) is detailed in
\ref{section:SRM}. Finally, we run a simulation study in
\ref{section:simuls}. An
Appendix regroups some additional useful results.
 
\section{Definitions and setting}\label{section:def}

Let us define the restricted call operator on the interval $\I = [0,B]$ as the operator
$\gamma$ from $\Lp_2\I$ into $\Lp_2\I$ such that,
\begin{align}
(\gamma f)(\st) &= \int_\I \theta(\st,x)f(x)dx, &\st \in \I, f \in \Lp_2\I, \label{eq:rescallop}\\
\theta(\st,x) &= (x-\st)^+.\nonumber
\end{align}
It is a trivial fact that $\gamma f$ belongs indeed to $\Lp_2\I$. 
Let's denote by $\ms{.}{.}$ the usual scalar product on $\Lp_2\I$ and
by $\normL{.}{\Lp_2\I}$ the associated norm. Now, it is enough to
notice that for all $\st,x \in \I$, $\abs{\theta(\st,x)}\leq B$ and apply Cauchy-Schwartz inequality to obtain,
\begin{align*}
\normL{\gamma f}{\Lp_2\I}^2 \leq \int_\I d\st \left(  \int_\I dx \abs{\theta(\st,x)}
\abs{f(x)} \right)^2 \leq B^4 \normL{f}{\Lp_2\I}^2 < \infty.
\end{align*} 
The adjoint operator $\gamma^*$ of $\gamma$ is such that, for all $f,g \in \Lp_2\I$,
\begin{align*}
\ms{\gamma^* f}{g} &= \ms{f}{\gamma g} \\
&= \int_\I du  f(u) \int_\I dx \theta(u,x)g(x)\\
&= \int_\I dx g(x)\int_\I du \theta(u,x)f(u).
\end{align*}
Hence
\begin{align}
\gamma^* f(\st) &= \int_\I \theta^*(\st,x)f(x)dx, &\st\in \I, f \in \Lp_2\I, \label{eq:resputop}\\
\theta^*(\st,x) &= \theta(x,\st). \nonumber
\end{align}
So that $\gamma^*$ is nothing but the restricted put operator on the interval
$\I$. In particular, we can write
\begin{align}
\gamma^*\gamma f(\st) &= \int_\I \vartheta_1(\st,x)f(x)dx, &\st \in \I, f\in \Lp_2\I,
\label{eq:putcall}\\
\gamma\gamma^* f(\st) &= \int_\I \vartheta_2(\st,x)f(x)dx, &\st \in \I, f\in \Lp_2\I,\label{eq:callput}
\end{align}
where
\begin{align*}
\vartheta_1(\st,x) &= \int_\I du \theta^*(\st,u)\theta(u,x)\\
&= \int_\I du (\st-u)^+(x-u)^+ = \int_0^{\st\wedge x}du(\st-u)(x-u)\\
&= \st x (\st\wedge x) - (\st+x)(\st\wedge x)^2/2 + (\st \wedge x)^3/3,
\end{align*}
and
\begin{align*}
\vartheta_2(\st,x) &= \int_\I du \theta(\st,u)\theta^*(u,x)\\
&= \int_\I du (u-\st)^+(u-x)^+ = \int_{\st\vee x}^{B}du(u-\st)(u-x)\\
&= \st x(B - \st\vee x) -(\st+x)(B - \st\vee x)^2/2 + (B - \st\vee x)^3/3.
\end{align*}
Let us now turn to the detailed inspection of these operators.

\section{Results relative to $\gamma^*\gamma$ and $\gamma\gamma^*$}\label{section:prelim}
Let us denote by $\Rg(\kappa)$ the range of an operator $\kappa$ of
$\Lp_2\I$ and by $\Ke(\kappa)$ its null space (see
\mycite[p.23]{Debnath1990}).
Obviously both $\gamma^*\gamma$ and $\gamma\gamma^*$ are
self-adjoint. This translates into the fact that their kernels are
symmetric (meaning $\vartheta_i(\st,x) = \vartheta_i(x,\st)$). In addition, both $\vartheta_1$ and
$\vartheta_2$ are continuous on the bounded square $\I \times
\I$. Therefore, the associated operators are compact (see
\mycite[Ex.~4.8.4, p.172]{Debnath1990}). As such, they verify the
spectral theorem (see \mycite[Th.~4.10.1,~4.10.2, p.187-189]{Debnath1990}). 
\begin{theorem}\label{theorem:SVDcompact}
Given the operators $\gamma^*\gamma$ and $\gamma\gamma^*$ defined in
\ref{eq:putcall} and \ref{eq:callput} above, we have the following results.
\begin{compactenum}[1)]
\item The operators $\gamma^*\gamma$ and $\gamma\gamma^*$ are compact and
self-adjoint. As such, they admit countable families of
orthonormal eigenvectors $(\phi_k)$ and $(\psi_k)$ associated to the
same positive decreasing sequence of eigenvalues
$\lambda^2_k$, which are complete in $\Rg(\gamma^*\gamma)$
and $\Rg(\gamma \gamma^*)$, respectively. \label{item:spectraltheorem}
\item Besides, we have
\begin{align*}
\Rg(\gamma^*\gamma) &\subset \Lp_2\I \cap \Cs^4\I,\\
\Rg(\gamma^*\gamma) &\subset \Lp_2\I \cap \Cs^4\I,
\end{align*}
where $\Cs^4\I$ stands for the set of four times differentiable
functions on $\I$. \label{item:ranged4}
\item Furthermore, the orthonormal families $(\phi_k)$ and $(\psi_k)$
  are complete in $\Lp_2\I$. In other words, they are both orthonormal
  bases of $\Lp_2\I$. In fact, we can write 
\begin{align*}
\Lp_2\I &= \overline{\Rg(\gamma^*\gamma)} = \Span\{\phi_k, k \in \N\},\\
        &=\overline{\Rg(\gamma\gamma^*)} = \Span\{\psi_k, k \in \N \},
\end{align*}
where $\overline{\Rg(\gamma^*\gamma)}$ stands for the closure of
$\Rg(\gamma^*\gamma)$ in $\Lp_2\I$ (see \mycite[p.16]{Debnath1990})
and $\Span\{\phi_k, k\in \N\}$ for the set of (potentially infinite)
linear combinations of elements $\phi_k$. \label{item:SVDcomplete}
\item Therefore, $\gamma^*\gamma$ and $\gamma\gamma^*$ are both
  invertible and admit the fourth order differential operator
  $\partial_{\st}^4$ as an inverse (see \mycite[p.155]{Debnath1990} for
  terminology). More precisely, we have got
\begin{align*}
\partial_{\st}^4 \gamma^*\gamma f &= f, &\forall f \in \Lp_2\I,\\
\gamma^*\gamma \partial_{\st}^4f &= f, &\forall f \in \Rg(\gamma^*\gamma),
\end{align*} 
and idem for $\gamma\gamma^*$. \label{item:SVDinverse4}
\item Finally, we have the following spectral decompositions,
\begin{align*}
f &= \sum_{k\geq 0} \ms{f}{\phi_k}\phi_k, &f &\in \Lp_2 \I,\\
\gamma^* \gamma f &= \sum_{ k\geq 0} \lambda_k^2
\ms{f}{\phi_k} \phi_k, & f&\in \Lp_2\I,
\end{align*}
and
\begin{align*}
f &= \sum_{k\geq 0} \ms{f}{\psi_k}\psi_k, &f &\in \Lp_2 \I,\\
\gamma\gamma^* f &= \sum_{k\geq 0} \lambda_k^2
\ms{f}{\psi_k} \psi_k, & f&\in \Lp_2\I.
\end{align*}
\label{item:decomposition-compact}
\end{compactenum}
\end{theorem}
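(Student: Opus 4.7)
My overall plan is to prove items in dependency order: invoke the spectral theorem on $\gamma^*\gamma$ and $\gamma\gamma^*$ (whose compactness and self-adjointness have just been established in the discussion preceding the statement) to obtain item 1), derive items 2) and 4) from a single explicit differentiation of the kernel $\vartheta_1$, use injectivity of $\gamma$ and $\gamma^*$ to upgrade completeness to $\Lp_2\I$ for item 3), and read off item 5) as the Parseval expansion. For item 1), positivity $\ms{\gamma^*\gamma f}{f} = \normL{\gamma f}{\Lp_2\I}^2 \geq 0$ gives non-negative eigenvalues, and they are strictly positive because $\gamma$ is injective (shown below). The fact that a single sequence $(\lambda_k^2)$ serves both operators is the standard SVD correspondence: if $\gamma^*\gamma \phi_k = \lambda_k^2 \phi_k$ with $\lambda_k>0$, then $\psi_k := \gamma\phi_k/\lambda_k$ is a unit eigenvector of $\gamma\gamma^*$ with the same eigenvalue, and conversely.

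For items 2) and 4), the key is to rewrite $\gamma f(\st) = \int_\st^B (x-\st)f(x)\,dx$ and $\gamma^* g(\st) = \int_0^\st (\st-u)g(u)\,du$. Differentiating once removes the weight factor at the vanishing endpoint, and a second differentiation yields $\partial_\st^2 \gamma f = f$ and $\partial_\st^2 \gamma^* g = g$ in the $\Lp_2\I$ sense. Composing, $\partial_\st^4\,\gamma^*\gamma f = \partial_\st^2 \gamma f = f$ for every $f\in\Lp_2\I$, which is the first identity of item 4); the second follows because any $h\in\Rg(\gamma^*\gamma)$ writes $h=\gamma^*\gamma g$, so $\gamma^*\gamma\partial_\st^4 h = \gamma^*\gamma g = h$. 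The same factorisation exhibits $\gamma^*\gamma f$ as two successive antiderivatives of $\gamma f$, itself two successive antiderivatives of $f$, giving the regularity claim of item 2). For item 3), injectivity of $\gamma$ is immediate: if $\gamma f = 0$, differentiating $\int_\st^B (x-\st)f(x)\,dx = 0$ twice in $\st$ yields $f=0$ a.e.; hence $\Ke(\gamma^*\gamma) = \Ke(\gamma) = \{0\}$ and by self-adjointness $\overline{\Rg(\gamma^*\gamma)} = \Ke(\gamma^*\gamma)^\perp = \Lp_2\I$, which upgrades the completeness of $(\phi_k)$ from $\overline{\Rg(\gamma^*\gamma)}$ to $\Lp_2\I$. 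The argument for $(\psi_k)$ is symmetric, and item 5) is then the Parseval expansion combined with term-by-term action of $\gamma^*\gamma$ on its eigenvectors.

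The main technical subtlety lies in pinning down the sense of differentiation in items 2) and 4) when $f$ is only $\Lp_2$: strictly speaking, the identity $\partial_\st^2\gamma f = f$ holds in $\Lp_2\I$ (or almost-everywhere) rather than classically, so $\Rg(\gamma^*\gamma)$ consists of functions with three classical derivatives plus an $\Lp_2\I$ fourth derivative rather than genuinely $\Cs^4$-smooth functions. I would make this precise either by interpreting $\Cs^4\I$ as the Sobolev space $H^4\I$, or by stating the differential identities of item 4) as $\Lp_2\I$-identities rather than pointwise ones. Once this regularity point is settled, the remainder of the argument is a direct application of the standard theory of compact self-adjoint operators, and no further obstacle is anticipated.
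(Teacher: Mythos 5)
Your proof is correct and follows essentially the same route as the paper: invoke the spectral theorem for item 1), differentiate the Volterra-type kernels twice (and hence four times for the compositions) for items 2) and 4), show triviality of the null spaces to upgrade completeness for item 3), and read item 5) off Parseval; the paper packages these steps as Lemma~\ref{lemma:partial4} and Proposition~\ref{proposition:complete}, but the underlying computations are the ones you give. Your observation that $\Rg(\gamma^*\gamma)\subset\Cs^4\I$ should really be read in the Sobolev/almost-everywhere sense (since $\partial_{\st}^4\gamma^*\gamma f = f$ is only an $\Lp_2$ identity when $f$ is merely square-integrable) is a genuine precision that the paper glosses over.
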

\begin{proof}
As detailed above, \ref{item:spectraltheorem}) follows directly from
the spectral theorem. \ref{item:ranged4}) follows directly from the
kernel representations in \ref{eq:putcall} and
\ref{eq:callput}. It can also be seen from the fact that, for any $f
\in \Lp_2 \I$, both $\gamma
f$ and $\gamma^*f$ are twice differentiable, which follows by simple
inspection of \ref{eq:rescallop} and
\ref{eq:resputop}. \ref{item:SVDcomplete}) follows directly from
Proposition \ref{proposition:complete} below. \ref{item:SVDinverse4}) is a
direct consequence of Lemma \ref{lemma:partial4} below. Finally,
\ref{item:decomposition-compact}) follows directly from
\ref{item:spectraltheorem}) and \ref{item:SVDcomplete}).   
\end{proof}

\section{Results relative to $\gamma$ and $\gamma^*$}\label{section:restrictedprop}
The following theorem details the properties of the restricted put and
call operators. It builds upon Theorem \ref{theorem:SVDcompact} above.
\begin{theorem}\label{theorem:SVD}
Given operators $\gamma$ and $\gamma^*$ defined in \ref{eq:rescallop} and
\ref{eq:resputop} above, we have the following results.
\begin{compactenum}[1)]
\item Consider the sequence of positive decreasing singular values $\lambda_k$ and
  singular vectors $(\phi_k)$ and $(\psi_k)$ defined in Theorem
  \ref{theorem:SVDcompact} above. The restricted put and call
  operators $\gamma^*$ and $\gamma$ are such that, for all $k\geq 0$,
\begin{align*}
\gamma \phi_k &= \lambda_k \psi_k, &\gamma^* \psi_k &= \lambda_k \phi_k.
\end{align*}\label{item:spectraltheorem2}
\item Besides, we have
\begin{align*}
\Rg(\gamma^*) &\subset \Lp_2\I \cap \Cs^2\I,\\
\Rg(\gamma) &\subset \Lp_2\I \cap \Cs^2\I,
\end{align*}
where $\Cs^2\I$ stands for the set of two times differentiable
functions on $\I$.\label{item:ranged2}
\item In addition, we have $\Lp_2\I = \overline{\Rg(\gamma^*)} =
  \overline{\Rg(\gamma)}$. So that both $\gamma$ and $\gamma^*$ are
  invertible and admit the second order partial differential operator
  $\partial_{\st}^2$ as an inverse. In particular, we obtain 
\begin{align}
\partial_{\st}^2 \gamma f(\st) &= \partial_{\st}^2\gamma^*f(\st) = f(\st),
&\forall f\in\Lp_2\I.\label{eq:d2}
\end{align}
So that the knowledge of $\gamma f$ or/and $\gamma^* f$ allows to
recover $f$ directly as their second derivative. This is nothing but the so-called Breeden-Litzenberger formula restricted to the
interval $\I$.\label{item:SVDinverse2}
\item We have furthermore the following spectral decompositions,
\begin{align*}
f &= \sum_{k\geq 0} \ms{f}{\phi_k}\phi_k, &f &\in \Lp_2 \I,\\
\gamma f &= \sum_{k\geq 0} \lambda_k
\ms{f}{\phi_k} \psi_k, & f&\in \Lp_2\I,
\end{align*}
and
\begin{align*}
f &= \sum_{k\geq 0} \ms{f}{\psi_k}\psi_k, &f &\in \Lp_2 \I,\\
\gamma^* f &= \sum_{k\geq 0} \lambda_k
\ms{f}{\psi_k} \phi_k, & f&\in \Lp_2\I.
\end{align*}
\label{item:decomposition}
\item Finally, we have a put-call parity on the interval that can be written as follows
\begin{align*}
(\gamma - \gamma^*) f(\st)  &= \bar m_1(f) - \st \bar m_0(f),
\end{align*}
where we have defined $\bar m_k(f) := \int_\I x^k f(x)dx$. \label{item:PCparityI}
\end{compactenum}
\end{theorem}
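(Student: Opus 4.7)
The plan is to derive everything from Theorem~\ref{theorem:SVDcompact} combined with direct kernel calculations on $\gamma$ and $\gamma^*$.

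For item~\ref{item:spectraltheorem2}), I would use the standard two-sided SVD construction. Starting from $\gamma^*\gamma\phi_k = \lambda_k^2\phi_k$ with $\lambda_k>0$, I set $\tilde\psi_k := \gamma\phi_k/\lambda_k$ and check by direct computation that $\gamma\gamma^*\tilde\psi_k = \lambda_k^2\tilde\psi_k$ and $\langle\tilde\psi_k,\tilde\psi_\ell\rangle = \langle\phi_k,\gamma^*\gamma\phi_\ell\rangle/(\lambda_k\lambda_\ell) = \delta_{k\ell}$. Modulo a choice of orthonormal basis inside each eigenspace of $\gamma\gamma^*$, $\tilde\psi_k$ can be identified with the $\psi_k$ of Theorem~\ref{theorem:SVDcompact}; applying $\gamma^*$ to $\gamma\phi_k = \lambda_k\psi_k$ and using $\gamma^*\gamma\phi_k=\lambda_k^2\phi_k$ then gives $\gamma^*\psi_k = \lambda_k\phi_k$.

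For items~\ref{item:ranged2}) and~\ref{item:SVDinverse2}), I would split the positive parts to rewrite the kernels as $\gamma f(\st) = \int_\st^B (x-\st)f(x)dx$ and $\gamma^* f(\st) = \int_0^\st (\st-x)f(x)dx$, which are manifestly differentiable in $\st$ for any $f\in\Lp_2\I$. Differentiating twice via Leibniz gives $\partial_\st^2 \gamma f(\st) = \partial_\st^2\gamma^* f(\st) = f(\st)$, which simultaneously yields the regularity claim of item~\ref{item:ranged2}) (understood in the twice-differentiable a.e.\ sense) and the Breeden--Litzenberger identity~\ref{eq:d2}. For the density statement of item~\ref{item:SVDinverse2}), item~\ref{item:spectraltheorem2}) shows $\phi_k\in\Rg(\gamma^*)$ and $\psi_k\in\Rg(\gamma)$, so the completeness of $(\phi_k)$ and $(\psi_k)$ in $\Lp_2\I$ (Theorem~\ref{theorem:SVDcompact}, item~\ref{item:SVDcomplete}) forces $\overline{\Rg(\gamma)} = \overline{\Rg(\gamma^*)} = \Lp_2\I$, and invertibility with inverse $\partial_\st^2$ follows directly from Breeden--Litzenberger.

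For item~\ref{item:decomposition}), I would expand any $f \in \Lp_2\I$ on the orthonormal basis $(\phi_k)$ (resp.\ $(\psi_k)$), apply $\gamma$ (resp.\ $\gamma^*$), and interchange the bounded operator with the infinite sum using item~\ref{item:spectraltheorem2}). Item~\ref{item:PCparityI}) is an immediate pointwise calculation: for all $\st,x\in\I$ one has $(x-\st)^+-(\st-x)^+ = x-\st$, hence
\begin{align*}
(\gamma-\gamma^*)f(\st) = \int_\I (x-\st) f(x)\,dx = \bar m_1(f) - \st\,\bar m_0(f).
\end{align*}

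The only mildly delicate point is the multiplicity issue in item~\ref{item:spectraltheorem2}): if some $\lambda_k^2$ is a repeated eigenvalue, the construction chooses a specific basis of the corresponding $\gamma\gamma^*$-eigenspace, which must be made consistent with the $\psi_k$ produced by Theorem~\ref{theorem:SVDcompact}. This is an indexing convention rather than a substantive obstacle, and everything else is either a direct Leibniz differentiation or a routine orthonormal-basis expansion.
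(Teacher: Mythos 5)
Your proof follows essentially the same route as the paper: item~1) via the standard two-sided SVD construction (the paper simply delegates this to \mycite[p.37]{Engl1996}), items~2) and~3) via the Leibniz differentiation of the explicit kernels $\int_\st^B(x-\st)f(x)\,dx$ and $\int_0^\st(\st-x)f(x)\,dx$ (the paper packages this as Lemma~\ref{lemma:partial4}), item~4) by operator--series interchange, and item~5) by the pointwise identity $(x-\st)^+-(\st-x)^+ = x-\st$. Two small points where you are actually a bit sharper than the paper: your density argument for $\overline{\Rg(\gamma^*)}=\Lp_2\I$ proceeds cleanly from $\phi_k=\gamma^*\psi_k/\lambda_k\in\Rg(\gamma^*)$ plus completeness of $(\phi_k)$, avoiding the paper's slightly loose assertion $\Rg(\gamma^*)=\Rg(\gamma^*\gamma)$ (only the closures coincide in general); and your explicit flagging of the eigenvalue-multiplicity/indexing issue in item~1) is a real gap that the paper glosses over by citation, though in this instance it is harmless because Theorem~\ref{theorem:SVDbasisprop} later shows each $\lambda_k^2$ has a one-dimensional eigenspace. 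One thing worth making explicit for item~3), which the paper does and you leave implicit, is the companion identity $f=\gamma^*\partial_\st^2 f$ for $f\in\Rg(\gamma^*)$ (and likewise for $\gamma$), since ``invertible with inverse $\partial_\st^2$'' in the paper's sense means a two-sided relation on the appropriate domains, not just the left inverse obtained from Leibniz.
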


\begin{proof}
The proof of \ref{item:spectraltheorem2}) follows directly from
\mycite[p.37]{Engl1996}. \ref{item:ranged2}) follows by simple
inspection of \ref{eq:resputop} and
\ref{eq:rescallop}. The first part of \ref{item:SVDinverse2}) follows from the facts that
$\Rg(\gamma) = \Rg(\gamma\gamma^*)$ and $\Rg(\gamma^*) =
\Rg(\gamma^*\gamma)$ (see \ref{item:spectraltheorem2}) above) and Theorem
\ref{theorem:SVDcompact}, item \ref{item:SVDcomplete}). The second part of
\ref{item:SVDinverse2}) follows partly from Lemma \ref{lemma:partial4} below
(see Appendix) and partly from the obvious fact that $f = \gamma^* \partial_{\st}^2 f$
for all $f \in \Rg(\gamma^*)$ (idem for
$\gamma$). \ref{item:decomposition}) follows directly from
\ref{item:spectraltheorem2}) and \ref{item:SVDinverse2}). Finally,
\ref{item:PCparityI}) follows immediately from the following
obvious computations,
\begin{align*}
(\gamma - \gamma^*) f(\st) &= \gamma f(\st) - \gamma^* f(\st)\\
&= \int_\I [\theta(\st,x) - \theta^*(\st,x)] f(x)dx \\
&= \int_\I (x-\st)f(x)dx\\
&= \bar m_1(f) - \st \bar m_0(f).
\end{align*}
\end{proof}

We regroup other results relative to the above operators in the
following section. 

\section{Other results relative to $\gamma^*\gamma$,
  $\gamma\gamma^*$, $\gamma^*$ and $\gamma$}\label{section:otherres}
We prove here that both orthonormal families $(\phi_k)$
and $(\psi_k)$ are complete in $\Lp_2\I$. Other interesting results are
to be found in the Appendix. Some of them are purely technical, while some
others are of more general interest.

\begin{proposition}\label{proposition:complete}
We have got,
\begin{align*}
\Lp_2\I &=\overline{\Rg(\gamma^*\gamma)} = \Span\{\phi_k, k \geq 0 \},\\
&= \overline{\Rg(\gamma \gamma^*)}= \Span\{\psi_k, k \geq 0\},
\end{align*}
where $\overline{\Rg(\gamma^*\gamma)}$ stands for the closure of
$\Rg(\gamma^*\gamma)$ in $\Lp_2\I$ (see \mycite[p.16]{Debnath1990})
and $\Span\{\phi_k, k\in \N\}$ for the set of (potentially infinite)
linear combinations of elements $\phi_k$.
\end{proposition}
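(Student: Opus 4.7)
The plan is to reduce the completeness statement to an injectivity statement for the restricted call and put operators, which can then be handled by an elementary double-differentiation argument in the spirit of the Breeden--Litzenberger identity.

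First, by Theorem \ref{theorem:SVDcompact}, item \ref{item:spectraltheorem}, the orthonormal family $(\phi_k)$ is already complete in $\overline{\Rg(\gamma^*\gamma)}$, and similarly $(\psi_k)$ is complete in $\overline{\Rg(\gamma\gamma^*)}$. It therefore suffices to show that
\begin{align*}
\overline{\Rg(\gamma^*\gamma)} = \Lp_2\I \qquad \text{and} \qquad \overline{\Rg(\gamma\gamma^*)} = \Lp_2\I.
\end{align*}
Since $\gamma^*\gamma$ and $\gamma\gamma^*$ are bounded and self-adjoint on $\Lp_2\I$, the standard orthogonal decomposition $\overline{\Rg(T)} = \Ke(T)^{\perp}$ for bounded self-adjoint $T$ reduces the problem to showing $\Ke(\gamma^*\gamma) = \{0\}$ and $\Ke(\gamma\gamma^*) = \{0\}$.

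Next, I would invoke the identity $\normL{\gamma f}{\Lp_2\I}^2 = \ms{\gamma^*\gamma f}{f}$, which shows that $\gamma^*\gamma f = 0$ forces $\gamma f = 0$; the converse is immediate. Hence $\Ke(\gamma^*\gamma) = \Ke(\gamma)$, and symmetrically $\Ke(\gamma\gamma^*) = \Ke(\gamma^*)$. The whole proposition therefore reduces to establishing the two injectivity statements
\begin{align*}
\Ke(\gamma) = \{0\}, \qquad \Ke(\gamma^*) = \{0\}.
\end{align*}

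For the injectivity of $\gamma$, suppose $f \in \Lp_2\I$ satisfies $\gamma f \equiv 0$. Writing
\begin{align*}
0 = (\gamma f)(\st) = \int_{\st}^{B} (x-\st) f(x)\, dx, \qquad \st \in \I,
\end{align*}
the right-hand side defines an absolutely continuous function of $\st$ whose derivative is $-\int_{\st}^{B} f(x)\, dx$, itself absolutely continuous with derivative $f(\st)$ almost everywhere on $\I$. Differentiating twice in $\st$ thus yields $f = 0$ in $\Lp_2\I$. The same argument, applied to $(\gamma^* f)(\st) = \int_0^{\st} (\st - x) f(x)\, dx$, establishes $\Ke(\gamma^*) = \{0\}$. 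This chain of reductions then gives the claimed equalities and completeness of both singular bases in $\Lp_2\I$.

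The only step that requires genuine care is the double-differentiation argument, where one must justify differentiation under the integral sign for $f \in \Lp_2\I$; on the bounded interval $\I$ this poses no real difficulty because $\Lp_2\I \hookrightarrow \Lp_1\I$, so the intermediate function $\st \mapsto -\int_{\st}^{B} f(x)\, dx$ is genuinely absolutely continuous. Everything else is formal manipulation using Theorem \ref{theorem:SVDcompact} and the self-adjointness of $\gamma^*\gamma$ and $\gamma\gamma^*$.
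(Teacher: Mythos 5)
Your proof is correct and follows essentially the same strategy as the paper: reduce to showing that $\Ke(\gamma^*\gamma)$ and $\Ke(\gamma\gamma^*)$ are trivial via the orthogonal decomposition of $\Lp_2\I$ into range closure and kernel, and then establish triviality of the kernels by differentiation. The one place where your route genuinely diverges is in handling the second kernel. You treat $\gamma$ and $\gamma^*$ symmetrically, reducing $\Ke(\gamma^*\gamma)=\Ke(\gamma)$ and $\Ke(\gamma\gamma^*)=\Ke(\gamma^*)$ via the quadratic-form identity $\normL{\gamma f}{\Lp_2\I}^2 = \ms{\gamma^*\gamma f}{f}$, and then differentiating each of $\gamma f$ and $\gamma^* f$ twice. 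The paper instead differentiates $\gamma^*\gamma f$ four times (Lemma \ref{lemma:partial4}), observes $\Ke(\gamma^*\gamma)=\Ke(\gamma)$, and transfers the conclusion to $\Ke(\gamma^*)$ through the reflection symmetry of Lemma \ref{lemma:breve} ($f \in \Ke(\gamma)$ iff $\breve f \in \Ke(\gamma^*)$). Your symmetric treatment is marginally more direct and does not need the reflection lemma at all, at the modest cost of running the double-differentiation (itself just Lemma \ref{lemma:partial4} for $\gamma$ and $\gamma^*$ separately) twice; the paper's route is designed to reuse its appendix lemmas and incidentally exhibits the $B$-reflection duality between the two operators. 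Both are sound, and your justification that the intermediate primitive is absolutely continuous because $\Lp_2\I \hookrightarrow \Lp_1\I$ is exactly the right point to make explicit.
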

\begin{proof}
We know from \mycite[\textsection 2.3.]{Engl1996} that,
\begin{align*}
\Lp_2\I &= \overline{\Rg(\gamma^*\gamma)} \oplus^{\perp} \Ke(\gamma^*\gamma),\\
&= \overline{\Rg(\gamma\gamma^*)} \oplus^{\perp} \Ke(\gamma\gamma^*).
\end{align*}
Therefore, it is enough to show that both null-spaces reduce to the zero
element. The kernel $\Ke(\gamma^*\gamma)$ of $\gamma^*\gamma$ is constituted by the
functions $f \in \Lp_2\I$ that are solutions of 
\begin{align*}
0& = \gamma^*\gamma f(\st), & \forall \st \in \I.
\end{align*}
Deriving four times with respect to $\st$ and applying Lemma
\ref{lemma:partial4} (see Appendix) leads to $f(\st)=0, \st\in \I$. So that
$\Ke(\gamma^*\gamma) = \{0\}$. Now it is enough to notice that
$\Ke(\gamma^*\gamma) = \Ke(\gamma)$. However, we know from Lemma \ref{lemma:breve} that
$f \in \Ke(\gamma)$ if and only if $\breve f \in \Ke(\gamma^*)$ (see
\ref{eq:breve} for notation). Therefore
$\Ke(\gamma\gamma^*) = \Ke(\gamma^*)=\breve \Ke(\gamma)= \breve \Ke(\gamma^*\gamma)
= \{0\}$, where by $\breve \Ke$, we mean $\{\breve f, f \in \Ke \}$.
\end{proof}

\section{Explicit computation of $(\lambda_k)$, $(\phi_k)$ and
  $(\psi_k)$}\label{section:explicitcomp}
\subsection{Main result}
In this section, we give explicit expressions for the singular bases
and singular vectors of the restricted call and put operators. The
results are gathered below in Theorem \ref{theorem:SVDbasisprop}. Let us write
\begin{align*}
f_{k,1}(\st)&= e^{\rho_k \st/B}, & f_{k,2}(\st)&= e^{-\rho_k \st/B},\\
f_{k,3}(\st) &=\cos(\rho_k t/B),& f_{k,4}(\st)&= \sin(\rho_k \st/B),
\end{align*}
where
\begin{align}\label{eq:defrhok} 
\rho_k &= \frac{\pi}2 + k\pi + (-1)^{k} \beta_{k}, &k\in \N,
\end{align}
and, for all $k \in \N$, $\beta_k$ is the smallest positive solution of the following
fixed point equation in $u$,
\begin{align*}
\exp( \pi/2 + k\pi + (-1)^k u) = \frac{1 + \cos(u)}{\sin(u)}.
\end{align*}
Interestingly, the positive sequence
$(\beta_k)$ decreases exponentially fast toward zero as detailed in
Lemma \ref{lemma:thetak}. In addition, we write,
\begin{align}\label{eq:hkdef}
h_{k,1} &=  a_{k,1}f_{k,1} + a_{k,2}f_{k,2}, &h_{k,2} &=  a_{k,3}f_{k,3} + a_{k,4}f_{k,4},
\end{align}
where the coefficients $a_{k,i},i=1,\ldots,4$ are such that,
\begin{align*}
a_{k,1} &= \frac{1}{\sqrt{B}} \frac{(-1)^{k}}{e^{\rho_k} + (-1)^{k}},\\ 
a_{k,2} &= (-1)^{k}e^{\rho_k}a_{k,1} = \frac{1}{\sqrt{B}}  \frac{1}{1 +(-1)^{k}
  e^{-\rho_k}}, \\
a_{k,3} &= -\frac{1}{\sqrt{B}}, \\
a_{k,4} &= \frac{1}{\sqrt{B}} \frac{1-(-1)^{k}e^{-\rho_k}}{1+(-1)^{k}e^{-\rho_k}}.
\end{align*}
Then, we have the following theorem.
\begin{theorem}\label{theorem:SVDbasisprop}
The eigenvectors $(\phi_k)$ of $\gamma^*\gamma$ and $(\psi_k)$ of
$\gamma \gamma^*$ are such that
\begin{align}\label{eq:phipsidef}
\phi_k &= h_{k,1} + h_{k,2}, & \psi_k &= h_{k,1} -h_{k,2}.
\end{align}
They are related by the following relationships,
\begin{align}\label{eq:phipsispectral}
\gamma \phi_k &= \lambda_k \psi_k, &\gamma^*\psi_k = \lambda_k \phi_k,
\end{align}
where we have written
\begin{align}\label{eq:eigenval}
\lambda_k = \left( \frac B{\rho_k} \right)^2,
\end{align}
and $\rho_k$ is defined in \ref{eq:defrhok}. They verify
$\normL{\phi_k}{\Lp_2\I}=\normL{\psi_k}{\Lp_2\I}=1$. Moreover, we have
\begin{align}\label{eq:phipsiboundary}
\psi_k(B) = \psi_k'(B) &= 0,  &\phi_k(0) &= \phi_k'(0) = 0,
\end{align}
together with
\begin{align}\label{eq:phipsibreve}
\breve \psi_k &= (-1)^k \phi_k, & \breve \phi_k &= (-1)^k\psi_k,
\end{align}
where we have written $\breve \psi_k(\st) = \psi_k(B-\st)$. And finally, we obtain as a direct consequence of \ref{eq:d2} above that
\begin{align*}
\lambda_k \partial_{\st}^2\psi_k &= \partial_{\st}^2 \gamma \phi_k =
\phi_k,\\
\lambda_k \partial_{\st}^2\phi_k &= \partial_{\st}^2 \gamma^* \psi_k =
\psi_k.
\end{align*}
\end{theorem}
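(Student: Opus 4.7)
The plan is to turn the eigenvalue problem $\gamma^*\gamma\phi_k=\lambda_k^2\phi_k$ into a fourth-order linear ODE, close it with boundary conditions read off from the integral representations of $\gamma$ and $\gamma^*$, and exploit a reflection symmetry to split the problem into reflection-even and reflection-odd parts. Item \ref{item:SVDinverse4} of Theorem \ref{theorem:SVDcompact} gives $\partial_\xi^4\phi_k=\phi_k/\lambda_k^2$; writing $\lambda_k=(B/\rho_k)^2$, the associated characteristic polynomial $r^4=(\rho_k/B)^4$ has roots $\pm\rho_k/B$ and $\pm i\rho_k/B$, so every eigenfunction necessarily has the form
\begin{align*}
c_1 e^{\rho_k\xi/B}+c_2 e^{-\rho_k\xi/B}+c_3\cos(\rho_k\xi/B)+c_4\sin(\rho_k\xi/B),
\end{align*}
which matches the announced shape $h_{k,1}+h_{k,2}$.

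The boundary conditions come for free from the integral representations: any element of $\Rg(\gamma^*)$ vanishes at $0$ along with its first derivative (since $\gamma^*g(\xi)=\int_0^\xi(\xi-x)g(x)dx$), and any element of $\Rg(\gamma)$ vanishes at $B$ along with its first derivative. Because $\phi_k\in\Rg(\gamma^*\gamma)\subset\Rg(\gamma^*)$ and $\psi_k\in\Rg(\gamma\gamma^*)\subset\Rg(\gamma)$, we immediately obtain $\phi_k(0)=\phi_k'(0)=0$ and $\psi_k(B)=\psi_k'(B)=0$. Next, I would establish the intertwining $\breve{\gamma f}=\gamma^*\breve f$ by the change of variable $u\mapsto B-u$ in the integral defining $\gamma$; this forces $\breve\phi_k$ to be an eigenvector of $\gamma\gamma^*$ for the eigenvalue $\lambda_k^2$, and simplicity then yields $\breve\phi_k=\varepsilon_k\psi_k$ with $\varepsilon_k\in\{\pm 1\}$. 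Decomposing the ansatz into its reflection-even and reflection-odd pieces then produces precisely $h_{k,1}$ and $h_{k,2}$, and both identities \ref{eq:phipsidef} and \ref{eq:phipsibreve} fall out.

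Combining $\phi_k(0)=\phi_k'(0)=0$ with the symmetry $\breve\phi_k=\varepsilon_k\psi_k$ (equivalently, with the requirement that $\gamma\phi_k$ be an honest eigenvector of $\gamma\gamma^*$, with no spurious polynomial remainder in $\xi$) yields a homogeneous linear system in $(c_1,c_2,c_3,c_4)$ parametrised by $\rho$. Nontriviality forces its determinant to vanish, and after simplification via standard trigonometric identities this condition is equivalent to the fixed point equation
\begin{align*}
\exp(\pi/2+k\pi+(-1)^k u)=\frac{1+\cos u}{\sin u},
\end{align*}
whose smallest positive root on the $k$-th branch is $\beta_k$, and thus defines $\rho_k=\pi/2+k\pi+(-1)^k\beta_k$ together with $\lambda_k=(B/\rho_k)^2$; the sign $\varepsilon_k=(-1)^k$ is the one consistent with the branch. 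Substituting $\rho_k$ back into the linear system solves for $(c_1,c_2,c_3,c_4)$ up to a common multiplicative constant, which is pinned down by imposing $\normL{\phi_k}{\Lp_2\I}=1$; the reflection symmetry then automatically yields $\normL{\psi_k}{\Lp_2\I}=1$. The explicit formulas for $a_{k,1},\ldots,a_{k,4}$ follow by direct algebra.

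The remaining identities are quick consequences of the earlier theorems: \ref{eq:phipsispectral} is item \ref{item:spectraltheorem2} of Theorem \ref{theorem:SVD}, and applying \ref{eq:d2} to $\gamma\phi_k=\lambda_k\psi_k$ and $\gamma^*\psi_k=\lambda_k\phi_k$ yields the second-derivative relations. I expect the main obstacle to be the bookkeeping in the previous paragraph: extracting the clean fixed point equation from the $4\times 4$ determinant and certifying that $\beta_k$ is the \emph{smallest} positive root on the correct branch (so that the $(\lambda_k)$ come out in decreasing order and index the bases of Theorem \ref{theorem:SVDcompact}), together with the careful sign tracking that produces the $(-1)^k$ pattern throughout the formulas for the $a_{k,i}$.
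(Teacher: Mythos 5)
Your plan is structurally the same as the paper's: use item~\ref{item:SVDinverse4}) of Theorem~\ref{theorem:SVDcompact} (equivalently Lemma~\ref{lemma:partial4}) to turn the eigenvalue equation into a fourth-order constant-coefficient ODE, read off the four-parameter generic solution, impose four linear constraints to obtain a $4\times 4$ homogeneous system, and require its determinant to vanish. There is, however, a circularity in the way you propose to obtain the constraints: you want to invoke ``simplicity'' of the $\lambda_k^2$-eigenspace of $\gamma\gamma^*$ to conclude $\breve\phi_k=\varepsilon_k\psi_k$, but one-dimensionality of that eigenspace is established only \emph{after} the $4\times 4$ analysis (it is the content of Proposition~\ref{proposition:kernelsol}), so it is not available when you set up the system. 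Your parenthetical alternative --- requiring $\gamma^*\gamma f - r^4 f$, which is a cubic polynomial once $f$ is the ODE ansatz, to vanish identically --- is the non-circular route, and it is exactly what the paper does when it ``plugs the generic solution back into \ref{eq:eigenvector}'' to obtain $Mb=0$. Your observation that $\phi_k(0)=\phi_k'(0)=0$ and $\psi_k(B)=\psi_k'(B)=0$ follow directly from the integral representations of $\gamma^*$ and $\gamma$ is a clean conceptual shortcut (the paper instead reads \ref{eq:phipsiboundary} off the explicit formulas), and handling \ref{eq:phipsispectral} and the $\partial_\xi^2$ relations by citing Theorem~\ref{theorem:SVD} and \ref{eq:d2} is just as the paper does. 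But the ``bookkeeping'' you flag as the obstacle is not a detail: extracting and simplifying the determinant, identifying the fixed-point equation and the branch selecting $\beta_k$, showing that the root pairs $\nu_{2k+1}=-\nu_{2k}$ yield identical eigenvalues \emph{and} identical eigenvectors (so no spurious multiplicity), certifying that the $\rho_k$ give the $\lambda_k$ in decreasing order, and nailing the sign $\varepsilon_k=(-1)^k$ --- this is the substance of Propositions~\ref{proposition:eigenval}--\ref{proposition:defnuk} and Lemmas~\ref{lemma:redundantsol}, \ref{lemma:fpsym}, \ref{lemma:thetak}. As a plan, this is the right plan; as a proof, it stops precisely where the work begins.
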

\begin{proof}%[Proof of \ref{theorem:SVDbasisprop}]
Notice readily that \ref{eq:phipsiboundary}, \ref{eq:phipsibreve} and the fact that both
$\phi_k$ and $\psi_k$ are unit normed are straightforward consequences
of \ref{eq:phipsidef}. In addition, \ref{eq:phipsispectral} is a
repetition of Theorem \ref{theorem:SVD}, item \ref{item:spectraltheorem2}). So
that we are in fact left with proving \ref{eq:phipsidef} and \ref{eq:eigenval}. Each
eigenvector $f$ of $\gamma^*\gamma$ associated to the eigenvalue $r^4$ is solution of the problem, 
\begin{align}\label{eq:eigenvector} 
r^4 f &= \gamma^*\gamma f,
\end{align}
for some $r\neq 0$ and $f\in \Lp_2\I$. After differentiating four times the latter equation with respect to
$\st$ (assuming that $f \in \Lp_2\I \cap \Cs^4\I$) and applying Lemma
\ref{lemma:partial4}, we obtain that the solutions of
\ref{eq:eigenvector} are also solutions of the following fourth order ordinary
differential equation,
\begin{align*}
r^4 d_{\st}^4f - f =0,
\end{align*}
where $d_{\st}^4$ stands for the fourth order ordinary differential
operator. Its characteristic polynomial admits four roots $\pm r^{-1}$ and $\pm
i r^{-1}$. Consequently, the real solutions of the above ordinary differential
equation are of the form
\begin{align}\label{eq:genericsol}
f(\st) = b_1 e^{\st/r} + b_2 e^{-\st/r} + b_3 \cos(\st/r) + b_4 \sin(\st/r).
\end{align}
The $\phi_k$s are thus of this form. Plugging this generic solution
back into \ref{eq:eigenvector} leads in turn, after tedious but
straightforward computations, to 
\begin{align}\label{eq:systconst}
M b =0,
\end{align}
where $b$ is a $4\times 1$ vector such that $b^T = \begin{bmatrix} b_1
  & b_2 & b_3 &b_4 \end{bmatrix}$ and
$M$ is the $4 \times 4$ matrix defined by
\begin{equation}\label{eq:matconstraint}
M(r,B)= \begin{bmatrix}
 r^{-1}e^{B/r}&- r^{-1} e^{- B/r}& r^{-1}\sin \left(  B/r \right) &-r^{-1}\cos \left( B/r \right) \\ 
-r ^{-2}e^{B/r}&-r^{-2}e^{-B/r}&r^{-2}\cos \left(  B/r \right) &r^{-2}\sin \left( B/r \right) \\
r^{-3} &-r^{-3} & 0  & r^{-3}\\
r^{-4} & r^{-4} &r ^{-4} & 0 \\
\end{bmatrix}.
\end{equation}
There exists a non-trivial solution to \ref{eq:systconst} if and only
if $r$ is such that the determinant of $M$ cancels, that is
$\Det(r,M)=0$. As detailed in Proposition
\ref{proposition:eigenval}, the roots of $\Det(r,M)=0$ are exactly the
$r_m = B/\nu_m$ where $\nu_m$ is defined in \ref{eq:defnuk}. In
addition, we prove in Proposition \ref{proposition:kernelsol} that the system
$M(r_m,B)b=0$ admits the unique solution $b_m$. Reading off
\ref{eq:genericsol}, we obtain that the eigenvector of $\gamma^*\gamma$
associated to eigenvalue $r_m^4$ writes as $\alpha_m = \eta_{m,1} +
\eta_{m_2}$ where both $\eta_{m,1}$ and $\eta_{m,2}$ are defined in
\ref{eq:etakdef}. Now, it is enough to notice that, given the
properties of the sequence $(\nu_m)$ detailed in Proposition \ref{proposition:defnuk}, $r_{2k+1}^4
= r_{2k}^4$ and $r_{2k+2}^4 < r_{2k+1}^4$, $k\in \N$. In addition, we
know from Lemma \ref{lemma:redundantsol} that $\alpha_{2k+1} =
\alpha_{2k}$. This allows us to conclude that the eigenvalues of
$\gamma^*\gamma$ are, without redundancy, the $\lambda_k^2$, $k\in\N$, defined in
\ref{eq:eigenval} and the associated eigenspaces are unit-dimensional
and respectively spanned by the eigenvectors $\phi_k$, $k\in \N$, defined in \ref{eq:phipsidef}.\\
Computing $\psi_k = \lambda_k^{-1}\gamma\phi_k$ leads, after tedious but straightforward
computations to $\psi_k = h_{k,1} - h_{k,2}$ and concludes the proof.
\end{proof}

\subsection{Additional results}
This section contains a series of results that are used throughout the proof
of Theorem \ref{theorem:SVDbasisprop} above. In this section we
make use of the map $E: \N \mapsto \N$ such that $E(2k+1) = E(2k) = k$ for
all $k \in \N$.

\begin{proposition}\label{proposition:eigenval}
Let $M(r,B)$ be the $4\times 4$ matrix defined in
\ref{eq:matconstraint}. The set of solutions $r$ to the problem $\Det
M(r,B) =0$ is countable. Let us denote them by $r_m, m \in \N$. For any $m \in \N$, the solution $r_m$ can be written as
\begin{align*}
r_m &= \frac{B}{\nu_m},
\end{align*}
where $\nu_m$ is defined in \ref{eq:defnuk}. We obtain in fact that,
\begin{align*}
\Det M(r_m,B)&= 0 &\Leftrightarrow& &e^{\nu_m} &= -\frac{1 + (-1)^{E(m)} \sin(\nu_m)}{\cos(\nu_m)}.
\end{align*}
Besides, the following relationships hold true
\begin{align}
\cos\nu_m &:= -\frac{2}{e^{\nu_m} + e^{-\nu_m}} =
-\frac{1}{\cosh\nu_m}, \label{eq:equivtrigexp1}\\ 
\sin\nu_m &:= -(-1)^{E(m)} + (-1)^{E(m)}\frac{2}{1 + e^{-2\nu_m}}.\label{eq:equivtrigexp2}
\end{align}
\end{proposition}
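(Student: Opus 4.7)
The plan is to reduce $\Det M(r,B) = 0$ to the transcendental equation $\cos u \cosh u = -1$ with $u = B/r$, deduce countability from the discreteness of the zero set of an analytic function, and recast this equation in the two equivalent forms stated in the proposition.

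First, I would factor out the row common factors: row $i$ of $M(r,B)$ carries $r^{-i}$ for $i=1,\ldots,4$, so pulling them out gives
\begin{align*}
\Det M(r,B) = r^{-10}\,\Det M'(u), \qquad u = B/r,
\end{align*}
where $M'(u)$ has entries in $e^{\pm u}, \sin u, \cos u$ and $\pm 1$. Since $r\neq 0$, $\Det M(r,B) = 0$ is equivalent to $\Det M'(u) = 0$. I would then expand $\Det M'(u)$ by cofactors along the bottom row (which has only two nonzero entries), producing three $3\times 3$ minors each of which expands into a short combination of $e^{\pm u}\cos u$, $e^{\pm u}\sin u$ and constants. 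Using $e^u e^{-u} = 1$ and $\sin^2 u + \cos^2 u = 1$ systematically, the $\sin u$ terms cancel and I obtain
\begin{align*}
\Det M'(u) = 4\bigl(1 + \cos u \cdot \cosh u\bigr).
\end{align*}
Hence $\Det M(r,B) = 0$ iff $\cos u \cosh u = -1$, which after writing $\cosh u = (e^u + e^{-u})/2$ is exactly the first identity \ref{eq:equivtrigexp1}. Since $u \mapsto 1 + \cos u \cosh u$ is entire, its zero set is discrete, so the set of roots $r = B/u$ is countable and can be enumerated as $(r_m)_{m\in\N}$.

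To recover the factored form involving $(-1)^{E(m)}$, I would rewrite $\cos u \cosh u + 1 = 0$ as $\cos u(e^u + e^{-u}) = -2$, multiply through by $e^u$, and complete the square using $1 = \sin^2 u + \cos^2 u$ to recognize the identity $(e^u \cos u + 1)^2 = \sin^2 u$. Taking square roots gives the two branches $e^u = -(1 \pm \sin u)/\cos u$, which I would write uniformly as $e^u = -(1 + \varepsilon \sin u)/\cos u$ with $\varepsilon \in \{-1, +1\}$. The enumeration $\nu_m$ supplied by Proposition \ref{proposition:defnuk} groups the roots of $\cos u \cosh u = -1$ into pairs carrying a common sign; a short case analysis based on the asymptotic location of each root (each root lies just to one side of some $\pi/2 + k\pi$) confirms that precisely the branch $\varepsilon = (-1)^{E(m)}$ is realised. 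Finally, \ref{eq:equivtrigexp2} follows by solving $e^{\nu_m}\cos\nu_m = -1 - (-1)^{E(m)}\sin\nu_m$ for $\sin\nu_m$ and substituting $e^{\nu_m}\cos\nu_m = -2/(1 + e^{-2\nu_m})$, which is a direct consequence of \ref{eq:equivtrigexp1}.

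The main obstacle is not the algebra, which is tedious but purely mechanical thanks to the systematic cancellations $e^u e^{-u} = 1$ and $\sin^2 u + \cos^2 u = 1$, but the sign bookkeeping: matching each root of $\cos u \cosh u = -1$ to the correct sign $(-1)^{E(m)}$ relies on the precise enumeration constructed in Proposition \ref{proposition:defnuk}, which I would cite rather than re-derive here.
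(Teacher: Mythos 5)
Your proposal is correct and follows essentially the same route as the paper: compute the determinant and observe that, up to the irrelevant $r^{-10}$ prefactor, it reduces to $4(1+\cos u\cosh u)$ (the paper's \ref{eq:detexpression} in the form $2e^{-u}(\cos u\, e^{2u}+2e^u+\cos u)$), so that $\Det M=0$ iff $\cos u\cosh u=-1$, i.e.\ \ref{eq:equivtrigexp1}; split into the two branches $e^u = (-1\pm\sin u)/\cos u$ (the paper reads off the roots of the quadratic $P(x)=\cos\nu\,x^2+2x+\cos\nu$ in $x=e^\nu$, you complete the square on $e^u\cos u$ — identical algebra); then defer to Proposition \ref{proposition:defnuk} for the enumeration and the sign $(-1)^{E(m)}$, and finally solve the branch equation together with \ref{eq:equivtrigexp1} for $\sin\nu_m$ to get \ref{eq:equivtrigexp2}. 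One trivial slip worth flagging: the bottom two rows of $M$ each have \emph{three} nonzero entries, not two, so a Laplace expansion along either yields three minors per row; this does not affect anything since the resulting closed form for the determinant is correct.
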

\begin{proof}
It follows from straightforward computations that,
\begin{align}\label{eq:detexpression}
\Det M(r,B) &= 2 e^{-B/r} \left(\cos\left( B/r \right) 
 \left( e^{B/r} \right) ^{2} + 2 e ^{B/r}+ \cos \left( B/r \right) \right).
\end{align}
Let us write $\nu := B/r$ and notice that if $\cos(\nu) = 0$, then
$\Det M(r,B) =2 \neq 0$ so that we must have $\cos \nu \neq 0$ for
\ref{eq:systconst} to admit a non-trivial solution. To be more specific
$\Det M(r,B) = 0$ reduces to $P(e^{\nu}) = 0$ where $P(x) :=
\cos\left( \nu \right) x^{2} + 2 x + \cos \left( \nu \right)$.
However the roots of $P$ are given by
\begin{align*}
\delta_{\pm}(\nu) &:= \frac{ -1 \pm \sin(\nu) }{\cos(\nu)}.
\end{align*}
Henceforth, $r = B/\nu$ cancels $\Det M(r,B)$ if and only if $\nu$ is solution
of anyone of the two following fixed point equations,
\begin{align*}
e^{\nu} &= \frac{ -1 + \sin(\nu) }{\cos(\nu)}, &e^{\nu} &= \frac{ -1 - \sin(\nu) }{\cos(\nu)}.
\end{align*}
The proof follows now directly from Proposition \ref{proposition:defnuk}.
\end{proof}

\begin{proposition}\label{proposition:kernelsol}
For any $r_m$ solution of the equation $\Det M(r_m,B)=0$ (see Proposition
\ref{proposition:eigenval} above), the null space of
$M(r_m, B)$ is of dimension $1$ and is spanned by the vector 
\begin{align*}
b_m^T = \begin{bmatrix}
b_{m,1}&b_{m,2}&b_{m,3}&b_{m_4}\end{bmatrix},
\end{align*}
where we have written,
\begin{align*}
b_{m,1} &= \frac{1}{\sqrt{B}} \frac{(-1)^{E(m)}}{e^{\nu_m} + (-1)^{E(m)}},\\ 
b_{m,2} &= (-1)^{E(m)}e^{\nu_m}a_{m,1} = \frac{1}{\sqrt{B}}  \frac{1}{1 +(-1)^{E(m)}
  e^{-\nu_m}}, \\
b_{m,3} &= -\frac{1}{\sqrt{B}}, \\
b_{m,4} &= \frac{1}{\sqrt{B}} \frac{1-(-1)^{E(m)}e^{-\nu_m}}{1+(-1)^{E(m)}e^{-\nu_m}}, 
\end{align*}
and $\nu_m$ is defined in \ref{eq:defnuk}.
\end{proposition}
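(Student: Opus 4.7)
The plan is to reduce the $4\times 4$ homogeneous system row by row, solve the resulting constraint on $(b_1,b_2)$ using the fixed-point equation that defines $\nu_m$, and then read off the normalization from the unit $\Lp_2\I$-norm requirement on $\phi_k$ stated in Theorem~\ref{theorem:SVDbasisprop}.

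First, I would drop the nonzero prefactors $r_m^{-k}$ on each row. Rows $3$ and $4$ then read $b_1 - b_2 + b_4 = 0$ and $b_1 + b_2 + b_3 = 0$, so immediately
\begin{align*}
b_3 &= -(b_1+b_2), & b_4 &= b_2 - b_1.
\end{align*}
Substituting into row $1$ and writing $\nu = \nu_m = B/r_m$, one obtains the single relation
\begin{align*}
b_1\bigl(e^{\nu} - \sin\nu + \cos\nu\bigr) = b_2\bigl(e^{-\nu} + \sin\nu + \cos\nu\bigr),
\end{align*}
and an analogous one from row $2$. By hypothesis $\Det M(r_m,B)=0$, so these two relations are proportional and together cut the $(b_1,b_2)$-plane down to a line; the coefficients are nonvanishing for $\nu_m > \ln 2$ (which holds for all $m$ by Lemma~\ref{lemma:thetak}), so $\Ke(M(r_m,B))$ has dimension exactly one.

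Second, I would use the fixed-point identity $e^{\nu_m}\cos\nu_m = -1 - (-1)^{E(m)}\sin\nu_m$ from Proposition~\ref{proposition:eigenval}, equivalently $\sin\nu_m = (-1)^{E(m)}\tanh\nu_m$ and $\cos\nu_m = -1/\cosh\nu_m$ via \ref{eq:equivtrigexp1}--\ref{eq:equivtrigexp2}, to collapse the displayed relation to $b_2 = (-1)^{E(m)} e^{\nu_m}\, b_1$. Writing $b_1 = c\cdot (-1)^{E(m)}/(e^{\nu_m}+(-1)^{E(m)})$ for a scalar $c$ produces $b_{m,2}$ in the stated form, and the elementary algebraic identities
\begin{align*}
\frac{(-1)^{E(m)}}{e^{\nu_m}+(-1)^{E(m)}} + \frac{1}{1+(-1)^{E(m)}e^{-\nu_m}} &= 1,\\
\frac{1}{1+(-1)^{E(m)}e^{-\nu_m}} - \frac{(-1)^{E(m)}}{e^{\nu_m}+(-1)^{E(m)}} &= \frac{1-(-1)^{E(m)}e^{-\nu_m}}{1+(-1)^{E(m)}e^{-\nu_m}},
\end{align*}
combined with $b_3 = -(b_1+b_2)$ and $b_4 = b_2-b_1$, deliver $b_{m,3}=-c$ and $b_{m,4}$ in the stated form. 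The specific value $c = 1/\sqrt{B}$ is the unique positive choice that makes $\phi_k = h_{k,1}+h_{k,2}$ carry unit $\Lp_2\I$-norm, as required by Theorem~\ref{theorem:SVDbasisprop}.

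The main obstacle is the algebraic collapse performed in the second step: checking that the ratio $(e^{\nu_m}-\sin\nu_m+\cos\nu_m)/(e^{-\nu_m}+\sin\nu_m+\cos\nu_m)$ really reduces to $(-1)^{E(m)}e^{\nu_m}$ at $\nu=\nu_m$. This ultimately boils down to the identity $\sin\nu_m = (-1)^{E(m)}\tanh\nu_m$, itself a direct rewriting of \ref{eq:equivtrigexp2}, but the sign bookkeeping $(-1)^{E(m)}$ that distinguishes even-indexed from odd-indexed roots needs to be tracked carefully throughout the manipulation.
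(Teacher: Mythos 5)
Your proof fills in exactly the ``straightforward linear algebra'' that the paper leaves to the reader, and it hinges on the same two identities \ref{eq:equivtrigexp1} and \ref{eq:equivtrigexp2} that the paper's proof explicitly flags as the key ingredients, so the approach is the same. One small correction: the nonvanishing of the coefficients in the collapsed $(b_1,b_2)$-relation should be justified by $\nu_m\neq 0$ (both coefficients equal $\tfrac{(e^{\nu}-e^{-\nu})(e^{\nu}-\sigma)}{2\cosh\nu}$ and $\tfrac{(e^{\nu}-e^{-\nu})(\sigma-e^{-\nu})}{2\cosh\nu}$ with $\sigma=(-1)^{E(m)}$, so they vanish only at $\nu=0$), not by ``$\nu_m>\ln 2$'' --- indeed the odd-indexed $\nu_m=-\nu_{m-1}$ are negative, so that inequality fails for half the sequence, even though the conclusion is unaffected.
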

\begin{proof}
It is a matter of straightforward linear algebra and thus left to the
reader. Notice however, that it relies on the use of both
\ref{eq:equivtrigexp1} and \ref{eq:equivtrigexp2}.
\end{proof}

\begin{lemma}\label{lemma:redundantsol}
Let us write
\begin{align*}
\zeta_{m,1}(\st)&= e^{\nu_m \st/B}, & \zeta_{m,2}(\st)&= e^{-\nu_m \st/B},\\
\zeta_{m,3}(\st) &=\cos(\nu_m\st/B),& \zeta_{m,4}(\st)&= \sin(\nu_m\st/B),
\end{align*}
where $\nu_m$ is defined in \ref{eq:defnuk}. In addition, we write,
\begin{align}\label{eq:etakdef}
\eta_{m,1} &=  b_{m,1}\zeta_{m,1} + b_{m,2}\zeta_{m,2}, &\eta_{m,2} &=  b_{m,3}\zeta_{m,3} + b_{m,4}\zeta_{m,4},
\end{align}
where the coefficients $b_{m,i},i=1,\ldots,4$ are defined in Proposition
\ref{proposition:kernelsol}. For all $k \in \N$, we have the following relationships
\begin{align*}
\eta_{2k+1,1} &= \eta_{2k,1}, &\eta_{2k+1,2} &= \eta_{2k,2}
\end{align*}
\end{lemma}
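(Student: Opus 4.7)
The plan is to reduce both identities to direct algebraic verification using the pairing property of the sequence $(\nu_m)$ established in Proposition \ref{proposition:defnuk}. Within each pair $(\nu_{2k},\nu_{2k+1})$, the two values share the same fixed point equation of Proposition \ref{proposition:eigenval} (since $E(2k)=E(2k+1)=k$) and are related by the symmetry $\nu_{2k+1}=-\nu_{2k}$. My first step is to transfer this sign flip to the four building blocks $\zeta_{m,i}$. By parity of $\exp$, $\cos$, $\sin$, this yields the pointwise identities
\begin{align*}
\zeta_{2k+1,1}=\zeta_{2k,2},\quad \zeta_{2k+1,2}=\zeta_{2k,1},\quad \zeta_{2k+1,3}=\zeta_{2k,3},\quad \zeta_{2k+1,4}=-\zeta_{2k,4}.
\end{align*}

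My second step is to propagate the same symmetry to the coefficients $b_{m,i}$ of Proposition \ref{proposition:kernelsol}. Since $(-1)^{E(m)}$ is constant on each pair, only the sign flip of $\nu_m$ is active in the formulas. Multiplying numerator and denominator of $b_{2k+1,1}$ by $(-1)^k$ gives $b_{2k+1,1}=b_{2k,2}$, and the symmetric manipulation yields $b_{2k+1,2}=b_{2k,1}$. The coefficient $b_{m,3}=-1/\sqrt{B}$ is trivially independent of $m$. Finally, multiplying numerator and denominator of $b_{2k+1,4}$ by $(-1)^k e^{-\nu_{2k}}$ delivers $b_{2k+1,4}=-b_{2k,4}$.

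Combining the two sets of identities, the lemma is immediate:
\begin{align*}
\eta_{2k+1,1} &= b_{2k+1,1}\zeta_{2k+1,1}+b_{2k+1,2}\zeta_{2k+1,2} = b_{2k,2}\zeta_{2k,2}+b_{2k,1}\zeta_{2k,1} = \eta_{2k,1},\\
\eta_{2k+1,2} &= b_{2k+1,3}\zeta_{2k+1,3}+b_{2k+1,4}\zeta_{2k+1,4} = b_{2k,3}\zeta_{2k,3}+(-b_{2k,4})(-\zeta_{2k,4}) = \eta_{2k,2},
\end{align*}
the two sign flips on the fourth term cancelling exactly. The whole argument is computational; the only real obstacle is the bookkeeping, namely keeping careful track of signs and rescaling numerators and denominators by suitable factors of $(-1)^k$ and $e^{\pm\nu_{2k}}$ when rewriting the coefficients. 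No analytical difficulty is involved, and the statement follows from the two symmetries combined.
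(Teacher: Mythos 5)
Your proof is correct and follows exactly the route the paper indicates: the paper's own proof consists of the single sentence that the result "follows from straightforward computations using the fact that $\nu_{2m+1}=-\nu_{2m}$," and your argument is precisely that computation carried out in full, tracking how the sign flip $\nu_{2k+1}=-\nu_{2k}$ swaps $\zeta_{\cdot,1}\leftrightarrow\zeta_{\cdot,2}$, fixes $\zeta_{\cdot,3}$, negates $\zeta_{\cdot,4}$, swaps $b_{\cdot,1}\leftrightarrow b_{\cdot,2}$, fixes $b_{\cdot,3}$, negates $b_{\cdot,4}$, and observing that the two sign flips in the trigonometric part cancel.
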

\begin{proof}
It follows from straightforward computations using the fact that
$\nu_{2m+1} = - \nu_{2m}$.
\end{proof}

\begin{proposition}\label{proposition:defnuk}
Let us define the map $E:\N \mapsto \N$ such that $E(2k) = E(2k+1) =k$
for $k\in\N$. Let us write
\begin{align*}
g(\nu) &= \frac{-1 + \sin \nu}{\cos \nu}, & h(\nu)&= \frac{-1 - \sin\nu}{\cos\nu},
\end{align*}
and consider the fixed point equations $e^{\nu} = g(\nu)$ and $e^{\nu}
= h(\nu)$. The set of corresponding solutions is exhausted by the
sequence
\begin{align}\label{eq:defnuk}
\nu_m &= (-1)^m \left( \frac{\pi}2 + E(m)\pi + (-1)^{E(m)}
  \beta_{E(m)}\right), &m\in \N. 
\end{align}
where $(\beta_m)$ is defined in Lemma
\ref{lemma:thetak}. In particular, notice that $\nu_{2k+1} = -\nu_{2k}$ and
$\abs{\nu_{m_1}} < \abs{\nu_{m_2}}$ for all $m_1,m_2 \in \N$ such
that $E(m_1)<E(m_2)$. Notice in addition that, by construction, $\nu_m$
is solution of
\begin{align*}
e^{\nu_m} &= - \frac{1 + (-1)^{E(m)} \sin \nu_m}{\cos \nu_m}.
\end{align*}
This latter result, together with the fact that $\Det M(B/\nu_m, B) =
0$ (see \ref{eq:detexpression}), leads straightforwardly to the
following relationships,
\begin{align*}
\cos\nu_m &:= -\frac{2}{e^{\nu_m} + e^{-\nu_m}} =
-\frac{1}{\cosh\nu_m},\\
\sin\nu_m &:= -(-1)^{E(m)}  + (-1)^{E(m)}\frac{2}{1 + e^{-2\nu_m}}.
\end{align*}
\end{proposition}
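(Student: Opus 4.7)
The plan is to combine the two fixed-point equations $e^\nu = g(\nu)$ and $e^\nu = h(\nu)$ into a single master equation that admits a transparent geometric analysis. The key observation is that $g(\nu) h(\nu) = (1 - \sin^2 \nu)/\cos^2 \nu = 1$ and $g(\nu) + h(\nu) = -2/\cos\nu$, so $e^\nu$ is a root of one of the two equations if and only if it is a root of the product $(x - g(\nu))(x - h(\nu)) = x^2 + (2/\cos\nu)\, x + 1$. Evaluating at $x = e^\nu$ and dividing by $2 e^\nu$ collapses this to the master equation
\begin{equation*}
\cos\nu \cdot \cosh\nu \;=\; -1,
\end{equation*}
which is precisely what \ref{eq:detexpression} becomes after clearing denominators. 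This master equation is even in $\nu$, so its solution set is symmetric about the origin; I will therefore enumerate only the positive solutions and pair each with its negative.

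I next locate and count the positive solutions geometrically. Since $\cosh\nu \geq 1$, the master equation forces $\cos\nu = -1/\cosh\nu \in (-1, 0)$, and in particular $\nu$ must lie strictly inside one of the intervals $J_j = (\pi/2 + 2j\pi, 3\pi/2 + 2j\pi)$, $j \in \N$, on which $\cos\nu < 0$. On each $J_j$, $\cos\nu$ decreases from $0$ to $-1$ at the midpoint $\pi + 2j\pi$ and then climbs back to $0$, while $\nu \mapsto -1/\cosh\nu$ is strictly increasing with values in $(-1, 0)$. An intermediate-value-plus-monotonicity argument applied to each of the two half-intervals of $J_j$ then yields exactly two positive solutions per interval, one just to the right of $\pi/2 + 2j\pi$ and one just to the left of $3\pi/2 + 2j\pi$. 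Labelling them as $\pi/2 + k\pi + (-1)^k \beta_k$ with $\beta_k > 0$, $k \in \N$, exhausts the positive solutions, and pairing them with their negatives and indexing via the map $E$ reproduces the sequence \ref{eq:defnuk}, together with the claimed monotonicity in $E(m)$.

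It remains to identify $\beta_k$ with the quantity defined in Lemma \ref{lemma:thetak}. Using the identities $\sin(\pi/2 + k\pi + (-1)^k u) = (-1)^k \cos u$ and $\cos(\pi/2 + k\pi + (-1)^k u) = -\sin u$, I will solve the quadratic $\cos\nu \cdot e^{2\nu} + 2 e^\nu + \cos\nu = 0$ explicitly for $e^\nu$ at $\nu = \pi/2 + k\pi + (-1)^k u$, which gives the two candidate branches $(1 - \cos u)/\sin u = \tan(u/2)$ and $(1 + \cos u)/\sin u = \cot(u/2)$. Only the $\cot$ branch is compatible with the smallest admissible $u \in (0, \pi/2)$ (the $\tan$ branch parametrizes the crossing in the adjacent half-interval), and substituting back yields exactly the fixed-point equation $\exp(\pi/2 + k\pi + (-1)^k u) = (1+\cos u)/\sin u$ defining $\beta_k$. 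The two closed-form identities at the end of the proposition are then routine algebraic consequences of the master equation: $\cos\nu_m = -1/\cosh\nu_m$ is the master equation itself, while the expression for $\sin\nu_m$ follows by solving $e^{\nu_m} = -(1 + (-1)^{E(m)} \sin\nu_m)/\cos\nu_m$ for $\sin\nu_m$ and then eliminating $\cos\nu_m$ using the cosine identity.

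The main obstacle I anticipate is showing uniqueness of the positive crossing in each half-interval when $k$ is odd. In that regime both sides of the fixed-point equation are strictly decreasing in $u$ on $(0, \pi)$, so the intermediate value theorem alone does not rule out extra crossings. I will have to supplement the IVT argument by a quantitative comparison of decay rates, for instance by controlling the sign of the derivative of the difference of the two sides on $(0, \pi)$ and verifying that it changes sign only once; this is a one-dimensional but slightly delicate monotonicity analysis and is the only genuinely non-routine step in the proof.
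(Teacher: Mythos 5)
Your argument is correct but follows a genuinely different route from the paper's. The paper works with the two fixed-point equations $e^{\nu}=g(\nu)$ and $e^{\nu}=h(\nu)$ separately, invokes Lemma~\ref{lemma:fpsym} (itself a corollary of $gh=1$) to handle the $\pm$ symmetry, and relies on the monotonicity catalogue of $g$ in Proposition~\ref{proposition:variationg}. You instead fuse the two equations into the single master relation $\cos\nu\,\cosh\nu=-1$, which is exactly the determinant condition \ref{eq:detexpression} after dividing by the nonvanishing factor $2e^{-\nu}$; the evenness of $\cos\nu\,\cosh\nu$ then delivers the $\pm$ symmetry for free, making Lemma~\ref{lemma:fpsym} redundant, while your interval-by-interval count of crossings on $J_j=(\tfrac{\pi}2+2j\pi,\tfrac{3\pi}2+2j\pi)$ replaces the variation study of $g$. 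The branch analysis via the quadratic formula correctly isolates the $\cot(u/2)$ branch and recovers the parametrization $\beta_k$ of Lemma~\ref{lemma:thetak}. You are also right to flag that uniqueness of the crossing on the right half of each $J_j$ (equivalently, odd $k$) is not immediate, since both sides of the reduced equation in $u$ decrease; note, however, that the paper's proof has exactly the same gap, asserting without argument that $e^{\nu}$ meets the increasing $g$ at a single point of the form $\tfrac{3\pi}{2}+2m\pi-u_m$ per period even though both functions are strictly increasing there. Your gap is easy to close: writing $\nu=\tfrac{3\pi}{2}+2j\pi-u$ with $u\in(0,\tfrac{\pi}{2})$, the equation becomes $e^{3\pi/2+2j\pi}=e^{u}\cot(u/2)$, and since
\begin{align*}
\frac{d}{du}\bigl[e^{u}\cot(u/2)\bigr]=e^{u}\,\frac{\sin u-1}{2\sin^{2}(u/2)}<0 \quad\text{on } (0,\tfrac{\pi}{2}),
\end{align*}
the right-hand side is strictly decreasing and the constant left-hand side crosses it exactly once.
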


\begin{proof}
Consider the fixed point equation $g(\nu) =
e^{\nu}$. Given the properties of $g$ detailed in Proposition \ref{proposition:variationg}, two cases arise depending
whether $\nu$ is positive or negative. In the case where $\nu$ is
positive, the exponential map meets $g$ at points of the form
$p_m = \frac{3\pi}2 + 2m\pi - u_m$ for $m \in \N = \{0, 1, 2, \ldots
\}$ and some small but positive $u_m$s. A direct application of Lemma \ref{lemma:fpsym} shows that the negative
solutions are exactly the $-p_m, m\in \N$.\\
The second fixed point equation $h(\nu) = e^{\nu}$ can be rewritten
as $g(-\nu) = e^{\nu}$. The positive solutions are of the form $q_m =
\frac{\pi}2 + 2m\pi + v_m, m\in \N$. And, from Lemma \ref{lemma:fpsym} again, the
corresponding negative solutions are the $-q_m, m\in \N$.\\
Let us write $t_m = \frac{\pi}2 + m\pi + (-1)^m \beta_m,
m\in \N$. It is clear that $t_{2k} = q_k$ and $t_{2k+1} = p_k$ for $k
\in \N$. In particular, $t_m$ is solution of
\begin{align}\label{eq:tm}
e^{t_m} &= - \frac{1 + (-1)^m \sin t_m}{\cos t_m}
\end{align}
Let us define the map $E: \N \mapsto \N$ such that $E(2k+1) = E(2k) =
k$ for all $k\in \N$. We define $\nu_m, m\in \N $ such that $\nu_{m} =
(-1)^m t_{E(m)}$, that is $\nu_{2k} = t_k$ and $\nu_{2k+1} = -t_k$, $k\in \N$. By construction, $\nu_m$ exhausts the set of
solutions of both fixed point equations $e^{\nu} = g(\nu)$ and
$e^{\nu} = h(\nu)$. In fact, $\nu_m$ is solution of 
\begin{align*}
e^{\nu_m} &= - \frac{1 + (-1)^{E(m)} \sin \nu_m}{\cos \nu_m}
\end{align*}
\end{proof}

\begin{proposition}\label{proposition:variationg}
Notice readily that $h(\nu) = g(-\nu)$, so that it is enough to
study the properties of $g$ alone. We have the
following results,
\begin{compactenum}
\item $g$ is defined on the domain $\D_g = \R \backslash
  \{\frac{3\pi}2 + 2m\pi, m\in \Zr \}$;
\item $g$ is $2\pi$ periodic and such that, for all $\nu
  \in \Se_g = (-\frac{\pi}2, \frac{3\pi}2)$, $g(\nu + 2m\pi) =
  g(\nu)$;
\item  Finally, $g$ is strictly increasing on $\Se_g$ and such that,
\begin{align*}
\lim_{\nu \to_{\oplus} -\frac{\pi}2}g(\nu) &= -\infty, & g(\frac{\pi}2) &= 0,
&\lim_{\nu \to_{\ominus} \frac{3\pi}2}g(\nu) &= +\infty.
\end{align*}
where we write $\to_{\oplus}$ (resp. $\to_{\ominus}$) to mean the limit from the above
(resp. below).
\item Notice that $\R \backslash \D_g$ (resp. $\R \backslash \D_h$) corresponds exactly to the set
  of all the zeros of $h$ (resp. $g$). Thus $\D_g \cap \D_h$ is the
  subset of $\R$ containing all the points where both $g$ and $h$ are
  well defined and different from zero.   
\end{compactenum}
\end{proposition}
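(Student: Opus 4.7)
The plan is to dispose of the four bullet points in turn; each is essentially a direct calculation with trigonometric identities, and the only subtlety is keeping careful track of which zeros of $\cos\nu$ produce removable singularities versus genuine poles.

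For item 1, I would observe that the numerator $-1+\sin\nu$ and denominator $\cos\nu$ of $g$ are both entire, so $g$ is defined everywhere except at zeros of $\cos\nu$, i.e.\ at points of the form $\pi/2 + m\pi$, $m\in\Zr$. At such a point, $\sin(\pi/2+m\pi)=(-1)^m$, so the numerator equals $-1+(-1)^m$. For even $m$ we get $0/0$, and a single application of L'Hospital's rule gives $\lim g(\nu)=\lim \cos\nu/(-\sin\nu)=0$, a removable singularity. For odd $m$, i.e.\ $\nu=3\pi/2+2k\pi$, the numerator equals $-2\neq 0$ while the denominator vanishes, so $g$ has a genuine pole there. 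Hence $\D_g$ is exactly the complement of $\{3\pi/2+2m\pi,\,m\in\Zr\}$, where we tacitly extend $g$ across the removable singularities.

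For item 2, the $2\pi$-periodicity is immediate from $\sin(\nu+2\pi)=\sin\nu$ and $\cos(\nu+2\pi)=\cos\nu$, and $\Se_g=(-\pi/2,3\pi/2)$ is clearly a fundamental period interval once the endpoints (which are poles) are excluded. For item 3, I would differentiate $g$ by the quotient rule, obtaining
\begin{align*}
g'(\nu)=\frac{\cos\nu\cdot\cos\nu-(-1+\sin\nu)(-\sin\nu)}{\cos^2\nu}=\frac{\cos^2\nu+\sin^2\nu-\sin\nu}{\cos^2\nu}=\frac{1-\sin\nu}{\cos^2\nu}.
\end{align*}
Since $1-\sin\nu\geq 0$ with equality only at the isolated points $\pi/2+2k\pi$, the derivative is nonnegative on $\Se_g$ and vanishes only at $\nu=\pi/2$, so $g$ is strictly increasing on $\Se_g$. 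The boundary limits follow by plugging in: as $\nu\to(-\pi/2)^{+}$, we have $\cos\nu\to 0^{+}$ and $-1+\sin\nu\to -2$, giving $g(\nu)\to -\infty$; at $\nu=\pi/2$ the removable extension gives $g(\pi/2)=0$; and as $\nu\to(3\pi/2)^{-}$, parametrising $\nu=3\pi/2-\epsilon$ yields $\cos\nu=-\sin\epsilon\to 0^{-}$ with numerator tending to $-2$, hence $g(\nu)\to +\infty$.

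Finally, item 4 follows by the symmetry $h(\nu)=g(-\nu)$ already noted in the statement: $\R\setminus\D_g=\{3\pi/2+2m\pi\}$, and at any such point one checks by L'Hospital that $h(3\pi/2+2m\pi)=\lim(-\cos\nu)/(-\sin\nu)=0$; conversely the zeros of $g$ lie at $\pi/2+2m\pi$ which are precisely the points of $\R\setminus\D_h$, by the analogous analysis applied to $h$. The hardest part is really only the bookkeeping in item 1 to separate the two types of singularities; everything else is routine calculus, so I do not anticipate any real obstacle beyond making the case split explicit.
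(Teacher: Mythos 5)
Your proposal is correct and takes essentially the same approach as the paper: the paper establishes item 1 by a Taylor expansion of $g$ near $\pi/2 + m\pi$ to distinguish the removable singularities from the genuine poles, while you use L'Hospital's rule — these are cosmetically different but mathematically equivalent. The paper then dismisses items 2--4 as ``straightforward''; your proof simply supplies those routine details (the quotient-rule computation $g'(\nu)=(1-\sin\nu)/\cos^2\nu$, the one-sided limits, and the symmetry $h(\nu)=g(-\nu)$ for item 4), all of which check out.
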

\begin{proof}
Let us first focus on the domain of $g$. It is defined on $\R \backslash
  \{\frac{\pi}2 + m\pi, m\in \Zr \}$. However, $g$ can be extended by
  continuity to be worth zero at points $\tfrac{\pi}2 + 2m\pi, m
  \in\Zr$. Notice indeed that for any small positive $u$ and $\ell \in
  \N$, one has got
\begin{align*}
g(\frac{\pi}2 + (-1)^{\ell}u) &= \frac{-1 + \cos u}{-(-1)^{\ell}\sin u}\\
&= \frac{-\frac{u^2}2 + O(u^4)}{-(-1)^{\ell}u + O(u^3)} = (-1)^{\ell}\frac{u}2 + O(u^3).
\end{align*}
With a slight abuse of notations, we denote the latter
  extension by $g$. So that $g$ is actually defined on $\R \backslash
  \{\frac{3\pi}2 + 2m\pi, m\in \Zr \}$. The other properties
  are straightforward.
\end{proof}

\begin{lemma}\label{lemma:fpsym}
Recall that $\D_g$ and $\D_h$ are defined in Proposition
\ref{proposition:variationg}. Notice first that $\D_g \cap \D_h$ is symmetric, meaning that if
$\nu \in \D_g\cap \D_h$, then $-\nu \in \D_g\cap\D_h$. For any $\nu \in \D_g \cap \D_h$, we have the following results,
\begin{compactenum}
\item If $\nu$ is solution of the fixed point equation $e^{\nu} =
g(\nu)$, then $-\nu$ is also a solution.  
\item If $\nu$ is solution of the fixed point equation $e^{\nu} =
h(\nu)$, then $-\nu$ is also a solution.  
\end{compactenum}
\end{lemma}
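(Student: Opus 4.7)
The plan is to deduce both items of the lemma from a single algebraic identity, namely
\begin{equation*}
g(\nu)\, h(\nu) = \frac{(-1+\sin\nu)(-1-\sin\nu)}{\cos^2\nu} = \frac{1-\sin^2\nu}{\cos^2\nu} = 1,
\end{equation*}
valid for every $\nu \in \D_g \cap \D_h$, since $\cos\nu \neq 0$ there.

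First I would establish the symmetry of $\D_g \cap \D_h$. From the relation $h(\nu) = g(-\nu)$ recorded in Proposition \ref{proposition:variationg}, one immediately obtains $\D_h = \R \setminus \{\pi/2 + 2m\pi,\; m\in \Zr\}$, and combining this with the description of $\D_g$ yields $\D_g \cap \D_h = \R \setminus \{\pi/2 + m\pi,\; m\in \Zr\}$. The excluded set is manifestly stable under $\nu \mapsto -\nu$, since $-(\pi/2 + m\pi) = \pi/2 + (-m-1)\pi$ belongs to the same family, so the intersection is symmetric.

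For item~(1), suppose $\nu \in \D_g \cap \D_h$ satisfies $e^\nu = g(\nu)$. The zeros of $g$ on $\D_g$ (including those coming from the continuity extension) lie exactly at the points $\pi/2 + 2m\pi$, all of which are excluded from $\D_h$, so $g(\nu) \neq 0$ and we may legitimately invert. Taking reciprocals of $e^\nu = g(\nu)$ and invoking the key identity then gives
\begin{equation*}
e^{-\nu} = \frac{1}{g(\nu)} = h(\nu) = g(-\nu),
\end{equation*}
which is precisely the statement that $-\nu$ is again a solution of $e^x = g(x)$. Item~(2) follows by the symmetric argument, or equivalently by substituting $\nu \mapsto -\nu$ in item~(1) and invoking $h(\cdot) = g(-\cdot)$ to translate the statement for $g$ into the statement for $h$.

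The entire lemma thus reduces to the Pythagorean identity $\sin^2+\cos^2=1$, and the only real obstacle is the domain bookkeeping needed to ensure that neither $g(\nu)$ nor $h(\nu)$ vanishes on $\D_g \cap \D_h$ — but this is automatic because the zero sets of (the extensions of) $g$ and $h$ are precisely the pole sets of $h$ and $g$, respectively, hence lie outside the intersection.
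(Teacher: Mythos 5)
Your proof is correct and follows essentially the same route as the paper: both rest on the identity $g(\nu)h(\nu)=1$ on $\D_g\cap\D_h$, then take reciprocals of $e^\nu=g(\nu)$ and invoke $h(\nu)=g(-\nu)$ to conclude. You spell out the domain bookkeeping (the symmetry of $\D_g\cap\D_h$ and the non-vanishing of $g$ there) that the paper leaves implicit, but the underlying argument is the same.
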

\begin{proof}
Notice first that we have the identity $h(\nu)g(\nu)=1$ for any $\nu
\in \D_g \cap \D_h$. Its proof is
immediate. And therefore, for any $\nu \in \D_g\cap \D_h$ solution of $e^{\nu} = g(\nu)$, we obtain
$g(-\nu) = h(\nu) = g(\nu)^{-1} = e^{-\nu}$. And idem for the solutions of $e^{\nu} = h(\nu)$.
\end{proof}

\begin{lemma}\label{lemma:thetak}
The sequence $(\beta_k)$ is such that, for all $k\in \N$, $\beta_k$ is
the smallest positive solution of the following
fixed point equation in $u$,
\begin{align*}
\exp( \pi/2 + k\pi + (-1)^k u) = \frac{1 + \cos(u)}{\sin(u)}.
\end{align*}
In addition, the approximation $\beta_k \approx 2e^{-\frac{\pi}2 -
  k\pi}$ holds true with a large degree of accuracy from $k = 1$ onward.
\end{lemma}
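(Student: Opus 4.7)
The plan is to treat the lemma as two separate assertions. The first---that $\beta_k$ is characterized by the displayed fixed-point equation---follows essentially from Proposition \ref{proposition:defnuk}, which already introduced $\beta_k$ through the equivalent equation satisfied by $\nu_{2k} = \pi/2 + k\pi + (-1)^k \beta_k$. Explicitly, setting $\nu = \pi/2 + k\pi + (-1)^k u$ and using the identities
\[
\sin\nu = (-1)^k \cos u, \qquad \cos\nu = -\sin u,
\]
the relation $e^{\nu_{2k}} = -(1 + (-1)^{E(2k)}\sin\nu_{2k})/\cos\nu_{2k}$ simplifies, after the cancellation $(-1)^k \cdot (-1)^k = 1$ in the numerator, to $\exp(\pi/2 + k\pi + (-1)^k u) = (1+\cos u)/\sin u$. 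That $\beta_k$ is the \emph{smallest} positive solution is inherited directly from the construction of $(\nu_m)$, which rests on the strict monotonicity of $g$ on each branch (Proposition \ref{proposition:variationg}).

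For the asymptotic approximation I will Taylor expand the right-hand side around $u = 0$:
\[
\frac{1+\cos u}{\sin u} = \frac{2 - u^2/2 + O(u^4)}{u\bigl(1 - u^2/6 + O(u^4)\bigr)} = \frac{2}{u} + O(u).
\]
Substituting $u = \beta_k$ into the fixed-point equation and rearranging yields
\[
\beta_k = 2\,\exp\bigl(-\pi/2 - k\pi - (-1)^k \beta_k\bigr)\bigl(1 + O(\beta_k^2)\bigr).
\]
From Proposition \ref{proposition:defnuk} one already knows that $\beta_k$ is bounded above (by $\pi/2$, say); plugging this crude bound into the displayed identity shows $\beta_k = O(e^{-k\pi}) \to 0$. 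A one-step bootstrap using $\exp(-(-1)^k \beta_k) = 1 + O(\beta_k)$ then refines this to $\beta_k = 2 e^{-\pi/2 - k\pi}\bigl(1 + O(e^{-\pi/2 - k\pi})\bigr)$, which is the announced approximation.

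The main obstacle is not conceptual but book-keeping: one must track the alternating sign $(-1)^k$ throughout both the exponential and the trigonometric identities, and justify that $\beta_k$ is small enough for the Taylor expansion of $(1+\cos u)/\sin u$ near zero to apply. For $k \ge 1$ this is automatic from the crude a priori bound, whereas for $k = 0$ one has $\beta_0 \approx 2 e^{-\pi/2} \approx 0.42$, which is not small compared with the first correction term---consistent with the statement's caveat ``from $k = 1$ onward''.
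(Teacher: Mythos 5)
Your proposal follows essentially the same route as the paper: derive the fixed-point equation from eq.~(\ref{eq:tm}) via the trigonometric identities $\cos t_k = -\sin u$ and $\sin t_k = (-1)^k\cos u$, then Taylor-expand $(1+\cos u)/\sin u \sim 2/u$ around $u=0$ to read off the approximation. The one refinement you add — the bootstrap from a crude a~priori bound to $\beta_k = 2e^{-\pi/2-k\pi}\bigl(1 + O(e^{-\pi/2-k\pi})\bigr)$ — actually tightens the paper's argument, which simply asserts that the approximation ``can be verified numerically'' for $k \ge 1$; otherwise the two proofs coincide.
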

\begin{proof}
Let us write $t_k = \frac{\pi}2 + k\pi + (-1)^ku$, for some small but
positive $u$ such that $t_k$ is solution of \ref{eq:tm}. Notice that 
\begin{align*}
\cos\left(\frac{\pi}2 + k\pi +(-1)^ku \right) &= -\sin(u) = -u + O(u^3),\\
\sin\left(\frac{\pi}2 + k\pi +(-1)^k u\right) &= (-1)^k\cos(u) = (-1)^k + O(u^2),\\
\exp\left(\frac{\pi}2 + k\pi +(-1)^ku\right) &= e^{\frac{\pi}2 + k\pi} (1 +(-1)^ku + O(u^2)).
\end{align*}
So that \ref{eq:tm} reduces to
\begin{align*}
\exp( \pi/2 + k\pi + (-1)^k u) = \frac{1 + \cos(u)}{\sin(u)}.
\end{align*}
Plugging-in the Taylor expansions above, we obtain
\begin{align*}
e^{\frac{\pi}2+k\pi}(1 +(-1)^k u + O(u^2)) = \frac{2 + O(u^2)}{u + O(u^3)} = \frac 1u
(2 + O(u^2)),
\end{align*}
which can be rewritten as
\begin{align}\label{eq:approxtheta}
u = e^{-\frac{\pi}2 -k\pi}(2 + O(u)).
\end{align}
It can be verified numerically that $2e^{-\frac{\pi}2 -k\pi}$ is a
very good approximation of $\beta_k$ as soon as $k\geq 1$ in the
sense that \ref{eq:tm} holds true with a very large degree of accuracy.   
\end{proof}

\section{The spectral recovery method (SRM)}\label{section:SRM}
In this Section, we first describe how $\gamma$ and $\gamma^*$ relate
to the bid-ask quotes. We then show that the SVD
of the restricted pricing operators described above can be used to
design a simple quadratic program that recovers the smoothest RND
compatible with market quotes.

\subsection{From $\gamma$ and $\gamma^*$ to call and put prices}\label{section:optoprice}
Let us denote by $P(\st)$ and $C(\st)$ the put and call prices at strike
$\st$ and by $q$ the corresponding risk neutral density. Let us
furthermore write $\bar \I = \R^+\backslash \I = (B,\infty)$. We assume that
the restriction $q_{\vert \I}$ to the interval $\I$ of $q$ is in
$\Lp_2\I$. For all $\st\in \I$, the following relationships are immediate.
\begin{align}
e^{r \tau}P(\st)&= \gamma ^*q(\st), \label{eq:puttocoeffs}\\
 e^{r\tau} C(\st) &= \gamma q(\st) +
\int_B^{\infty}(x-\st)q(x)dx \nonumber\\
&=\gamma q(\st) + m_1(q) - \st m_0(q), \label{eq:calltocoeffs}
\end{align}
where we have defined, 
\begin{align*}
m_k(f) &= \int_{\bar \I} x^kf(x)dx.
\end{align*}
Notice in particular that
\begin{align*}
m_0(q) &= \Q(S_{\tau} \geq B)=  1 - \bar m_0(q), \\
m_1(q) &= \Exp_\Q(S_{\tau} \vert S_{\tau}\geq B) \Q(S_{\tau}\geq B) = \Exp_{\Q} S_{\tau} -
\bar m_1(q).
\end{align*}
\Ref{eq:puttocoeffs} shows that put
prices directly relate to the restricted put
operator. From an estimation perspective, this is a crucial feature
that will allow us to recover the RND directly from market put
quotes. Unfortunately, the situation is slightly different for call
prices. As shown from \ref{eq:calltocoeffs}, call
prices relate to the restricted call operator via $m_1(q)$ and
$m_0(q)$, which are both unknown. Although, they could be estimated
and give rise to an estimator of the RND based on quoted call prices, 
we wont pursue this route here, but rather focus on the simpler
relation given by \ref{eq:puttocoeffs}.

\subsection{A refresher on no-arbitrage constraints}\label{section:refresherNA}

For a detailed review of model-free no-arbitrage constraints, the
reader is referred to \mycite[p.32,
\textsection~1.8]{Musiela2008} and \mycite{Davis2007}. Let us denote by $S_0$ the price today
of the underlying stock. Let us moreover assume that it pays a
continuous dividend yield $\delta$. Let us denote by $r$ the
continuously compounded short rate and by $\tau$ the time to
maturity. Let us recall first that, by no-arbitrage, put and call
prices are related by the put-call parity.
\begin{align}\label{eq:PCparity}
C(\st) - P(\st) &= S_0e^{-\delta\tau} - \st e^{-r\tau}.
\end{align}
Besides $C(0)= S_0$ and $P(0) = 0$. Let us now focus on put prices. We have,
% \begin{align}
% \max(0, S_0 - te^{-r\tau})&\leq C(t) \leq S_0 \label{eq:na0}\\
% -e^{-r\tau} &\leq \partial_t C(t) \leq 0 \label{eq:na1}\\
% 0 & \leq \partial_t^2 C(t)\label{eq:na2}
% \end{align}
% and, symilarly for the put price
\begin{align}
\max(0, \st e^{-r\tau} -S_0e^{-\delta\tau})&\leq P(\st) \leq \st e^{-r\tau}, \label{eq:nap0}\\
0 &\leq \partial_{\st} P(\st) \leq e^{-r\tau}, \label{eq:nap1}\\
0 & \leq \partial_{\st}^2 P(\st).\label{eq:nap2}
\end{align}
Assume we are given an increasing sequence of $n$ strikes
$\st_1<\st_2<...<\st_n$ and a set of corresponding put prices
$m_1,\ldots,m_n$. As described in \mycite{Ait-Sahalia2003}, the above
no-arbitrage relationships translate into a finite set of affine constraints
on the latter put prices. These constraints can in fact be written in matrix form
as $Am \leq b_p$, where $A$ stands for a $2n\times n$ matrix, $m$ is
the $n\times 1$ vector such that $m^T = \begin{bmatrix}m_1& \ldots&
  m_n \end{bmatrix}$ and $b_p$ is a $2n\times 1$ vector. More precisely, \ref{eq:nap2} translates into $n-2$
constraints as,
\begin{align*}
[Am]_{i} &:=  \frac{m_{i+1} - m_i}{\st_{i+1}-\st_i} - \frac{m_{i+2} - m_{i+1}}{\st_{i+2} -
  \st_{i+1}}\leq 0 := [b_p]_i, &i&=1,2,\ldots,n-2
\end{align*}
Moreover, the left-hand-side of \ref{eq:nap0} is fully captured in-sample by adding the
following additional $n$ constraints,
\begin{align}\label{eq:constLB}
[Am]_{i+n-2}&:= -m_i \leq - \max(0, \st_ie^{-r\tau} - S_0e^{-\delta\tau}):= [b_p]_{i+ n-2},
& i=1,\ldots, n
\end{align}
The right-hand-side of \ref{eq:nap0} need not be taken into account at this
stage. It is indeed less stringent than the upper-bound constraints we
will impose in the next section. Finally, given the
first $n-2$ constraints, \ref{eq:nap1} reduces to two additional
constraints,  
\begin{align*}
[Am]_{2n-1}&:= \frac{m_n - m_{n-1} }{\st_n - \st_{n-1}} \leq e^{-rT}:= [b_p]_{2n-1},\\
[Am]_{2n} &:= m_{1} - m_{2} \leq 0:=[b_p]_{2n}.
\end{align*}
Finally, let us recall that if the forward price $F_0$ of the
underlying stock is observable today, then, by no-arbitrage, it must
be equal to $S_0e^{(r-\delta)\tau}$.

\begin{figure}[tb]
\begin{center}
\begin{tikzpicture}
%%Plot axis
%\draw[thin,color=gray] (-1,-1) grid (10,5);
\draw[->] (5,0)--(10,0) node[right] {$\xi$};
\draw[->] (0,0)--(0,5) node[above] {$P(\xi)$};

%%draw point 1/p
\coordinate (a) at (5,0);
\node (b) at (a) [below] {$S_0e^{-\delta \tau}$};
\fill (a) circle (1pt);

%%name and draw the intrinsic value support
\coordinate (c0) at (10,5);
%\node (d) at (c0) [above] {$(\xi e^{-r\tau} - S_0 e^{-\delta \tau})^+$};
%\draw (9,4) -- (c0);

\draw[loosely dotted, ultra thick, red, double distance = 2pt] (3,0.5)
.. controls +(10:1cm) and +(200:1cm) ..node[sloped,
above]{$y_i^{Ask}$} node[sloped, below]{$y_i^{Bid}$} (5,1) .. controls +(20:1cm)
and +(220:1cm) .. (7,2.4); 

\draw[dotted, thick] (0,0) .. controls +(0:.2cm) and +(190:1cm) ..  (3,0.5)
.. controls +(10:1cm) and +(200:1cm) .. (5,1) .. controls +(20:1cm)
and +(220:1cm) .. (7,2.4).. controls +(40:1cm) and +(225:1cm) .. (9,4.2);

\fill (9,0) circle (1pt);
\node (e) at (9,0) [below] {$\xi_n = B$};

\draw[dashed] (9,0) -- (9, 4);

\draw[ultra thick, dotted, blue] (0,0)--(5,0)--node[below, sloped]{$(\xi_i e^{-r\tau} - S_0 e^{-\delta \tau})^+$}(9,4);

\coordinate (o) at (0,0);
\fill[blue] (o) circle (1.5pt);
\node at (o) [above right, blue] {$y_1^{Ask}$};
\node at (o) [below] {$\xi_0 = 0$};

\coordinate (o2) at (9,4.4);
\fill[blue] (o2) circle (1.5pt);
\node at (o2) [above, blue] {$y_n^{Ask}$};

\end{tikzpicture}
\end{center}
\caption{This graph sums up the set of constraints verified by estimated
put prices, which are solutions of the quadratic optimization problem
described in \ref{eq:optimS1}. Estimated put prices $m_1, \ldots, m_n$
on the ``dense'' grid $\xi_1, \ldots, \xi_n$ are displayed as
black dots. They must lie in-between the bid-ask quotes, which are
represented by thick red dots ranging over quoted strikes $\xi_{i_1},
\ldots, \xi_{i_s}$, which correspond to a sparse subset of the
underlying dense grid $\xi_1, \ldots, \xi_n$. In addition, extreme put
prices $m_1$ and $m_n$ are bounded above by $y_1^{Ask}=0$ and
$y_n^{Ask}$, respectively, where the value of $y_n^{Ask}$ is given in
\ref{section:BAconst}. Both $y_1^{Ask}$ and $y_n^{Ask}$ appear as
thick blue dots at strikes $\xi_1=0$ and $\xi_n=B$, respectively. $m_1, \ldots, m_n$
must also verify the in-sample constraints described by the lhs of
\ref{eq:nap0}. In particular, the lhs of \ref{eq:nap0}
ensures that the $m_i$s  are lower-bounded by the $(\xi_ie^{-r\tau} -
S_0e^{-\delta\tau})^+$s, which appear as thick blue dots. Since this
lower-bound is worth $0$ for $i=1$, this, together with the
upper-bound $y_1^{Ask}=0$ actually impose $m_1=0$. Finally, $m_1,
\ldots, m_n$ verify both \ref{eq:nap1}
and \ref{eq:nap2} above. The latter constraint imposes in-sample convexity.}\label{figure:constsum}
\end{figure}

\subsection{Bid-ask spread constraints}\label{section:BAconst}
Let us assume that the market provides us with an increasing sequence of
strike prices $\st_1 < \st_2 < \ldots < \st_s$, where $s$ typically ranges from $5$ to $50$ depending on the
underlying. In addition, the market provides us with a corresponding
sequence of bid ask quotes for put options. Let us denote them by $y_1^{Ask}, \ldots,
y_s^{Ask}$ and $y_1^{Bid}, \ldots, y_s^{Bid}$. We want the
corresponding fitted put prices $(m_i)$ to lie inside the bid ask
quotes. This corresponds to the following $2s$ affine constraints,
\begin{align}\label{eq:bidaskconst}
m_i &\leq  y_i^{Ask}, &-m_i &\leq -y_i^{Bid}, &i=1, \ldots s.
\end{align}
The quoted strikes might possibly span a very small portion of the
segment $\I$ on which we want to recover the RND. In order to improve
the quality of our estimator, we can constrain it to verify the above no-arbitrage constraints on a
denser set of strikes than the quoted ones. Let us denote by $\st_1 <
\st_2 < \ldots< \st_n$ this new set of strike prices, such that
$\st_1=0$, $\st_n=B$ and including the initial quoted strikes. For
later reference, we denote by $I=\{i_1, \ldots, i_s \}$
the subset of $\{1,\ldots,n\}$ corresponding to the indexes of the initial quoted
strikes. We know that,
in any case, we must have
$0= P(0) = m_1$, so that we can define $y_1^{Ask} = 0$. Furthermore,
we know from \ref{eq:nap1} that $P(\xi)$ cannot grow at a rate faster
than $e^{-r\tau}$, so that we can define $y_n^{Ask}$ to be the
corresponding linear extrapolation of the right-most market quote
$y_{i_s}^{Ask}$, meaning $y_n^{Ask} = y_{i_s}^{Ask} + e^{-r\tau}(\xi_n
-  \xi_{i_s})$. In summary, the requirement that
the $m_i$s fall in-between the bid-ask quotes translates into $2s+2$ additional constraints, which we can
write as follows
\begin{align}
m_i &\leq y_i^{Ask}, &i \in I\cup\{1,n\}, \label{eq:fullaskconst}\\
-m_i &\leq -y_i^{Bid}, &i \in I. \label{eq:fullbidconst}
\end{align}
All previously mentioned constraints are summarized in \ref{figure:constsum}.

\subsection{The quadratic program}
Fix $N\in \N$. The choice of $N$ will be discussed in the next
Section. Let us denote by $P_N$ the estimator of the put price $P$ on $\I$
built upon the $\phi_k$'s up to level $N$ and by $e^{-r\tau}q_N$ the
corresponding inverse image by $\gamma^*$. We have explicitly, from
\ref{eq:puttocoeffs} and Theorem \ref{theorem:SVD}, item \ref{item:decomposition}),
\begin{align*}
P_N &= \gamma^* e^{-r\tau}q_N,\\
P_N &= \sum_{k=0}^N \omega_k \phi_k,\\
q_N &= e^{r\tau}\sum_{k=0}^N \lambda_k^{-1} \omega_k\psi_k,
\end{align*}
for some $\omega^T = \begin{bmatrix} \omega_0 &
  \ldots&\omega_N \end{bmatrix} \in \R^{N+1}$. Furthermore for a given matrix $M$,
we will denote by $[M]_{I,J}$ the sub-matrix obtained by extracting the
rows of $M$ at indexes in $I$ and the columns of $M$ at indexes in
$J$. When extracting all the columns, we will write
$[M]_{I,\bullet}$, and similarly for the rows. And we will naturally write
$[M]_I$ in the case where $M$ is a
vector. The SRM estimator $\omega^{\bigstar}$ is obtained
as a solution of a quadratic program. It corresponds (modulo rescaling
by the $\lambda_k$s and the discount factor) to the coefficients of
the smoothest density that verifies the no-arbitrage and bid-ask
constraints above. To that end, notice that the $\Lp_2\I$-norm of the
second derivative of $q_N$, namely $S_N =\normL{\partial_{\st}^2q_N}{\Lp_2\I}^2$,
quantifies its smoothness. $S_N$ is often used as a smoothness penalty
and has been widely used in the context of
smooth RND recovery. Obviously, the smoother $q_N$, the smaller
$S_N$. As detailed in Proposition \ref{proposition:smoothness}, $S_N$ can be
directly expressed as a quadratic form of $\omega$
involving the $N+1$ first eigenvalues of the restricted put operator
$\gamma^*$. As a consequence, $\omega^{\bigstar}$ is solution of,
\begin{align}\label{eq:optimS1bis}\tag{P1'}
&\arg\min_{\omega \in \R^{N+1}} \normL{\partial_{\st}^2q_N}{\Lp_2}^2
&\text{subject to }& 
\begin{cases} 
[P_N]_{I\cup\{1,n\}} &\leq y_{I\cup\{1,n\}}^{Ask}, \\
-[P_N]_I &\leq -y_I^{Bid},\\
A P_N &\leq b_p,\\ 
q_N(0) &= 0. \end{cases}
\end{align}
where, with a slight abuse of notations, we have written $P_N^T = \begin{bmatrix}
  P_N(\st_1)&\ldots&P_N(\st_n)\end{bmatrix}$, $y_I^{Bid}$ stands for
the vector of initial put bid quotes and $y_{I \cap \{1,n\}}^{Ask}$
stands for the vector of initial put ask quotes augmented with the no
arbitrage bounds $y_1^{Ask}=0$ and $y_n^{Ask}=
y_{i_s}^{Ask}+e^{-r\tau}(\xi_n - \xi_{i_s})$. Notice that we have added the
constraint $q_N(0)=0$, which does not arise as a natural property of
the $\psi_k$s.\\
Denote by $\phi_{0,N}(\st)^T
= \begin{bmatrix}\phi_0(\st)&\ldots&\phi_N(\st)\end{bmatrix}$ and,
similarly, write $\psi_{0,N}(\st)^T$. Then we have $[P_N]_{i} =
\phi_{0,N}(\st_i)^T\omega$ and $q_N(\st) =
\psi_{0,N}(\st)^T\Omega_N\omega$, where $\Omega_N$ is defined below in Proposition
\ref{proposition:smoothness}. Let us finally
denote by $\Phi$ the matrix whose rows are constituted by the
$\phi_{0,N}(\st_i)^T$, $i=1,\dots,n$ and write $\Phi_I = [\Phi]_{I,\bullet}$. With these notations,
\ref{eq:optimS1bis} can be rewritten in canonical form as
\begin{align}\label{eq:optimS1}\tag{P1}
&\arg\min_{\omega \in \R^{N+1}} \frac 12
\omega^T\Omega_N^4\omega  &\text{subject to } &
\begin{cases} 
\Phi_{I\cup\{1,n\}}\omega &\leq y_{I\cup\{1,n\}}^{Ask}, \\
-\Phi_I\omega &\leq -y_I^{Bid}, \\
A\Phi \omega &\leq b_p,\\
\psi_{0,N}(0)^T\Omega_N\omega &= 0.
\end{cases} 
\end{align}
which is nothing but a quadratic program in $\omega$. This result is due to the following Proposition.
\begin{proposition}\label{proposition:smoothness}
Let us write $f_N = \sum_{k=0}^N  \lambda_k^{-1} \omega_k \psi_k$ and
\begin{align}\label{eq:omegaN}
\Omega_N &=  Diag(\lambda_0^{-1}, \ldots, \lambda_N^{-1}),
\end{align}
which stands for the $(N+1)\times(N+1)$ diagonal matrix whose diagonal
entries are the $\lambda_k^{-1}$ for $k=0,\ldots,N$. Then 
\begin{align*}
\normL{\partial_{\st}^2f_N}{\Lp_2\I}^2 &= \omega^T \Omega_N^4 \omega.
\end{align*}
\end{proposition}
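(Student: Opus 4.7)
The plan is to compute $\partial_\xi^2 f_N$ term by term using the second-derivative identity from Theorem \ref{theorem:SVDbasisprop}, which tells us that $\lambda_k \partial_\xi^2 \psi_k = \phi_k$ for every $k\ge 0$. This identity is exactly what makes the computation mechanical: differentiation on the $\psi_k$ basis corresponds (up to $\lambda_k^{-1}$) to passing to the $\phi_k$ basis.

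First, I would apply $\partial_\xi^2$ to the finite sum defining $f_N$. Since the sum is finite, linearity gives
\begin{align*}
\partial_\xi^2 f_N
&= \sum_{k=0}^N \lambda_k^{-1}\omega_k\, \partial_\xi^2\psi_k
= \sum_{k=0}^N \lambda_k^{-2}\omega_k\, \phi_k,
\end{align*}
where the second equality uses $\partial_\xi^2\psi_k = \lambda_k^{-1}\phi_k$.

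Next, I would use the orthonormality of $(\phi_k)$ in $\Lp_2\I$ (Theorem \ref{theorem:SVDcompact}, item \ref{item:SVDcomplete}) together with Parseval's identity on the finite expansion just obtained. This yields
\begin{align*}
\normL{\partial_\xi^2 f_N}{\Lp_2\I}^2
= \sum_{k=0}^N \lambda_k^{-4}\omega_k^2
= \omega^T \Omega_N^4\, \omega,
\end{align*}
since $\Omega_N^4 = \mathrm{Diag}(\lambda_0^{-4},\ldots,\lambda_N^{-4})$ by the definition of $\Omega_N$ in \ref{eq:omegaN}.

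There is no genuine obstacle here; the only subtlety worth noting in the write-up is to justify exchanging $\partial_\xi^2$ with the finite sum (trivial) and to flag that $\partial_\xi^2\psi_k = \lambda_k^{-1}\phi_k$ is a direct restatement of the last display in Theorem \ref{theorem:SVDbasisprop}. Everything else is a one-line Parseval computation.
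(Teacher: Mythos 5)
Your proof is correct and follows essentially the same route as the paper's: apply $\partial_\xi^2$ term by term, invoke the identity $\partial_\xi^2\psi_k = \lambda_k^{-1}\phi_k$ from Theorem \ref{theorem:SVDbasisprop}, and then use orthonormality of $(\phi_k)$ (Parseval) to read off the quadratic form $\omega^T\Omega_N^4\omega$.
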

\begin{proof}
Notice indeed that $\partial_{\st}^2f_N = \omega^T
\Omega_N \partial_{\st}^2\psi_{0,N}$. However, as demonstrated above
in Theorem \ref{theorem:SVDbasisprop}, $\partial_{\st}^2\psi_k=
\lambda_k^{-1}\phi_k $. Hence, using the property that the $\phi_k$s
constitute an orthonormal basis of $\Lp_2\I$, we obtain
\begin{align*}
\normL{\partial_{\st}^2f_N}{\Lp_2\I}^2 &= \sum_{k=0}^N
\lambda_k^{-4}\omega_k^2 = \omega^T
\Omega_N^4 \omega.
\end{align*}
\end{proof}

\subsection{Properties of \ref{eq:optimS1} and choice of the
  spectral-cutoff $N$}
A first question that arises is whether this quadratic program eventually
admits a solution? In that perspective, it is straightforward to
notice that \ref{eq:optimS1} admits a solution if and only if
$\Span\{\phi_i, 0\leq i\leq N\}$ admits an element which satisfies the
constraints. Let us denote by $\Dom$ the subset of $\Lp_2\I$ which
satisfies the constraints described in \ref{eq:optimS1bis} and assume
that $\Dom \neq \emptyset$. Obviously, \ref{eq:optimS1} admits a
solution as soon as $N$ is large enough, since $(\phi_i)$ is complete
in $\Lp^2\I$ (see Proposition \ref{proposition:complete}). On the other hand, it
admits no solution when $\Dom = \emptyset$, that is when the
constraints are incompatible. This latter situation might result from
the presence of spurious data, since the presence of an arbitrage in the
bid-ask quotes corresponds to a real arbitrage in the market, which
would certainly be arbitraged away by practitioners.\\
A second natural question that arises, is how to choose the spectral
cutoff $N$? As detailed in \ref{eq:optimS1}, we aim at recovering the
smoothest density $q_N$ built upon $\psi_0, \ldots \psi_N $ compatible
with price quotes. As described in Theorem \ref{theorem:SVDbasisprop},
$\psi_k$ is constituted of a periodic component $h_{k,2}$ oscillating
at frequency $\rho_k/B$ around an exponential trend $h_{k,1}$, where
$\rho_k$ grows roughly speaking like $k$. It is therefore natural to
think that the smaller $N$, the smoother the singular basis functions
and thus the smoother the density $q_N$ built upon them (although this needs
not be the case, rigorously speaking). This intuitive observation, is
justified through simulations (see \ref{figure:realrnd2},
bottom graph). In practice, we therefore suggest to choose $N$ to be the smallest
$N$ such that \ref{eq:optimS1} admits a solution. This is what we
actually do in the forthcoming simulation study.\\
Finally, let us point out that we could
have chosen to impose a positivity constraint on $q_N$ at each point
of the underlying dense grid $\xi_1, \ldots, \xi_n$, as an alternative
to the in-sample convexity
constraints on the $(m_i)$s described in \ref{eq:optimS1}. However, we have
noticed via numerical simulations that results obtained in that way are less satisfying than
with the convexity constraints on the $m_i$s. We therefore opted for
the convexity constraints.

\section{Simulation study}\label{section:simuls}
We run a simulation study both on real and simulated data. The purpose of the
estimation on simulated data is mostly to show that the SRM returns a
valid RND estimator in extreme cases, when as little as $5$
market quotes are available.\\
Recall from Lemma \ref{lemma:thetak} that, from $k = 1$ onward, we can write
$\beta_k \approx 2e^{-\frac{\pi}2 - k\pi}$ in \ref{eq:defrhok} above.
This approximation is not valid for $k=0$. In that case, however, we can solve
\ref{eq:tm} numerically to obtain $\rho_0 = 1.875104069$. This is the
value of $\rho_0$ we use in the following simulation study.
\begin{table}[hbt]
\caption{S\&P 500 put option prices, Jan. 5, 2005. S\&P 500 Index
  closing level $= 1183.74$; Option expiration $= 03/18/2005$ ($72$ days);
  $r = 2.69\%$; $\delta=1.70\%$.}\label{table:realdata}
\begin{small}
\begin{tabular}{r | c c c c c c c c c c}
\hline
Strike price &500 &550 &600 &700 &750 &800 &825 &850 &900 &925\\
Best bid     &0.00 &0.00 &0.00 &0.00 &0.00 &0.10 &0.00 &0.00 &0.00 &0.20\\ 
Best offer   &0.05 &0.05 &0.05 &0.10 &0.15 &0.20 &0.25 &0.50 &0.50 &0.70\\\hline
Strike price &950 &975 &995 &1005 &1025 &1050 &1075 &1100 &1125 &1150\\
Best bid     &0.50 &0.85 &1.30 &1.50 &2.05 &3.00 &4.50 &6.80 &10.10 &15.60\\
Best offer   &1.00 &1.35 &1.80 &2.00 &2.75 &3.50 &5.30 &7.80 &11.50 &17.20\\\hline
Strike price &1170 &1175 &1180 &1190 &1200 &1205 &1210 &1215 &1220 &1225\\
Best bid     &21.70 &23.50 &25.60 &30.30 &35.60 &38.40 &41.40 &44.60
&47.70 & 51.40\\ 
Best offer   &23.70 &25.50 &27.60 &32.30 &37.60 &40.40 &43.40 &46.60
&49.70 &53.40\\\hline 
Strike price &1250 &1275 &1300 &1325 &1350 & & & & & \\
Best bid     &70.70 &92.80 &116.40 &140.80 &165.50 & & & & & \\
Best offer   &72.70 &94.80 &118.40 &142.80 &167.50 & & & & & \\\hline
\end{tabular}
\end{small}
\end{table}  

\subsection{Real data}

We use the bid ask quotes reported in \mycite[Table~1]{Figlewski2008} for
put options on the S\&P 500 Index on January 5, 2005. For
completeness, we reproduce the table here in \ref{table:realdata}. We
choose $B = 2*S_0e^{(r-\delta)\tau}$, which corresponds
to two times the Forward price on the underlying stock. This choice is
arbitrary and produces an interval $\I$, which is symmetric around the
forward price. We observe from our simulation that the result is largely
independent of the choice of $B$. However, the higher $B$, the
higher we will need to go into the spectrum of $\gamma^*$, since the
smoothest RND that fits the data will be more and more
concentrated around the center of the interval $\I$. As regards the constraints, we
choose the grid $\st_1,\ldots, \st_n$ to be such that $\st_k = k-1,
k=1,\ldots,\PE{B}+1$ and if $\PE{B} < B$, we add $\st_{\PE{B}+2} = B$. Of
course, this grid contains the initial $35$ quoted strike prices since they are
integer valued. With the above
choice of $B$,
the quadratic program given in \ref{eq:optimS1} finds a feasible solution from
spectral cutoff $66$ onward. We report $q^{\bigstar}_{66}$
below in \ref{figure:realrnd2}. For the sake of comparison, we
plot on the same figure the log-normal
distribution obtained by least-square fit to the put prices obtained
as average of the bid-ask quotes. The only parameter of the log-normal
distribution that must be fitted is $\sigma$ (see Proposition \ref{proposition:BS}), and we find
$\sigma_{opt} = 0.143$. Interestingly, $q^{\bigstar}_{66}$
displays a small bump at the beginning of its left-tail, which does not appear in
\mycite[Fig.~8]{Figlewski2008} and could hardly be accounted for by
parametric methods. Notice the small blip next to
$B$ in \ref{figure:realrnd2}. This boundary effect is due to the fact that all the $\psi_k$s and their first
derivative are worth $0$ in $B$. In order to show that the choice of $B$ has very
little impact, we compute the RND estimator for $B =
1.4*S_0e^{(r-\delta)\tau}$. Results are reported in
\ref{figure:realrnd}. As was expected, first feasible points appear at
much lower spectral cutoffs, namely from spectral cutoff $26$
onward. Therefore, we plot $q_{26}^{\bigstar}$. As can be seen from \ref{figure:realput}, the put
prices $P_{26}^{\bigstar}$ arising from \ref{eq:optimS1} lie inside the
bid ask quotes, while the ones produced by the fitted log-normal
density lie outside.
\begin{figure}[H]
\includegraphics[width=\textwidth, height =
0.4\textheight]{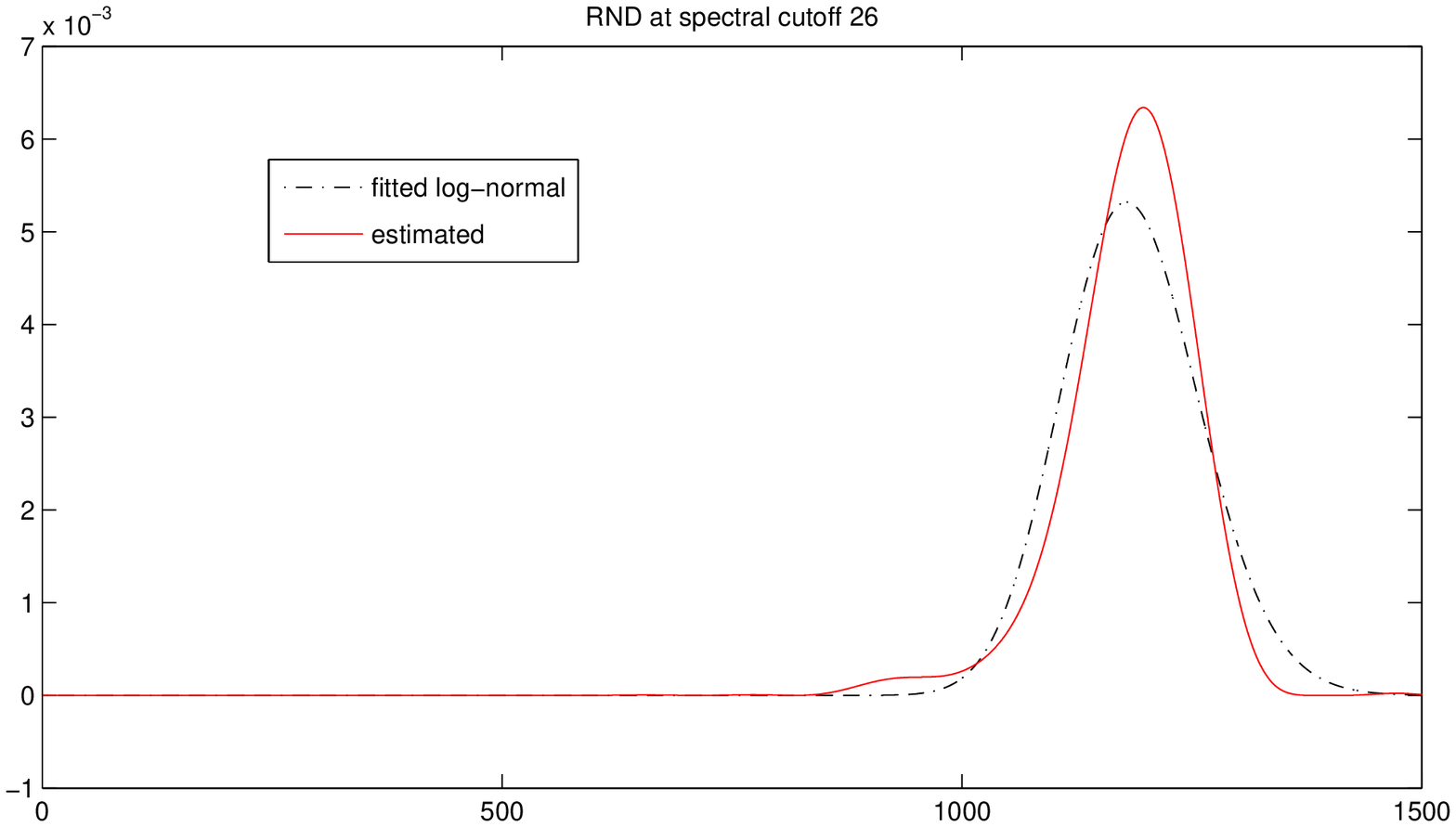}
\includegraphics[width=\textwidth, height =
0.4\textheight]{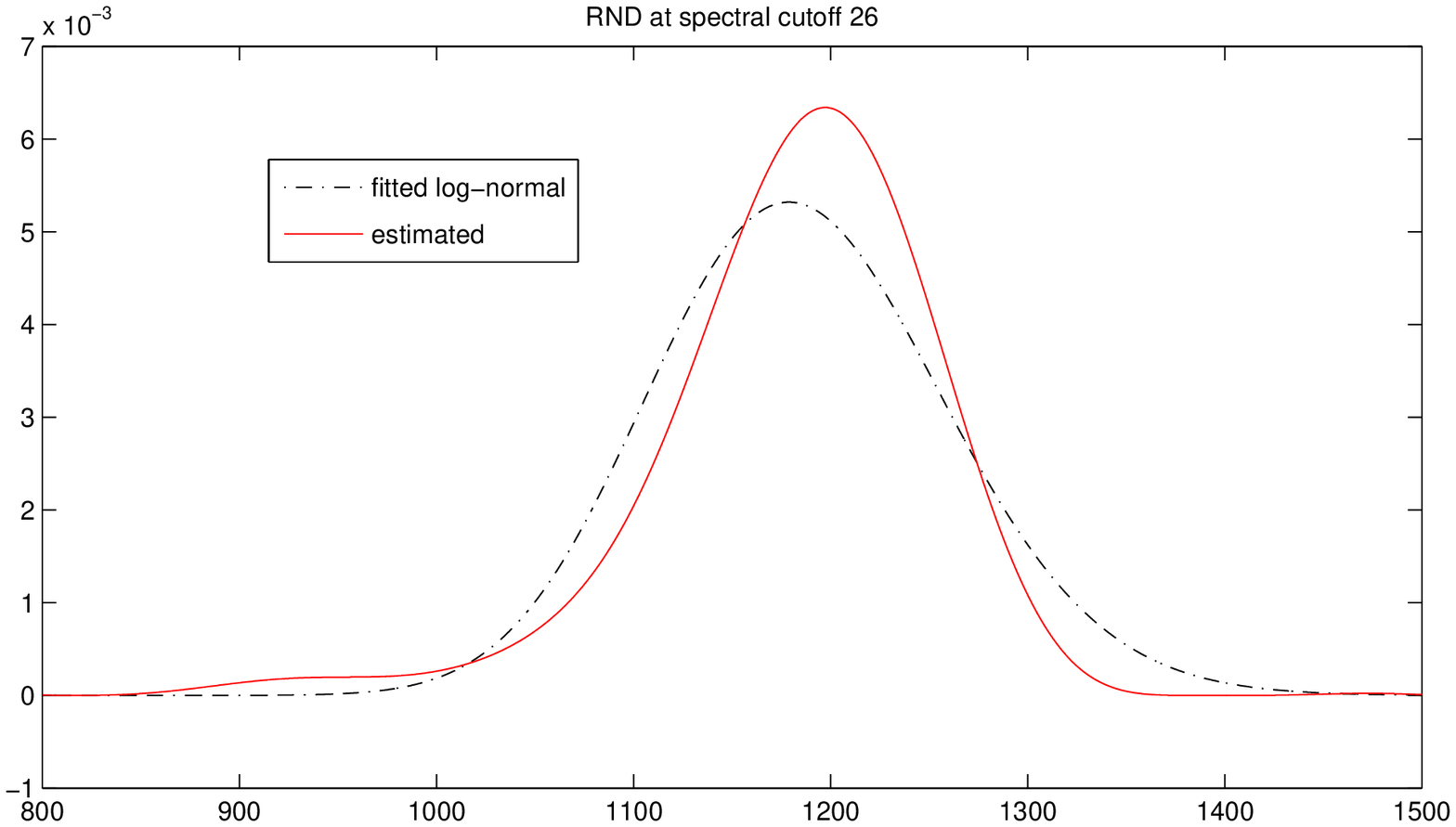}
\caption{Here we plot the RND $q^{\bigstar}_{26}$ (solid line) estimated from the real price quotes
  reported in \ref{table:realdata}. We choose $B = 1.4*F_0 =
  1.4*S_0*e^{(r-\delta)\tau} = 1660$ for that plot. In addition, we
  plot the best log-normal fit (in a least-square sense) to the average
  price quotes (dashed line). It is obtained for $\sigma_{opt} =
  0.143$. At the top, we display the full left tail
  of the RND $q^{\bigstar}_{26}$. At the bottom, we zoom in on the fat
  left tail of the estimated RND distribution.}\label{figure:realrnd}
\end{figure}
\begin{figure}[H]
\includegraphics[width=\textwidth, height =
0.4\textheight]{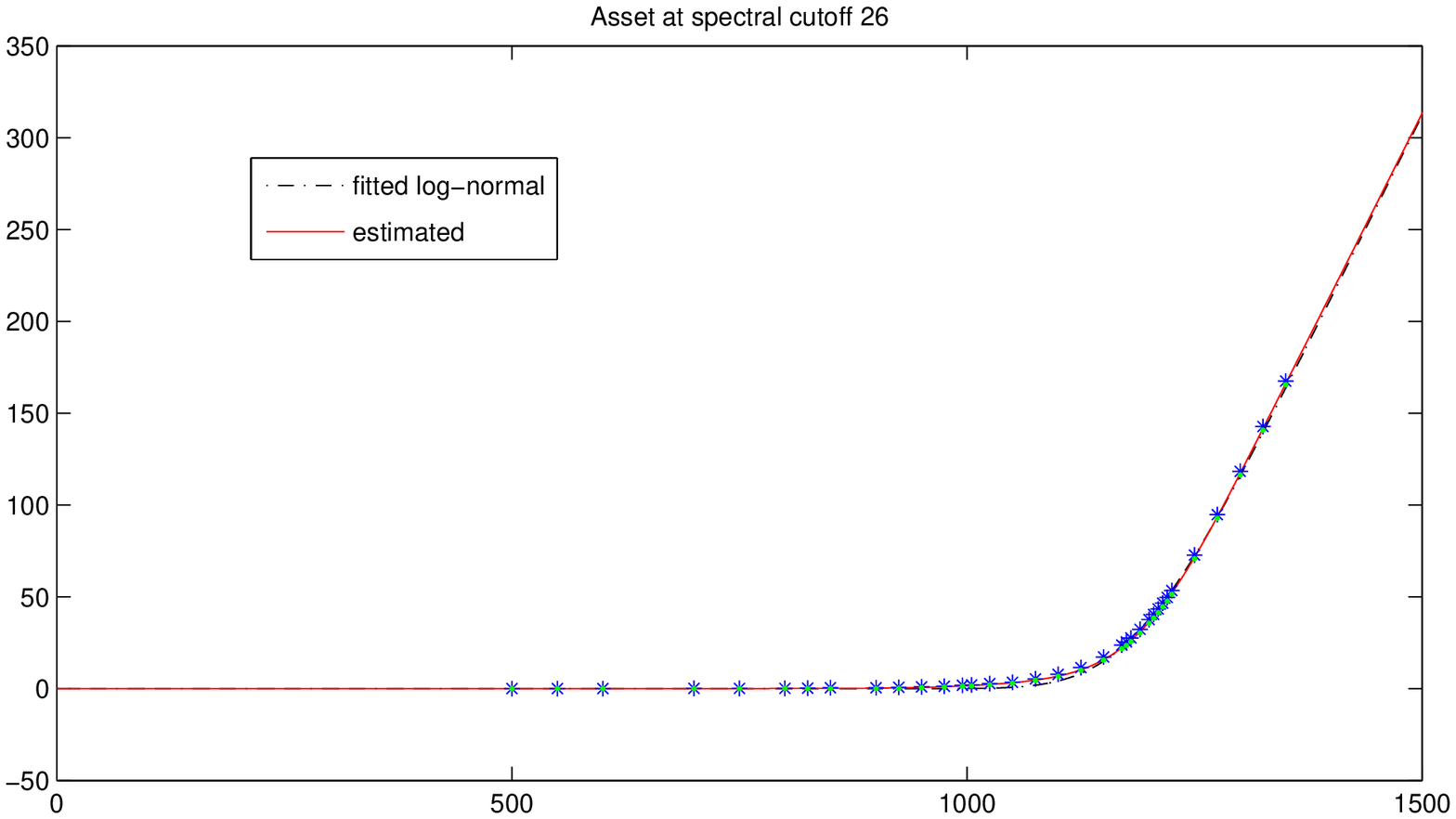}
\includegraphics[width=\textwidth, height =
0.4\textheight]{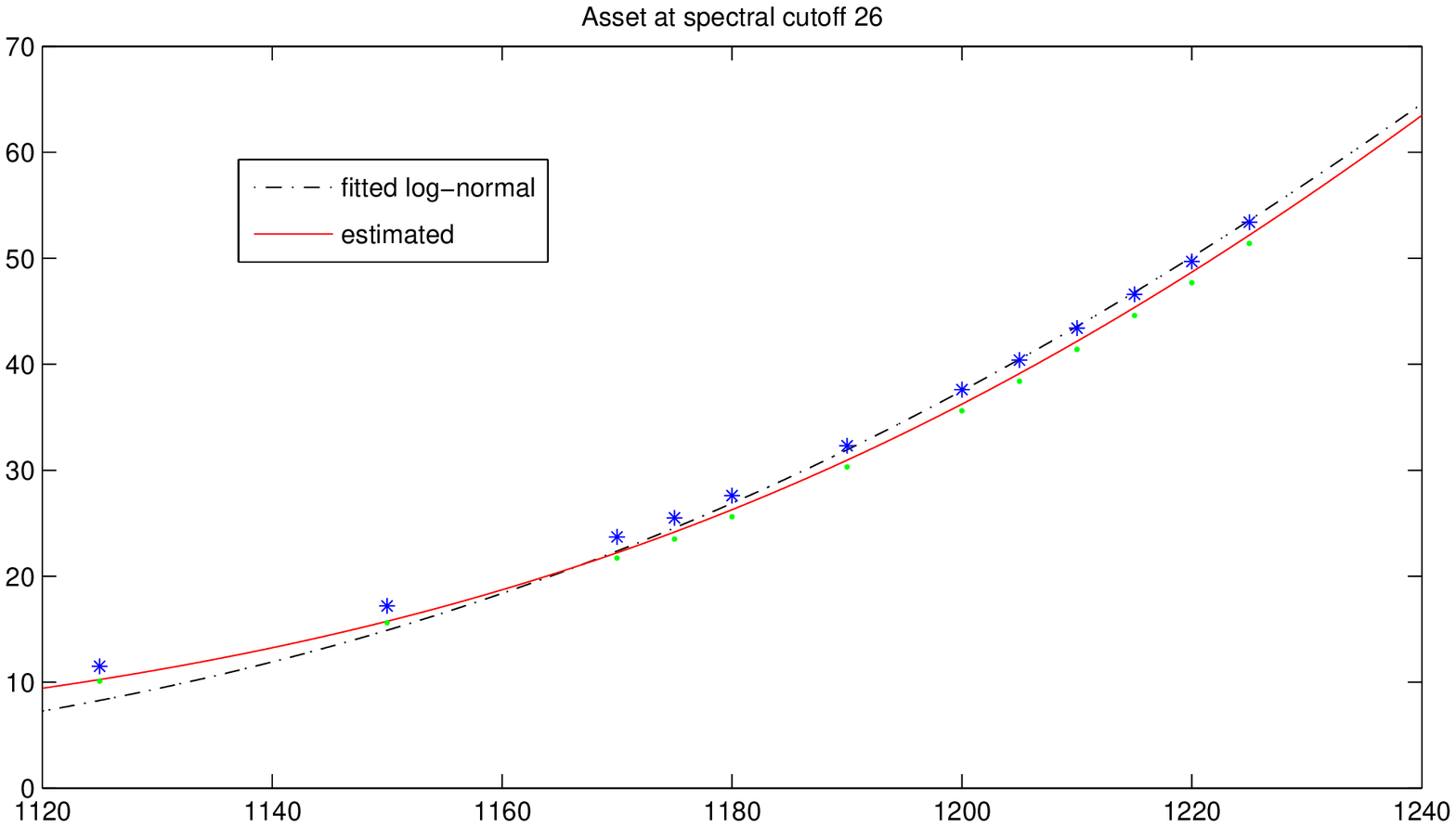}
\caption{Here we plot the fitted put prices obtained from the setting
  described above in \ref{figure:realrnd}. The solid line corresponds
  to the fitted prices $P^{\bigstar}_{26}$, while the dashed line
  corresponds to the fitted prices obtained from a log-normal
  distribution. The stars and dots correspond to market ask and bid
  quotes, respectively. At the top, we give a large view of the fits. At the
  bottom we zoom in to show that $P^{\bigstar}_{26}$ lies inside the
  market quotes, while the fitted log-normal prices lie outside.}\label{figure:realput}
\end{figure}
\begin{figure}[H]
\includegraphics[width=\textwidth, height =
0.4\textheight]{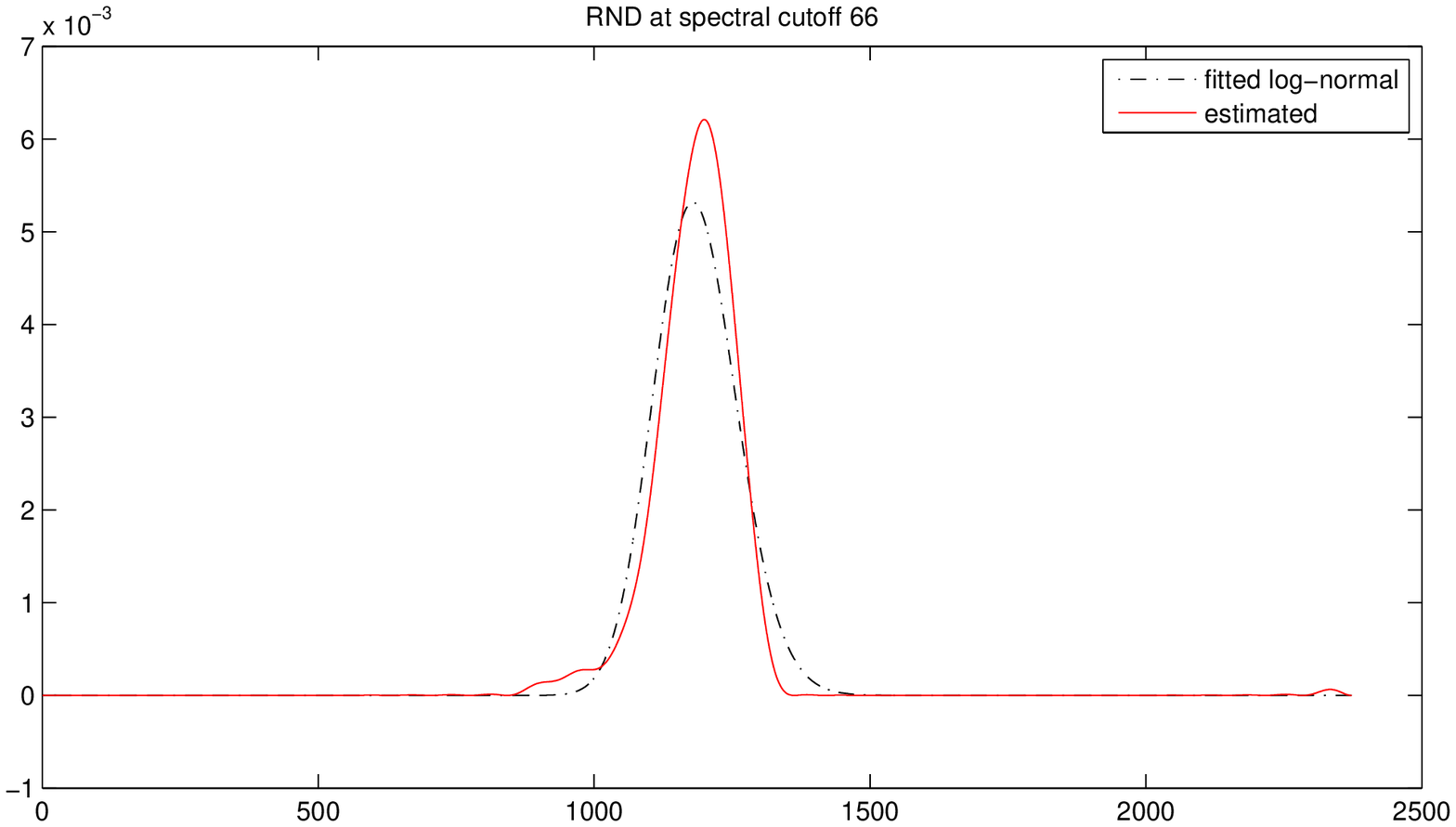}
\includegraphics[width=\textwidth, height =
0.4\textheight]{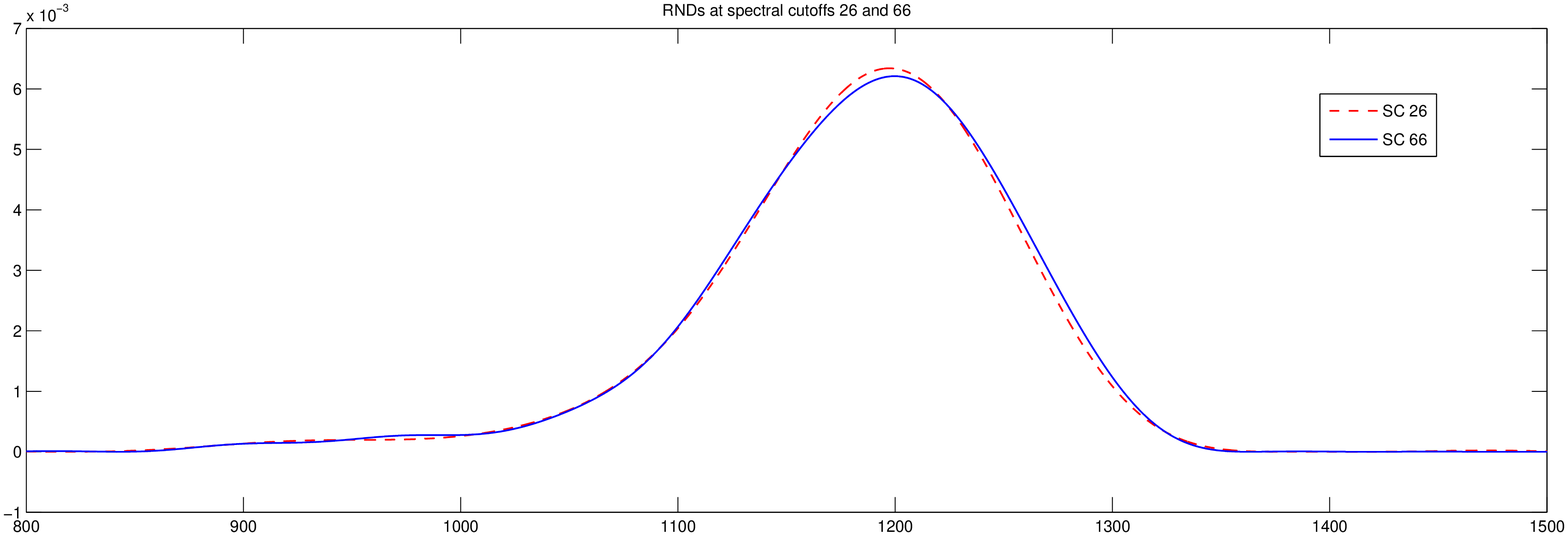}
\caption{Here we plot the RND $q^{\bigstar}_{66}$ (solid line) estimated from the real price quotes
  reported in \ref{table:realdata}. We choose $B = 2*F_0 =
  2*S_0*e^{(r-\delta)\tau} = 2372$ for that plot. In addition, we
  plot the best log-normal fit (in a least-square sense) to the average
  price quotes (dashed line). It is obtained for $\sigma_{opt} =
  0.143$. At the top, we display the full left tail
  of the RND $q^{\bigstar}_{66}$ and its full right tail up to $B$. At
  the bottom, we superimpose $q^{\bigstar}_{66}$ (solid line) with
  $q^{\bigstar}_{26}$ (dashed line) obtained in \ref{figure:realrnd} for an other
  choice of $B$. Notice the strong agreement between both densities,
  which highlights the stability of the SRM with respect to the choice
  of $B$. Interestingly, $q^{\bigstar}_{66}$ is slightly more bumpy
  than $q^{\bigstar}_{26}$ at the level of its left fat-tail. This
  reinforces our argument that smoothness goes hand in hand with low
  spectral cutoff.} \label{figure:realrnd2}
\end{figure}

\subsection{Simulated data}
As regards the simulated data, we work in the Black-Scholes
setting. In that context the price of a put option admits a closed form solution
and the RND is log-normal (see Proposition \ref{proposition:BS}). We model the
bid-ask spread as a random noise around the true price given by the
Black-Scholes formula. More precisely, for a given set of quoted
strikes $\st_1<\ldots < \st_s$ and corresponding put prices
$P(\st_1),\ldots, P(\st_s)$, we write $y_i^{Ask} = P(\st_i)+ z_i/2$ and
$y_i^{Bid} = P(\st_i) - z_i/2$, where $z_i = \max(1, \min(3,
\varpi \abs{\xi_i}))$, the $\xi_i$'s are iid standard normal random
variables and $\varpi = 0.1 \max_{1\leq i \leq s} P(\st_i)$. The bounds
$1$ and $3$ are chosen by analogy with the real data quotes in \ref{table:realdata}. Of course,
the bid-ask quotes we obtain in that way are not
arbitrage free. However, they contain the true put price $P(\st)$,
which, given the nature of the quadratic program described in \ref{eq:optimS1} above, is all
that matters to approximate the true RND. For the sake of simplicity,
we choose $r=0$, $\delta=0$, $\tau=1$, $S_0=100$, and $\sigma=0.3$ and
$B=2*F_0 = 2*S_0$. In addition we set a first strike price at
$\PE{F_0}$ and spread the others on its left and right sides at unit
length distance away from each other until we obtain $s$
strikes. More precisely, the second strike would be $\PE{F_0}-1$, the
third $\PE{F_0} +1$, the fourth $\PE{F_0}-2$ and so on and so
forth. We plot the results for the first two spectral cutoffs at which a
feasible point is found below in \ref{figure:LNrnd5} in the case where there are as little as $s=5$
bid ask quotes and in \ref{figure:LNrnd50} in the case where there are as
many as $s=50$ of them. In any case, we can see that we obtain a
smooth density that resembles the log-normal density generating the
initial quoted prices and that the estimate is stable from one
spectral cutoff to another. Of course, the more strikes we have, the
better the fit. Besides, we observe as expected from an other
simulation not reported here that, the smaller the bid-ask spread, the
better the fit. Notice
once again that the fitted right-tail reaches zero in $B$, while the
true one is strictly positive at that point. As before, this is due to
the fact that $\psi_k(B)=0$.
\begin{figure}[H]
\includegraphics[width=\textwidth, height =
0.4\textheight]{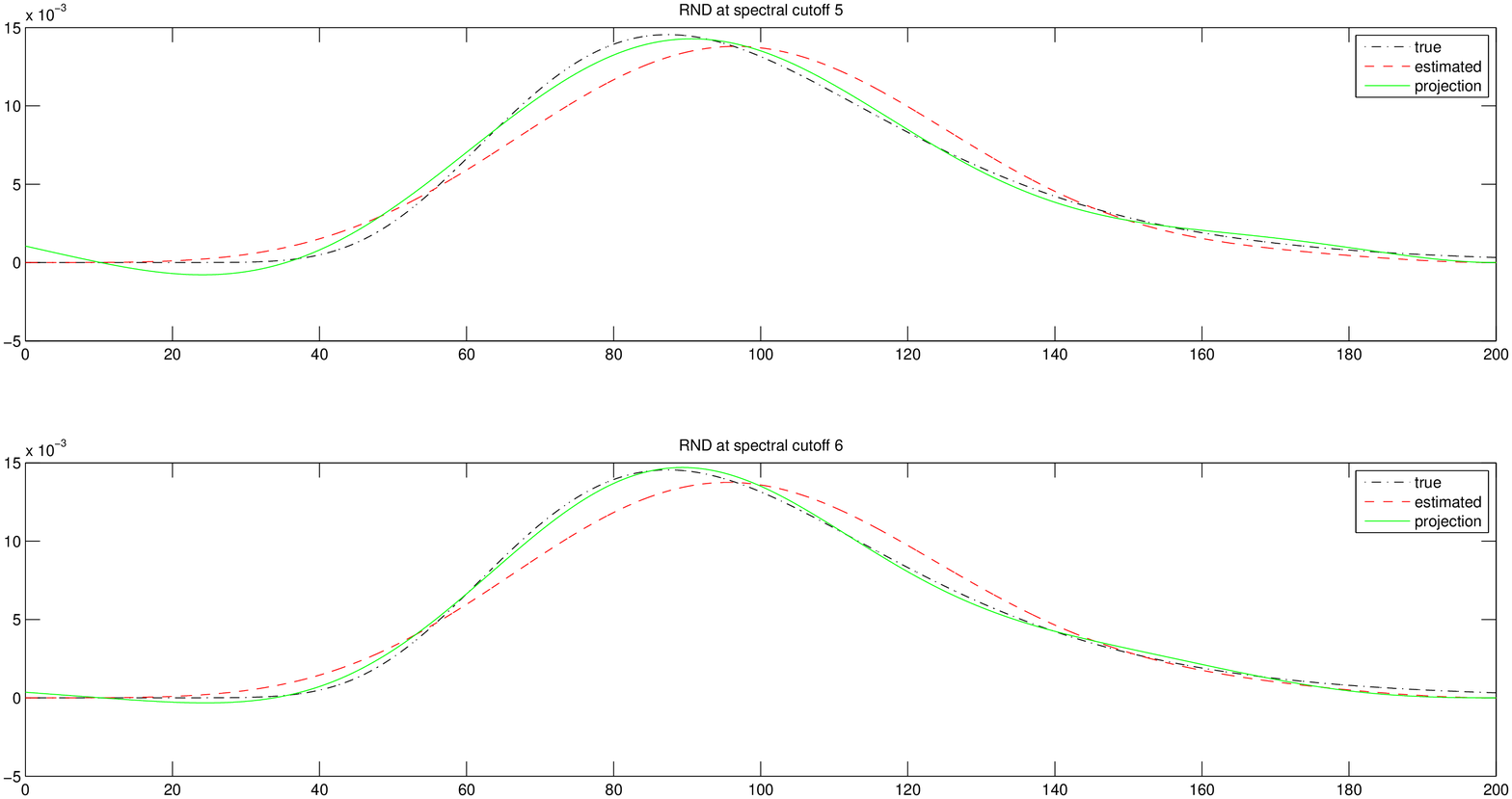}
\includegraphics[width=\textwidth, height =
0.4\textheight]{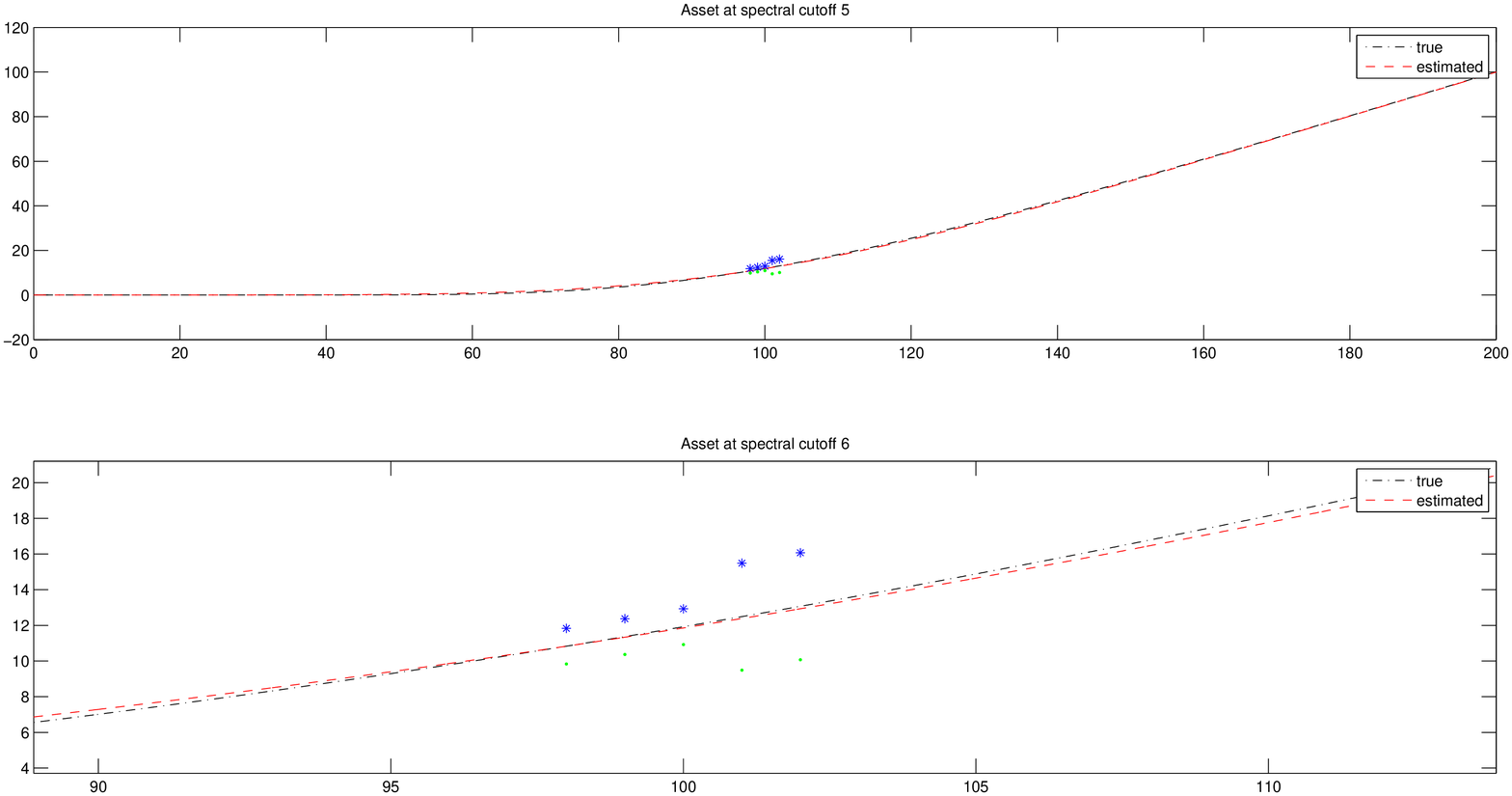}
\caption{Here we are in the case of $5$ simulated bid ask quotes and
  with $B = 2*F_0 = 200$. The
  first two plots display $q^{\bigstar}_5$ and $q^{\bigstar}_6$ (dashed
line), the true log-normal RND used to generate the prices
(dashed-dotted line) and the orthogonal projection of the true
log-normal RND on $\{\psi_0, \dots, \psi_N\}$ for $N=5$ and $N=6$ (solid
line), respectively. The last two plots display the fitted put
prices, that is $P_5^{\bigstar}$ and $P_{6}^{\bigstar}$ (dashed line)
together with the true prices (dashed-dotted line).}\label{figure:LNrnd5}
\end{figure}
\begin{figure}[H]
\includegraphics[width=\textwidth, height =
0.45\textheight]{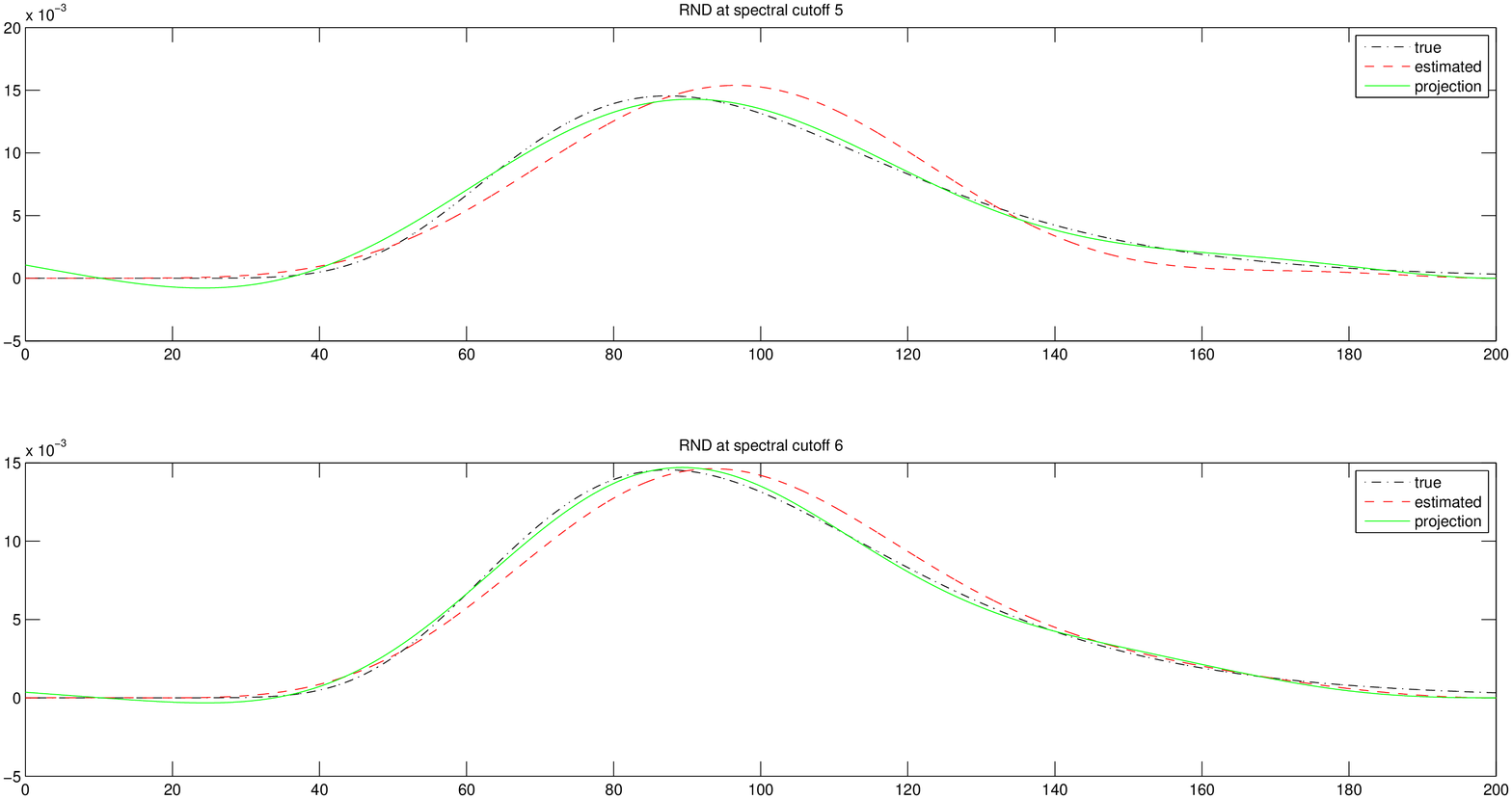}
\includegraphics[width=\textwidth, height =
0.45\textheight]{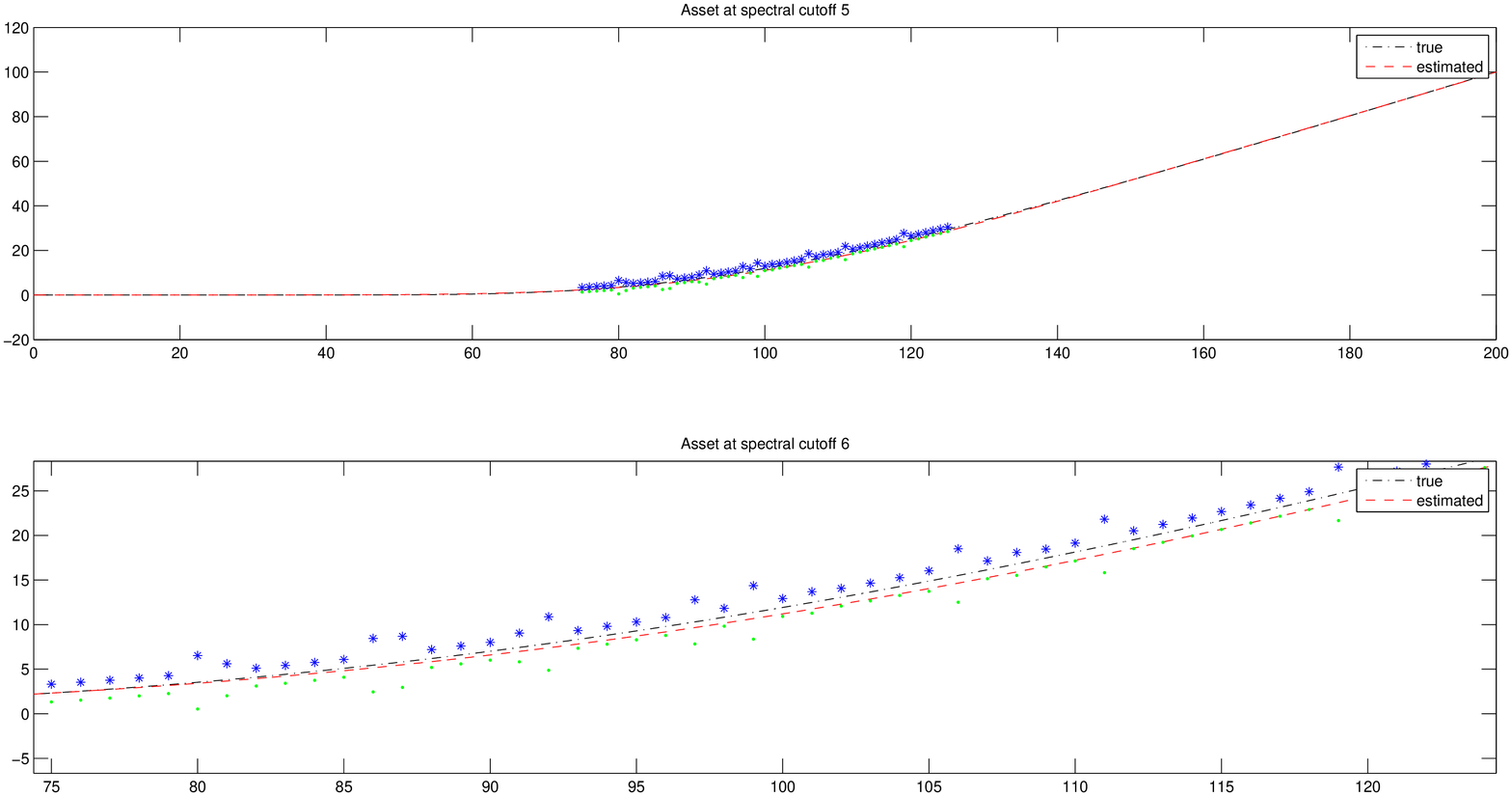}
\caption{Here, we repeat the same plots as in \ref{figure:LNrnd50} in
  the case of $50$ simulated bid-ask quotes.}\label{figure:LNrnd50}
\end{figure}

\section*{Appendix}

\subsection*{Refresher on the Black-Scholes model}
This is a well-known result of mathematical finance.
\begin{proposition}\label{proposition:BS}
Let us denote by $S_0$ the price today of a stock paying dividends
continuously over time at a
constant rate $\delta$ and by $r$ the continuously compounded
risk-free rate. The arbitrage price today of a put option on that
stock maturing at time $\tau$ is given by the following closed form formula,
\begin{align*}
P(\st) &= \st e^{-r\tau} \mathcal{N}(-d_2) - S_0e^{-\delta\tau}\mathcal{N}(-d_1),
\end{align*}
with
\begin{align*}
d_1 &= \frac{\ln(S_0/\st) + [(r-\delta) + \frac 12 \sigma^2]
  \tau}{\sigma\sqrt{\tau}},
& d_2 &= d_1 - \sigma \sqrt{\tau},
\end{align*}
where $\sigma$ stands for the volatility of the stock and
$\mathcal N$ for the standard normal cumulative distribution. In
addition, the RND is log-normal and writes as
\begin{align*}
q(x) &= \frac{1}{\sqrt{2\pi} \sigma \tau x}\exp\left( -
  \frac{[\ln(x/S_0) - (r - \delta) \tau +\frac 12 \sigma^2 \tau]^2}{2\sigma^2\tau}   \right).
\end{align*}
\end{proposition}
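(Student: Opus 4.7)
The plan is to proceed via the standard derivation from arbitrage pricing theory. The starting point is that, under the Black--Scholes assumptions (frictionless market, constant $r,\delta,\sigma$, geometric Brownian underlying), the market is complete and so, by the Second Fundamental Theorem, there exists a \emph{unique} equivalent martingale measure $\Q$ under which the discounted dividend-adjusted price process $e^{-(r-\delta)t}S_t$ is a martingale. Equivalently, $S$ satisfies
\begin{align*}
dS_t = (r-\delta)\,S_t\,dt + \sigma S_t\,dW^{\Q}_t,
\end{align*}
where $W^{\Q}$ is a standard Brownian motion under $\Q$.

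First, I would solve this SDE explicitly by applying It\^o's formula to $\ln S_t$, obtaining
\begin{align*}
S_\tau = S_0 \exp\Bigl( (r-\delta-\tfrac{1}{2}\sigma^2)\tau + \sigma W^{\Q}_\tau \Bigr).
\end{align*}
Since $W^{\Q}_\tau \sim \mathcal{N}(0,\tau)$ under $\Q$, the random variable $S_\tau$ is log-normally distributed. A direct change of variable from the Gaussian density of $W^{\Q}_\tau$ (with Jacobian $1/(\sigma\sqrt{\tau}\, x)$) then yields the explicit RND $q(x)$ in the exact form stated in the Proposition.

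Second, for the put price I would invoke the arbitrage pricing identity $P(\st) = e^{-r\tau}\Exp_{\Q}[(\st-S_\tau)^+]$ from the introduction and split it as
\begin{align*}
P(\st) = \st\, e^{-r\tau}\,\Q(S_\tau \le \st) - e^{-r\tau}\,\Exp_{\Q}\bigl[S_\tau \inds{S_\tau \le \st}\bigr].
\end{align*}
For the first term, the event $\{S_\tau \le \st\}$ rewrites, in terms of the standard normal $Z = W^{\Q}_\tau/\sqrt{\tau}$, as $\{Z \le -d_2\}$, which immediately produces $\st e^{-r\tau}\mathcal{N}(-d_2)$. For the second term, I would substitute $S_\tau = S_0 e^{(r-\delta-\sigma^2/2)\tau + \sigma\sqrt{\tau}Z}$, pull out the non-random factor $S_0 e^{(r-\delta)\tau}$, and absorb the remaining $\exp(\sigma\sqrt{\tau}Z - \sigma^2\tau/2)$ into the Gaussian density via a completion of the square. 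This is the classical Girsanov-style shift $Z \mapsto Z - \sigma\sqrt{\tau}$, after which the integration domain becomes $\{Z \le -d_1\}$ and the prefactor collapses to $S_0 e^{-\delta\tau}$, yielding the $S_0 e^{-\delta\tau}\mathcal{N}(-d_1)$ contribution.

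The only step that is not pure bookkeeping is the completion-of-the-square in the second integral and the careful tracking of the $-\tfrac{1}{2}\sigma^2\tau$ drift correction, which is exactly what creates the distinction between $d_1$ and $d_2$. Everything else reduces to routine manipulations of the Gaussian density and is a well-known classical computation.
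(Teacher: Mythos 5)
Your derivation is correct and is the standard textbook argument for the Black--Scholes put price and the log-normal risk-neutral density. The paper itself offers no proof: it simply cites Musiela and Rutkowski (p.\,117), which is where precisely this computation appears, so your proposal supplies the argument the paper delegates to the reference. One small point of precision: carried through carefully, your change-of-variable step gives the normalization $\frac{1}{\sqrt{2\pi}\,\sigma\sqrt{\tau}\,x}$, since the standard deviation of $\ln S_\tau$ under $\Q$ is $\sigma\sqrt{\tau}$; the factor $\sigma\tau$ appearing in the Proposition's displayed density is evidently a typographical slip, so your derivation actually produces the correct formula rather than ``the exact form stated.'' The remainder --- solving the SDE under $\Q$, splitting the put payoff into $\st\,\Q(S_\tau\le\st)$ and $\Exp_\Q[S_\tau\inds{S_\tau\le\st}]$, and completing the square (equivalently a measure change to the stock num\'eraire) to pass from $d_2$ to $d_1$ --- is exactly right and leaves no gaps.
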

\begin{proof}
These results can be found in see \mycite[p.117]{Musiela2008}, for example.
\end{proof}

\subsection*{Additional results relative to $\gamma$ and $\gamma^*$}
We now present three results relative to $\gamma$ and $\gamma^*$,
which are either used in the core of the paper or of interest in their
own right.

\begin{proposition}
The operators $\gamma$ and $\gamma^*$ admit
no eigenvectors.
\end{proposition}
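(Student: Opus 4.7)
The plan is to derive, from the assumed eigenvector equation, a second-order ODE together with two boundary conditions that together force the candidate eigenvector to vanish identically. This reduces the claim to elementary linear ODE theory.

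First I would dispose of the eigenvalue $\lambda=0$: by the proof of Proposition \ref{proposition:complete}, $\Ke(\gamma)=\Ke(\gamma^*)=\{0\}$, so any null eigenvector is the zero function. From now on assume $\lambda\neq 0$ and $f\in\Lp_2\I$ nonzero with $\gamma f=\lambda f$. Since $\gamma f \in \Rg(\gamma)\subset \Cs^2\I$ by Theorem \ref{theorem:SVD}, item \ref{item:ranged2}), $f=\lambda^{-1}\gamma f$ is automatically $\Cs^2\I$, so the formal manipulations below are justified.

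Write the eigenvalue equation as
\begin{align*}
\lambda f(\st)=\int_\st^{B}(x-\st)f(x)\,dx,\qquad \st\in\I.
\end{align*}
Evaluating at $\st=B$ gives $f(B)=0$. Differentiating once (the boundary term $(x-\st)f(x)|_{x=\st}$ vanishes) yields $\lambda f'(\st)=-\int_\st^{B}f(x)\,dx$, and evaluation at $\st=B$ gives $f'(B)=0$. Differentiating again and using \ref{eq:d2} of Theorem \ref{theorem:SVD} (or a direct computation) gives $\lambda f''(\st)=f(\st)$. I would then solve this constant-coefficient ODE on $\I$ in the two cases $\lambda>0$ and $\lambda<0$: setting $\mu=\lambda^{-1/2}$ (resp.\ $\mu=(-\lambda)^{-1/2}$), the general solution is $f(\st)=c_1 e^{\mu\st}+c_2 e^{-\mu\st}$ (resp.\ $c_1\cos(\mu\st)+c_2\sin(\mu\st)$). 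Imposing $f(B)=f'(B)=0$ produces a $2\times 2$ linear system for $(c_1,c_2)$ whose matrix has nonzero determinant (in the hyperbolic case the determinant is $-2\mu$; in the trigonometric case it is $-\mu(\cos^2(\mu B)+\sin^2(\mu B))=-\mu$). Hence $c_1=c_2=0$, so $f\equiv 0$, contradicting the choice of $f$.

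The argument for $\gamma^*$ is entirely analogous: from $\lambda f(\st)=\int_0^\st(\st-x)f(x)\,dx$ one obtains $f(0)=0$, $f'(0)=0$, and the same ODE $\lambda f''=f$, and the same linear algebra at the endpoint $0$ forces $f\equiv 0$. There is no real obstacle here; the only thing to be careful about is justifying that $f\in\Cs^2\I$ before differentiating, which is handled by the regularity statement in Theorem \ref{theorem:SVD}, item \ref{item:ranged2}), and noting that the two boundary conditions arise for free by evaluating the integral representation and its first derivative at the appropriate endpoint.
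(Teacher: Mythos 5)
Your proof is correct, but it takes a genuinely different route from the paper. The paper's proof uses the reflection symmetry $\breve f(\st)=f(B-\st)$ (Lemma \ref{lemma:breve}) to convert an eigenvector of $\gamma$ into an eigenvector of $\gamma^*$, and then disposes of $\gamma^*$ in one stroke by observing that $\lambda f(\st)=\int_0^\st(\st-x)f(x)\,dx$ is a homogeneous Volterra integral equation of the second kind, invoking the standard theorem (cited from Debnath--Mikusi\'nski) that such equations have only the trivial solution. You instead give a fully self-contained elementary argument for each operator separately: regularity from item \ref{item:ranged2}) of Theorem~\ref{theorem:SVD} licenses two differentiations, the integral representation and its derivative evaluated at the relevant endpoint deliver the two Cauchy data $f(B)=f'(B)=0$ (resp.\ $f(0)=f'(0)=0$), and the second derivative yields the constant-coefficient ODE $\lambda f''=f$, whose only solution with vanishing Cauchy data is $f\equiv 0$. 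What your approach buys is that it replaces the black-box Volterra theorem with a transparent ODE uniqueness argument and does not need the reflection lemma; it also explicitly handles the $\lambda=0$ case (which the paper glosses over), correctly deferring to $\Ke(\gamma)=\Ke(\gamma^*)=\{0\}$ established en route to Proposition~\ref{proposition:complete}. What the paper's approach buys is brevity (one Volterra citation settles $\gamma^*$, and the reflection trick transfers the conclusion to $\gamma$ for free). Two minor remarks: your determinant in the trigonometric case works out to $+\mu$ rather than $-\mu$, an inconsequential sign slip; and if one wished to cover complex eigenvalues (not needed here since $\Lp_2\I$ is treated as real), the same $2\times 2$ determinant computation goes through verbatim with $r=\lambda^{-1/2}\in\C\setminus\{0\}$, so the case split on the sign of $\lambda$ is avoidable.
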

\begin{proof}
Suppose $f$ is an eigenvector of $\gamma$ associated to eigenvalue
$\lambda$, then denote by 
\begin{align}\label{eq:breve}
\breve f(t) = f(B-\st),
\end{align}
and notice that for
all $\st\in \I$, a direct application of Lemma \ref{lemma:breve} allows to write
\begin{align*}
\lambda \breve f (B-\st) = \lambda f(\st) &= \gamma f(\st) = \gamma^*\breve f(B-\st).
\end{align*}
Thus $\breve f$ must be an eigenvector of $\gamma^*$. However, it is
well known that $\gamma^*$ admits no eigenvalue since, for any
$\lambda\neq 0$,
\begin{align*}
\lambda f(\st)&=  \gamma^* f(\st) = \int_0^{\st} \theta^*(\st,x)f(x)dx, &\st\in\I,
\end{align*}
defines a \textsl{homogeneous Volterra equation} in $f$, whose unique trivial solution is
$f=0$ (see \mycite[p.239, Th.~5.5.2]{Debnath1990}).
\end{proof}

Finally, let us point out the two following useful lemmas.
\begin{lemma}\label{lemma:partial4}
Let us denote by $\partial_{\st}^k$ the $k^{th}$ order partial
differential operator with
respect to $\st$. Then, for any $f\in\Lp_2\I$, we have the following results.
\begin{align*}
f &= \partial_{\st}^2\gamma f,  &f &= \partial_{\st}^2 \gamma^* f,\\
f &= \partial_{\st}^4\gamma^*\gamma f,& f
&= \partial_{\st}^4\gamma\gamma^* f.
\end{align*}
\end{lemma}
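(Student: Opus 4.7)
The plan is to directly compute the derivatives by using the explicit form of the kernels, exploiting the fact that $\theta(\xi,x)=(x-\xi)^+$ and $\theta^*(\xi,x)=(\xi-x)^+$ each vanish on half of the interval, which turns the integrals into ones over a $\xi$-dependent subinterval and makes the classical Leibniz rule available.

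First I would rewrite $\gamma^* f(\xi)=\int_0^{\xi}(\xi-x)f(x)\,dx$ since the integrand vanishes for $x>\xi$. Differentiating once in $\xi$, the boundary contribution $(\xi-\xi)f(\xi)$ is zero, leaving $\partial_{\xi}\gamma^* f(\xi)=\int_0^{\xi}f(x)\,dx$. A second differentiation, understood in the Lebesgue-almost-everywhere sense via the Lebesgue differentiation theorem (or equivalently in the sense of weak derivatives in $L_2(I)$), yields $\partial_{\xi}^2\gamma^* f(\xi)=f(\xi)$. The symmetric computation for $\gamma f(\xi)=\int_{\xi}^{B}(x-\xi)f(x)\,dx$ gives first $\partial_{\xi}\gamma f(\xi)=-\int_{\xi}^{B}f(x)\,dx$ (the boundary term at $x=\xi$ again being zero), and then $\partial_{\xi}^2\gamma f(\xi)=f(\xi)$. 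This establishes the two second-order identities.

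For the fourth-order identities, I would simply compose the previous two. Theorem \ref{theorem:SVDcompact}, item \ref{item:ranged4}), tells us that $\gamma f$ and $\gamma^* f$ belong to $L_2(I)$ (in fact to $\mathcal{C}^2(I)$), so the second-order identity applies to them: $\partial_{\xi}^4\gamma^*\gamma f=\partial_{\xi}^2\bigl(\partial_{\xi}^2\gamma^*(\gamma f)\bigr)=\partial_{\xi}^2(\gamma f)=f$, and analogously $\partial_{\xi}^4\gamma\gamma^* f=\partial_{\xi}^2(\gamma^* f)=f$.

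The only real delicate point is that for a generic $f\in L_2(I)$ the pointwise second derivative need not exist; the identities must therefore be read either a.e. (justified by Lebesgue's differentiation theorem applied to the locally integrable function $f$) or as identities between distributions on $I$. Either interpretation is compatible with the usage of $\partial_{\xi}^2$ and $\partial_{\xi}^4$ elsewhere in the paper, so no further subtlety is needed.
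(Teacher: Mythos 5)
Your proof is correct and follows essentially the same route as the paper: reduce $\gamma^* f$ and $\gamma f$ to integrals over $\xi$-dependent subintervals, differentiate twice (using Leibniz and the Lebesgue differentiation theorem), and obtain the fourth-order identities by composition. The remark on the almost-everywhere/distributional interpretation is a reasonable clarification that the paper leaves implicit, but it does not change the argument.
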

\begin{proof}
Notice indeed that
\begin{align*}
\partial_{\st} \gamma f(\st) &= \partial_{\st} \int_{\st}^B (x-\st)f(x)dx =- \int_{\st}^B
f(x)dx, \\
\partial_{\st} \gamma^* f(\st) &= \partial_{\st} \int_0^{\st} (\st-x)f(x)dx =\int_0^{\st}
f(x)dx. 
\end{align*}
Therefore, we obtain immediately
\begin{align*}
f = \partial_{\st}^2 \gamma f = \partial_{\st}^2 \gamma^*f.
\end{align*}
The remaining of the proof follows directly from these first
results. Notice indeed that,
\begin{align*}
\partial_{\st}^4\gamma^*\gamma f
= \partial_{\st}^2 [\partial_{\st}^2\gamma^*](\gamma f)
= \partial_{\st}^2\gamma f = f.
\end{align*}
which concludes the proof.
\end{proof}

\begin{lemma}\label{lemma:breve}
For any $f \in \Lp_2\I$ and $\st\in \I$, we have $\gamma f(\st) = \gamma^*
\breve f(B-\st)$ (see \ref{eq:breve} for notations).
\end{lemma}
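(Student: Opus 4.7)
My plan is to establish the identity by a direct change of variable, unwinding the definitions of both sides from equations \ref{eq:rescallop} and \ref{eq:resputop}.

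First I would write out the right-hand side explicitly. By definition of $\gamma^*$, we have
\begin{align*}
\gamma^* \breve f(B-\st) = \int_\I \theta^*(B-\st, x) \breve f(x)\, dx = \int_0^{B-\st} \bigl((B-\st) - x\bigr) f(B - x)\, dx,
\end{align*}
using the fact that $(B-\st-x)^+$ is supported on $x \in [0, B-\st]$ and that $\breve f(x) = f(B-x)$.

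Next I would apply the substitution $u = B - x$, so that $du = -dx$, with the bounds $x=0$ mapping to $u = B$ and $x = B-\st$ mapping to $u = \st$. This transforms the integrand $(B-\st)-x$ into $(B-\st) - (B-u) = u - \st$, yielding
\begin{align*}
\gamma^* \breve f(B-\st) = \int_\st^B (u - \st) f(u)\, du.
\end{align*}

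Finally I would recognize the right-hand side as exactly $\gamma f(\st)$. Indeed, from \ref{eq:rescallop},
\begin{align*}
\gamma f(\st) = \int_\I \theta(\st, x) f(x)\, dx = \int_\st^B (x - \st) f(x)\, dx,
\end{align*}
since $(x-\st)^+$ is supported on $x \in [\st, B]$. There is no real obstacle here; the lemma is essentially a symmetry statement reflecting that the put payoff at strike $B-\st$ on a reflected density equals the call payoff at strike $\st$ on the original density, and the one-line computation above makes this precise.
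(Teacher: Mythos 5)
Your proof is correct and uses the same idea as the paper, namely the change of variables $u = B - x$; the only difference is that you start from the right-hand side $\gamma^*\breve f(B-\st)$ and transform it into $\gamma f(\st)$, whereas the paper runs the substitution in the opposite direction.
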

\begin{proof}
Perform the change of variable $u = B-x$ to obtain
\begin{align*}
\gamma f(\st) &= \int_{\st}^B (x-\st)f(x)dx\\
&= \int_0^{B-\st}([B-\st] - u) \breve f(u)du = \gamma^*\breve f(B-\st).
\end{align*}
\end{proof}

\subsection*{Relation between the $(\phi_k)$s and the $(\psi_k)$s}

We believe that $m_0(q)$ and $m_1(q)$ could be readily
estimated from the data, so that \ref{eq:calltocoeffs} could be used
to construct a second estimator of the RND based on the restricted
call operator. This second estimator could eventually be combined with
the one obtained from the SRM above. To that
end, and for the sake of completeness, we compute the scalar products between elements
of the two singular bases. Results are reported in the following
proposition. 
\begin{proposition}\label{proposition:crossscalar}
Let us write
\begin{align*}
\mathfrak{p}_{k,m}(x,y) &= (-x^3 + x^2y)(-1)^{m+k} - xy^2 + y^3,\\
\mathfrak{q}_{k,m}(x,y) &= (x^3+x^2y)(-1)^k + (y^3 + y^2x)(-1)^m.
\end{align*}
Then, we have the following relationships,
\small{
\begin{align*}
&\ms{\phi_k}{\psi_m}\\
&= 4 \frac{\mathfrak{p}_{k,m}(\rho_k,
  \rho_m)e^{-\rho_k-\rho_m} - \mathfrak{q}_{k,m}(\rho_k, \rho_m)e^{-\rho_k} +
  \mathfrak{q}_{k,m}(\rho_m,\rho_k)e^{-\rho_m} +
  \mathfrak{p}_{k,m}(\rho_m,\rho_k)}{(\rho_k^4 - \rho_m^4)(1 + (-1)^m e^{-\rho_m})(1 + (-1)^k
e^{-\rho_k})}, k\neq m,
\end{align*}}
\begin{align*}
\ms{\phi_k}{\psi_k}&= \frac{-e^{-2\rho_k}(\rho_k+2) + 2\rho_k(-1)^ke^{-\rho_k}
  - \rho_k +2}{(e^{-\rho_k} + (-1)^k)^2 \rho_k}.
\end{align*}
On the way, we obtain,
\small{\begin{align*}
&\ms{h_{k,1}}{h_{m,1}}\\
&= ((-1)^k+ (-1)^m) \frac{(\rho_k+\rho_m)(
  e^{-\rho_m} - e^{-\rho_k}) + (-1)^k (\rho_k-\rho_m) (1 - e^{-(\rho_k
    + \rho_m)})}{(\rho_k^2-\rho_m^2) (1 + (-1)^ke^{-\rho_k})(1+
  (-1)^me^{-\rho_m})}, k\neq m,\\
&\ms{h_{k,1}}{h_{m,2}}\\
&= ((-1)^k -
(-1)^m) \frac{(\rho_k +\rho_m)(e^{-\rho_m} + e^{- \rho_k}) - (-1)^k (\rho_k - \rho_m)(1
  + e^{-(\rho_k+ \rho_m)}) }{(\rho_k^2+\rho_m^2)(1 + (-1)^m e^{-\rho_m})(1 + (-1)^k
e^{-\rho_k})}, k\neq m,
\end{align*}}
\begin{align*}
\ms{h_{k,1}}{h_{k,1}} &= \frac{ 1 - e^{-2\rho_k} +
  2(-1)^k\rho_ke^{-\rho_k}}{\rho_k((-1)^k + e^{-\rho_k} )^2},\\
\ms{h_{k,1}}{h_{k,2}}&= 0,\\
\ms{h_{k,2}}{h_{m,2}}&= \delta_{k,m} - \ms{h_{k,1}}{h_{m,1}}.
\end{align*}
\end{proposition}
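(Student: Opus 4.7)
The plan is to exploit the splitting $\phi_k=h_{k,1}+h_{k,2}$ and $\psi_k=h_{k,1}-h_{k,2}$, combined with the already-established orthonormality of $(\phi_k)$ and $(\psi_k)$, to cut the work down to two atomic families of scalar products, and then to simplify using the fixed-point identities for $\cos\rho_k$ and $\sin\rho_k$ furnished by Proposition \ref{proposition:defnuk}.

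First I unpack the norm and orthogonality conditions on $\phi_k$ and $\psi_m$. From $\|\phi_k\|_{\Lp_2\I}^2=\|\psi_k\|_{\Lp_2\I}^2=1$, subtracting yields $\ms{h_{k,1}}{h_{k,2}}=0$ and adding yields $\|h_{k,1}\|_{\Lp_2\I}^2+\|h_{k,2}\|_{\Lp_2\I}^2=1$. For $k\neq m$, the orthogonality $\ms{\phi_k}{\phi_m}=\ms{\psi_k}{\psi_m}=0$ produces
\begin{align*}
\ms{h_{k,2}}{h_{m,2}} &= -\ms{h_{k,1}}{h_{m,1}}, & \ms{h_{k,2}}{h_{m,1}} &= -\ms{h_{k,1}}{h_{m,2}}.
\end{align*}
Both cases combine into the stated relation $\ms{h_{k,2}}{h_{m,2}}=\delta_{k,m}-\ms{h_{k,1}}{h_{m,1}}$, and the cross-basis products reduce to
\begin{align*}
\ms{\phi_k}{\psi_m} &= 2\bigl(\ms{h_{k,1}}{h_{m,1}}-\ms{h_{k,1}}{h_{m,2}}\bigr), & k&\neq m,\\
\ms{\phi_k}{\psi_k} &= 2\,\ms{h_{k,1}}{h_{k,1}}-1.
\end{align*}
So it is enough to obtain closed forms for $\ms{h_{k,1}}{h_{m,1}}$ and $\ms{h_{k,1}}{h_{m,2}}$.

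Next, I would compute these two quantities by direct integration over $[0,B]$. Expanding $h_{k,1}=a_{k,1}f_{k,1}+a_{k,2}f_{k,2}$ and likewise for $h_{m,1}$, each of the four elementary integrals is of the form $\int_0^B e^{(\alpha\rho_k+\beta\rho_m)\xi/B}d\xi=B(e^{\alpha\rho_k+\beta\rho_m}-1)/(\alpha\rho_k+\beta\rho_m)$ with $\alpha,\beta\in\{\pm1\}$. Factoring out $((-1)^k+e^{-\rho_k})^{-1}((-1)^m+e^{-\rho_m})^{-1}$ from $a_{k,i}a_{m,j}$ and reducing to the common denominator $\rho_k^2-\rho_m^2$ gives the stated formula for $\ms{h_{k,1}}{h_{m,1}}$. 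For $\ms{h_{k,1}}{h_{m,2}}$ the procedure is identical but with the four integrals $\int_0^B e^{\pm\rho_k\xi/B}\cos(\rho_m\xi/B)d\xi$ and $\int_0^B e^{\pm\rho_k\xi/B}\sin(\rho_m\xi/B)d\xi$, each closed-form expression with denominator $\rho_k^2+\rho_m^2$.

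The crucial simplification is then to substitute the identities
\begin{align*}
\cos\rho_k &= -\frac{2}{e^{\rho_k}+e^{-\rho_k}}, & \sin\rho_k &= (-1)^k\,\frac{1-e^{-2\rho_k}}{1+e^{-2\rho_k}},
\end{align*}
which are precisely the characterization of $\rho_k=\nu_{2k}$ from Proposition \ref{proposition:defnuk} (since $E(2k)=k$). These identities eliminate every residual $\cos\rho_m$, $\sin\rho_m$, $e^{\pm\rho_k}\cos\rho_m$, etc., in favor of pure exponentials $e^{-\rho_k},e^{-\rho_m}$ and cause the cross-terms to collapse into the symmetric polynomial factors $\mathfrak{p}_{k,m}$ and $\mathfrak{q}_{k,m}$ stated in the proposition. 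The degenerate case $k=m$ for $\ms{h_{k,1}}{h_{k,1}}$ is handled by the same substitution, after which $\ms{\phi_k}{\psi_k}=2\ms{h_{k,1}}{h_{k,1}}-1$ yields the claimed fraction by one line of algebra. The main obstacle is not conceptual but combinatorial: the parities $(-1)^k,(-1)^m$ interlock with the signs coming from $\sin\rho_k$ and $\sin\rho_m$, and one must track carefully that all four atomic integrals cancel down to the compact displayed forms.
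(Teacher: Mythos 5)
Your proposal is correct and follows essentially the same route as the paper: decompose via $\phi_k=h_{k,1}+h_{k,2}$, $\psi_k=h_{k,1}-h_{k,2}$, exploit orthonormality of $(\phi_k)$ and $(\psi_k)$ to reduce everything to $\ms{h_{k,1}}{h_{m,1}}$ and $\ms{h_{k,1}}{h_{m,2}}$, then compute those by elementary integration and simplify with the identities for $\cos\rho_k$ and $\sin\rho_k$. The only cosmetic difference is that you split into $k=m$ and $k\neq m$ cases explicitly, whereas the paper carries the $\delta_{k,m}$ term through and writes $\ms{\phi_k}{\psi_m}=2(\ms{h_{k,1}}{h_{m,1}}-\ms{h_{k,1}}{h_{m,2}})-\delta_{k,m}$ in one line; the paper then declares the remaining integrals ``lengthy and tedious but straightforward'' and omits them, so your added detail on the four elementary exponential/trigonometric integrals and on using $\rho_k=\nu_{2k}$ with $E(2k)=k$ actually fills in more than the paper shows.
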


\begin{proof}
Recall that, for all $k,m$, we have defined
\begin{align*}
h_{k,1} &=  a_{k,1}f_{k,1} + a_{k,2}f_{k,2}, &h_{k,2} &=  a_{k,3}f_{k,3} + a_{k,4}f_{k,4},\\
\phi_k &= h_{k,1} + h_{k,2}, & \psi_k &= h_{k,1} -h_{k,2}.
\end{align*}
Besides, we have that
\begin{align*}
\ms{\phi_k}{\phi_{m}} &= \delta_{k,m} = \ms{h_{k,1}}{h_{m,1}} +
\ms{h_{k,2}}{h_{m,2}} + \ms{h_{k,1}}{h_{m,2}} +
\ms{h_{k,2}}{h_{m,1}},\\
\ms{\psi_k}{\psi_{m}} &= \delta_{k,m} = \ms{h_{k,1}}{h_{m,1}} +
\ms{h_{k,2}}{h_{m,2}} - \ms{h_{k,1}}{h_{m,2}} - \ms{h_{k,2}}{h_{m,1}}.
\end{align*}
Therefore, we obtain the following relationships,
\begin{align*}
\delta_{k,m} &= \ms{h_{k,1}}{h_{m,1}} +
\ms{h_{k,2}}{h_{m,2}},\\
0 &= \ms{h_{k,1}}{h_{m,2}} + \ms{h_{k,2}}{h_{m,1}}.
\end{align*}
Which leads to
\begin{align*}
\ms{\phi_k}{\psi_m} &= \ms{h_{k,1}}{h_{m,1}} -
\ms{h_{k,2}}{h_{m,2}} - \ms{h_{k,1}}{h_{m,2}} +
\ms{h_{k,2}}{h_{m,1}},\\
&= 2 (\ms{h_{k,1}}{h_{m,1}} - \ms{h_{k,1}}{h_{m,2}})  - \delta_{k,m}.
\end{align*}
Now, it remains to compute $\ms{h_{k,1}}{h_{m,1}}$ and
$\ms{h_{k,1}}{h_{m,2}}$. The results follow from lengthy and tedious but straightforward
computations and are therefore not reported here. 
\end{proof}

\subsection*{From the RND $q$ of $S_{\tau}$ to the density of $\ln
  S_{\tau}$}
Some authors have chosen to focus on the estimation of the density of
$\log S_\tau$ rather than on the density of $S_{\tau}$ itself. Both
densities relate by a simple transformation, as described in the
following proposition. In our case, this transformation can be readily
applied since the SRM returns an analytic expression for the
estimated RND.
\begin{proposition}
If $X$ admits $f(x)$ for density on $\R$, then $Y = \exp(X)$ admits $\tfrac 1y
f(\ln y)$ for density on $\R^+$. Conversely, if $Y$ admits $f(y)$ for
density on $\R^+$, then $X = \ln(Y)$ admits $e^x f(e^x)$ for density
on $\R$. 
\end{proposition}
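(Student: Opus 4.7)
The plan is to invoke the standard one-dimensional change of variables formula, which applies here because $x \mapsto e^x$ is a $\Cs^1$-diffeomorphism from $\R$ onto $\R^+$ with strictly positive derivative. I would proceed by computing cumulative distribution functions and then differentiating, as this avoids having to invoke a general change-of-variables theorem and keeps the argument self-contained.

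For the first direction, assume $X$ has density $f$ on $\R$ and set $Y = \exp(X)$. For any $y > 0$, the monotonicity of the exponential yields
\begin{align*}
F_Y(y) = \Pb(Y \leq y) = \Pb(X \leq \ln y) = F_X(\ln y) = \int_{-\infty}^{\ln y} f(u)\,du.
\end{align*}
Since $f$ is locally integrable, $F_Y$ is differentiable almost everywhere on $\R^+$ and, by the chain rule together with the fundamental theorem of calculus, its derivative — which is the density of $Y$ with respect to the Lebesgue measure on $\R^+$ — equals $f(\ln y)\cdot \frac{1}{y}$, as claimed.

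For the converse direction, assume $Y$ has density $f$ on $\R^+$ and set $X = \ln Y$. For any $x \in \R$, we get
\begin{align*}
F_X(x) = \Pb(X \leq x) = \Pb(Y \leq e^x) = F_Y(e^x) = \int_{0}^{e^x} f(u)\,du,
\end{align*}
and differentiating yields $f_X(x) = f(e^x)\,e^x$. The only subtlety worth a line of justification is that $Y > 0$ almost surely (since its density is supported on $\R^+$), so $X = \ln Y$ is almost surely well defined on $\R$. There is no real obstacle in this proof; it is a textbook computation, and the only thing to be careful about is writing the Jacobian factor on the correct side of the transformation in each of the two directions.
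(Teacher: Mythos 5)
Your proof is correct. The paper actually states this proposition without any proof (it is the final, standalone proposition in the Appendix, offered as a standard change-of-variables fact), so there is no ``paper proof'' to compare against; your CDF-and-differentiate argument is the textbook route and cleanly justifies both directions, including the correct placement of the Jacobian factor.
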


\section*{Acknowledgments}
The author is deeply grateful to Peter Tankov for his careful
reading of this manuscript and for his constructive and insightful
comments, which greatly contributed to improve its clarity and
content. The author is of course solely responsible for any eventual
remaining error. Finally, the author would like to acknowledge
interesting conversations with G\'erard Kerkyacharian and Dominique Picard.

%\begin{figure}[htb]
%\fbox{\includegraphics[width=\textwidth, height =
%  0.5\textheight]{noisyrecovery}}
%\end{figure}

% BibTeX users please use one of
%\bibliographystyle{spbasic}      % basic style, author-year citations
%\bibliographystyle{spmpscinat}      % mathematics and physical sciences
\bibliographystyle{plain}
%\bibliographystyle{spphys}       % APS-like style for physics

%\bibliography{../../../biblio/trunk/biblio}

\end{document}